\documentclass[a4paper,twocolumn,10pt,accepted=2023-08-02]{quantumarticle}
\pdfoutput=1
\usepackage[utf8]{inputenc}
\usepackage[english]{babel}
\usepackage[T1]{fontenc}
\usepackage{amsmath}
\usepackage{hyperref}

\usepackage{tikz}
\usepackage{lipsum}

\usepackage{graphicx}
\usepackage{indentfirst}
\usepackage{physics}
\usepackage{braket}
\usepackage{float}
\usepackage{amsmath}
\usepackage{amsthm}
\usepackage{wasysym}
\usepackage{mathrsfs}
\usepackage{bm}
\usepackage{epstopdf}
\usepackage{mathtools}
\usepackage{CJK}
\usepackage{esint}
\usepackage{color}
\usepackage{subfigure}
\usepackage{amsfonts}
\usepackage{footmisc}
\usepackage{scrextend}
\usepackage{multirow}
\usepackage{amssymb}
\usepackage{multirow}
\usepackage{diagbox}

\usepackage{url}
\usepackage{bm}

\usepackage{xcolor}
\definecolor{darkblue}{rgb}{0,0,0.5}
\hypersetup{
colorlinks=true,
linkcolor=black,
filecolor=blue,
citecolor=darkblue,  
urlcolor=black,
}
\urlstyle{same}

\newcommand{\bk}{\bm k}
\newcommand{\bs}{\bm s}

\newcommand{\bx}{\boldsymbol x}

\newcommand{\bI}{\boldsymbol{I}}

\newcommand*\diff{\mathop{}\!\mathrm{d}}


\newcommand{\QZ}[1]{{{\textcolor{black}{#1}}}}

\newtheorem{theorem}{Theorem}
\newtheorem*{conjecture*}{Conjecture}

\newtheorem{lemma}[theorem]{Lemma}
\newtheorem{corollary}{Corollary}[theorem]

\begin{document}

\title{Optimal encoding of oscillators into more oscillators}

\author{Jing Wu}
\thanks{These two authors contributed equally.}
\affiliation{James C. Wyant College of Optical Sciences, University of Arizona, Tucson, AZ 85721, USA}
\author{Anthony J. Brady}
\thanks{These two authors contributed equally.}
\affiliation{Department of Electrical and Computer Engineering, University of Arizona, Tucson, Arizona 85721, USA}
\author{Quntao Zhuang}
\email{qzhuang@usc.edu}
\orcid{0000-0002-9554-3846}
\affiliation{
Ming Hsieh Department of Electrical and Computer Engineering \& Department of Physics and Astronomy, University of Southern California, Los
Angeles, California 90089, USA
}
\affiliation{James C. Wyant College of Optical Sciences, University of Arizona, Tucson, AZ 85721, USA}
\affiliation{Department of Electrical and Computer Engineering, University of Arizona, Tucson, Arizona 85721, USA}

\begin{abstract}
Bosonic encoding of quantum information into harmonic oscillators is a hardware efficient approach to battle noise. In this regard, oscillator-to-oscillator codes not only provide an additional opportunity in bosonic encoding, but also extend the applicability of error correction to continuous-variable states ubiquitous in quantum sensing and communication. In this work, we derive the optimal oscillator-to-oscillator codes among the general family of Gottesman-Kitaev-Preskill (GKP)-stablizer codes for homogeneous noise. We prove that an arbitrary GKP-stabilizer code can be reduced to a generalized GKP two-mode-squeezing (TMS) code. The optimal encoding to minimize the geometric mean error can be constructed from GKP-TMS codes with an optimized GKP lattice and TMS gains. For single-mode data and ancilla, this optimal code design problem can be efficiently solved, and we further provide numerical evidence that a hexagonal GKP lattice is optimal and strictly better than the previously adopted square lattice. For the multimode case, general GKP lattice optimization is challenging. In the two-mode data and ancilla case, we identify the D4 lattice---a 4-dimensional dense-packing lattice---to be superior to a product of lower dimensional lattices. As a by-product, the code reduction allows us to prove a universal no-threshold-theorem for arbitrary oscillators-to-oscillators codes based on Gaussian encoding, even when the ancilla are not GKP states.
\end{abstract}
\maketitle

\section{Introduction}

The power of quantum information processing comes from delicate quantum effects such as coherence, squeezing and entanglement. As noise is ubiquitous, maintaining such power relies on quantum error correction. Since the early works~\cite{calderbank1996,steane1996multiple}, various codes and the corresponding error correction systems have been proposed, mostly focusing on discrete-variable physical realizations. The Gottesman-Kitaev-Preskill (GKP) code~\cite{gkp2000} provides an alternative approach to utilize the bosonic degree of freedom of an oscillator to encode a qubit. Such a qubit-to-oscillator encoding is hardware efficient, as the entire degree of freedom of a (cavity) mode is utilized and at the same time, microwave cavities have a long lifetime~\cite{romanenko2020}. By this approach, the lifetime of a logical qubit has been extended beyond the error correction `break-even' point for cat codes~\cite{ofek2016extending} and more recently for GKP codes~\cite{sivak2022breakeven}.

The concatenation of bosonic codes and qubit codes has also shown great promise~\cite{raveendran2022finite}, with applications in quantum repeaters~\cite{rozpkedek2021quantum} and quantum computer architecture~\cite{chamberland2022building}. It then becomes a natural question whether an oscillator can be encoded into multiple oscillators to gain additional advantage in protecting logical data. Noh, Girvin and Jiang provide an affirmative answer by designing a GKP-stabilizer code~\cite{noh2019encoding,noh2020o2o} to protect an oscillator by entangling it with an additional GKP ancilla. Despite that these oscillator-to-oscillator codes alone do not have a threshold theorem due to their analog nature and finite squeezing~\cite{hanggli2021oscillator}, an oscillator-to-oscillator encoding at the bottom-layer can substantially suppress the error in a multi-level qubit encoding~\cite{xu2022qubit}. Moreover, the GKP-stabilizer code can be utilized in general to protect an oscillator in an arbitrary quantum state, including a squeezed vacuum state and continuous-variable multi-partite entangled states that are widely applicable to quantum sensing~\cite{zhuang2020distributed,zhou2022enhancing} and communication~\cite{wu2022continuous}.

To \QZ{maximally} benefit from the oscillator-to-oscillator encoding for various tasks mentioned above, it is important to find the optimal GKP-stabilizer code design. In this paper, we make progress towards this endeavour for the commonly considered homogeneous noise case~\cite{noh2020o2o}. We prove that an arbitrary GKP-stabilizer code can be reduced to a generalized GKP two-mode-squeezing (TMS) code. Therefore, GKP-TMS codes with optimized GKP ancilla and gains \QZ{can achieve the minimum geometric mean error}. For decoding, we derive the minimum mean square error (MMSE) estimator, which is superior to the linear estimator~\cite{noh2020o2o} \QZ{in terms of minimizing the residue noise on the data modes}. While linear estimation leads to a break-even point (\QZ{the point above which noise can no longer be reduced by error correction}) of additive noise at $\sigma_{\rm lin}^\star\sim 0.558$, the MMSE estimation pushes it to $\sigma_{\rm MMSE}^\star= 0.605(5)$, which is just \QZ{at the edge of the best} break-even region for arbitrary GKP codes $[0.607,0.707]$ derived from our quantum capacity analyses.

For single-mode data and ancilla, we further show that the optimal code design problem \QZ{of minimizing the geometric mean error} can be efficiently solved and provide numerical evidence that a hexagonal GKP lattice is optimal and strictly better than the previously adopted square lattice. 
For the multimode case, solving the optimal lattice is more challenging. For two-mode data and ancilla, we identify the D4 lattice~\cite{baptiste2022multiGKP, eisert2022lattice}---a 4-dimensional dense-packing lattice---to be superior to direct products of 2-dimensional lattices. \QZ{For a single data mode with two ancilla modes, we compare a few examples of lattices and find the performance to be dependent on the noise levels.} Our results indicate that high-dimensional lattices have the potential of outperforming low-dimensional ones for GKP-stablizer codes.

\QZ{
Besides the results on the optimal code design, the code reduction also allows us to prove a more general version of the no-threshold-theorem of Ref.~\cite{hanggli2021oscillator}, where the original proof relies on explicit maximum likelihood error decoder based on GKP-type syndrome information. Our proof is based on a simple, classical information-theoretical argument; moreover, our no-threshold-theorem applies to all GKP-stabilizer codes, and more generally, even when the GKP ancilla are replaced by general non-Gaussian states.
}

\QZ{This paper is organized as the following. Section~\ref{sec:general_code} introduces the general GKP-stabilizer code for encoding an oscillator into many oscillators. Our main theorem of code reduction and optimality is presented in Section~\ref{sec:general_reduction}, which then leads to the no-threshold theorem in Section~\ref{sec:code_reduction_no_threshold}. Section~\ref{sec:gkp_lattices_mmse} introduces general GKP lattices and the MMSE estimation. The final part of the paper addresses code optimization and comparison, with single-mode case in Section~\ref{sec:single_mode} and multi-mode case in Section~\ref{sec:multi_mode}. We end the paper with discussions on the heterogeneous case in Section~\ref{sec:heterogeneous} and the imperfect GKP states in Section~\ref{sec:approx_gkp}.}

\begin{figure}[t]
    \centering
    \includegraphics[width=\linewidth]{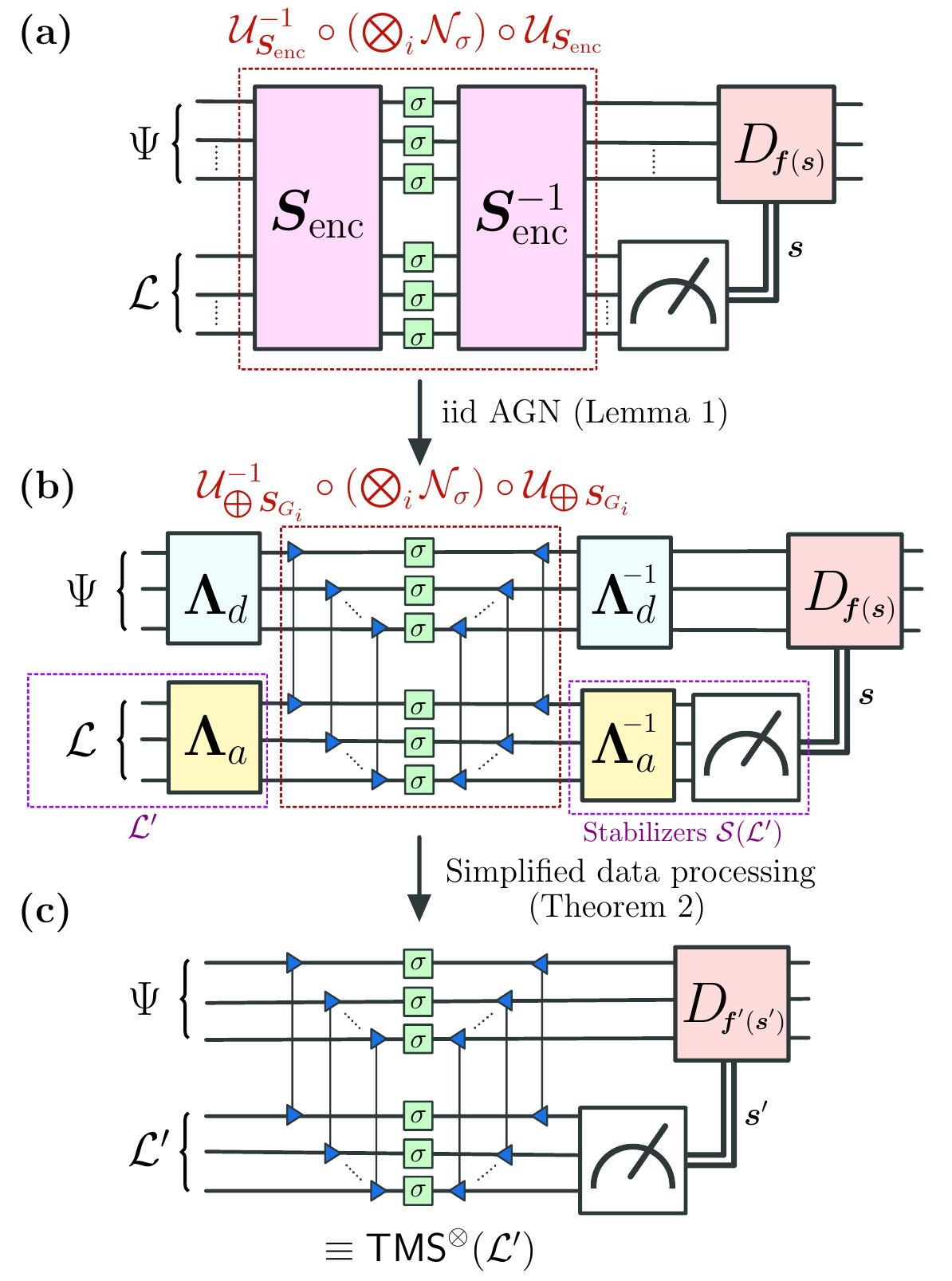}
    \caption{Illustration of GKP-stabilizer code reduction (here, $M=N$). (a) A general GKP-stabilizer code with encoding $\bm S_{\mathsf{enc}}$ and an ancillary GKP lattice $\mathcal{L}$. The syndromes $\bm s$ are extracted from stabilizer measurements on the ancillary lattice $\mathcal{L}$ and inform the corrective displacements on the data $D_{\bm f(\bm s)}$. (b) Reduction of the GKP stabilizer code to a set of TMS operations between the data and ancilla modes, up to local symplectic transformations $\bm\Lambda_d$ and $\bm \Lambda_a$. (c) Coherent data processing via $\bm\Lambda_d$ does not change the performance of the code. Acting on the initial ancillary lattice $\mathcal{L}$ by a symplectic transformation $\bm\Lambda_a$ produces another lattice $\mathcal{L}^\prime$. Thus, a general GKP-stabilizer code reduces to a direct product of TMS codes with a GKP ancillary lattice $\mathcal{L}^\prime$, $\mathsf{TMS}^{\otimes}(\mathcal{L}^\prime)$.}
    \label{fig:code_redux}
\end{figure}

\section{General GKP-stabilizer code}
\label{sec:general_code}

As shown in Fig.~\ref{fig:code_redux} (a), a general GKP-stabilizer code operates as follows. An arbitrary $N$-mode data system in quantum state $\Psi$ is encoded into $K= M+N$ modes by applying a Gaussian unitary to entangle the data system with an $M$-mode ancilla system in a non-Gaussian state $\cal L$. In general, non-Gaussian states are required, due to the no-go theorem of Gaussian error correction~\cite{cerf2009nogo}. Typically, we are interested in \QZ{(canonical)} GKP lattice states, hence $\cal L$ for lattice. The Gaussian unitary $U_{\bm S_{\mathsf{enc}}}$ can be described by the symplectic transform $\bm S_{\mathsf{enc}}$ (see Appendix~\ref{app:gauss_evol} for a brief introduction of Gaussian unitaries). We denote the corresponding unitary channel as $\mathcal{U}_{\bm S_{\mathsf{enc}}}$.

\QZ{At this juncture, we want to emphasize that the GKP lattice states appearing in this paper are \emph{canonical} lattice states as opposed to \emph{computational} lattice states. Computational lattice states allow one to encode a qubit (or qudit) into an oscillator per the original GKP approach~\cite{gkp2000}. Canonical lattices have larger spacing than their computational counterparts and, consequently, cannot support digital information. However, canonical lattices can be quite useful in multimode codes and, particularly, for oscillators-to-oscillators codes, as demonstrated in Refs~\cite{noh2020o2o,wu2021continuous}. Therefore, when considering GKP states in this paper, we refer solely to canonical lattices. We describe the mathematical and technical details of canonical lattice states in Section~\ref{sec:gkp_lattices_mmse}.}

As explained in the original GKP paper~\cite{gkp2000}, a natural noise model for bosonic systems is random displacements. In practice, such noise arises from amplifying a lossy signal, leading to additive Gaussian noise (AGN)---i.e. the random displacements have a Gaussian distribution. As general correlated Gaussian noises can be reduced to independent noises~\cite{wu2021continuous}, it suffices to consider a product of single-mode AGNs. Formally, we denote the single-mode AGN channel as $\cal N_\sigma$, where $\sigma^2$ is the variance of the displacement noise on a single mode.

On the decoding side, an inverse Gaussian unitary described by $\bm S_{\mathsf{enc}}^{-1}$ is first applied to disentangle the data and the ancilla. Then one measures the ancilla system to perform error correcting displacement operations ${D}_{\bm f(\bm s)}$ based on the measurement results (or \textit{syndromes}) $\bm s$~\footnote{Note that the measurement and displacements can be equivalently pushed before the inverse unitary, as shown in Ref.~\cite{xu2022qubit}.}. Here, $\bm f$ is a vector function (i.e., $\bm f:\mathbb{R}^{2M}\rightarrow\mathbb{R}^{2N}$) that takes the syndromes $\bm s$ as input and provides an estimate for the error displacements on the data. The corrective displacements aim to cancel the additive noise on the data. Due to the analog nature of the errors, such a cancellation is never perfect, and there will be residual, random displacements $\bm \xi_d$ on the output data modes.

To quantify the error correction performance, we evaluate the covariance matrix $\bm V_{\rm out}$ of the residue displacements $\bm \xi_d$, despite the distribution of $\bm \xi_d$ being non-Gaussian after decoding. Note that, prior to encoding, one may also apply a Gaussian unitary $U_{\bm S_d}$ and then apply the inverse $U_{\bm S_d}^{-1}$ after the final decoding step. Pre- and post-processing transform the residue noise covariance matrix as $\bm S_d^{-1} \bm V_{\rm out} \bm S_d^{-\top}$. Such an operation is considered `free' and should not improve the performance of the error correction. Therefore, we consider the geometric mean (GM) error
\begin{equation}
    \bar\sigma_{ \rm GM}^2\equiv\sqrt[2N]{\det\bm V_{\rm out}},
    \label{eq:sigma_GM}
\end{equation}
as the figure of merit to benchmark code performance,
which is necessarily invariant under symplectic operations on the data. Moreover, the GM error has information theoretic roots, as it relates to a lower bound on the quantum capacity for the additive non-Gaussian noise channel of a multimode GKP code. \QZ{Given the geometric mean error of general additive noise $\bar\sigma_{ \rm GM}^2$, the quantum capacity of the $N$-mode additive noise channel $C_{\mathcal{Q}}\geq \max\left[0,N\log_2\left(\frac{1}{e\bar{\sigma}^{2}_{\rm GM}}\right)\right] $ (see Lemma~\ref{lemma:multimode_lb} of Appendix~\ref{app:bounds})}.

As an alternative simpler metric, we also consider the root-mean-square (RMS) error, 
\begin{equation}
   \bar\sigma_{ \rm RMS}^2\equiv\frac{\Tr{\bm V_{\rm out}}}{2N}.
   \label{eq:sigma_RMS}
\end{equation}
The RMS error is only invariant under orthogonal symplectic (e.g., linear optical) transformations on the data. However, it is easier to evaluate and provides an upper bound for the GM error, $\bar\sigma_{ \rm RMS}^2\ge  \bar\sigma_{ \rm GM}^2$.

Before moving to our main results, we specify a few code examples. In Ref.~\cite{noh2020o2o}, two codes are proposed based on the \QZ{canonical} GKP square lattice, the GKP-TMS code described by the TMS symplectic matrix
\begin{equation}\label{eq:tms_matrix_main}
    \bm S_{G}=
    \begin{pmatrix}
        \sqrt{G}\bm I & \sqrt{G-1}\bm Z\\
        \sqrt{G-1}\bm Z & \sqrt{G}\bm I
    \end{pmatrix},
\end{equation}
with the gain $G$ tunable, and the GKP-squeezed-repetition code (also see Ref.~\cite{zhuang2020distributed}), which has the following encoding matrix for $N=2$ modes,
\begin{align}
    \label{eq:rep12}
    &
    \bm{S}^{[2]}_{\rm Sq-Rep} \equiv
    \begin{pmatrix}
        1/\lambda && 0 && 0 && 0\\
        0 && \lambda && 0 && -\lambda\\
        \lambda && 0 && \lambda && 0\\
        0 && 0 && 0 && 1/\lambda
      \end{pmatrix},
\end{align}
with $\lambda$ being a tunable parameter.

\section{General reduction of encoding}
\label{sec:general_reduction}

We focus on the homogeneous noise model, where the displacement error on all modes are independent and identically distributed (iid), where the overall $K$-mode noise channel $\bigotimes_{i=1}^K\mathcal{N}_{\sigma}$.

Towards proving the optimal code design, we begin with the following lemma to simplify a general code.

\begin{lemma}\label{lemma:lemma_redux}
For an iid AGN channel $\bigotimes_{i=1}^K\mathcal{N}_\sigma$ and up to local Gaussian unitaries acting on all data or all ancilla modes, a GKP-stabilizer code with \QZ{any} symplectic encoding matrix $\bm S_{\mathsf{enc}}$ reduces to a direct product of $N$ TMS operations (between the $N$ data modes and $N$ ancilla modes) together with an identity operation on the remaining $M-N$ ancilla modes---i.e., $\bm S_{\rm enc}\rightarrow\bigoplus_{i=1}^N\bm S_{G_i}\oplus\bm I_{2(M-N)}$. 
\end{lemma}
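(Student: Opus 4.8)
The plan is to show that the code's performance — and hence its reducibility — depends on $\bm S_{\mathsf{enc}}$ only through an invariant that is acted upon by \emph{local} symplectic congruence, and then to recognize that invariant as the covariance matrix of a pure bipartite Gaussian state whose standard form is exactly a stack of TMS pairs. First I would isolate the two families of ``free'' operations. Since the noise is iid with covariance $\sigma^2\bm I_{2K}$, it is invariant under any orthogonal symplectic (passive) transformation $\bm O$, because $\bm O(\sigma^2\bm I_{2K})\bm O^\top=\sigma^2\bm I_{2K}$; inserting $\bm O^\top\bm O=\bm I$ around the noise channel therefore amounts to the replacement $\bm S_{\mathsf{enc}}\to\bm O^\top\bm S_{\mathsf{enc}}$ at no cost. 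Independently, a local symplectic $\bm\Lambda_d$ on the data is coherent pre/post-processing that leaves the GM error \eqref{eq:sigma_GM} invariant, and a local symplectic $\bm\Lambda_a$ on the ancilla merely maps the canonical lattice $\mathcal{L}$ to another canonical lattice $\mathcal{L}'$, which is free once we optimize over lattices; together these give $\bm S_{\mathsf{enc}}\to\bm S_{\mathsf{enc}}(\bm\Lambda_d\oplus\bm\Lambda_a)$. The goal is thus to find $\bm O$, $\bm\Lambda_d$, $\bm\Lambda_a$ such that $\bm O^\top\bm S_{\mathsf{enc}}(\bm\Lambda_d\oplus\bm\Lambda_a)=\bigoplus_{i=1}^N\bm S_{G_i}\oplus\bm I_{2(M-N)}$.

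Next I would form the matrix $\bm P\equiv\bm S_{\mathsf{enc}}^\top\bm S_{\mathsf{enc}}$ (equivalently its inverse $\bm P^{-1}=\bm S_{\mathsf{enc}}^{-1}\bm S_{\mathsf{enc}}^{-\top}$, which is proportional to the effective decoded-noise covariance). Under the orthogonal freedom $\bm P$ is \emph{invariant}, since $(\bm O^\top\bm S_{\mathsf{enc}})^\top(\bm O^\top\bm S_{\mathsf{enc}})=\bm S_{\mathsf{enc}}^\top\bm S_{\mathsf{enc}}$, while under the local freedom it transforms by congruence, $\bm P\to(\bm\Lambda_d\oplus\bm\Lambda_a)^\top\bm P(\bm\Lambda_d\oplus\bm\Lambda_a)$. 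Because the symplectic group is closed under transposition, $\bm P$ is a symmetric, positive-definite, symplectic matrix — exactly the defining data of the covariance matrix of a \emph{pure} $K$-mode Gaussian state on the $N|M$ bipartition. The problem has thus been reduced to classifying pure bipartite Gaussian covariance matrices under local symplectic operations.

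Here I would invoke the phase-space Schmidt decomposition (the standard-form theorem for pure bipartite Gaussian states): by local symplectics on the two parties, any such $\bm P$ is brought to a direct sum of $\min(N,M)=N$ two-mode-squeezed pairs with Schmidt parameters $r_i$, together with $M-N$ residual vacuum modes on the larger (ancilla) side. Identifying the $i$-th pair with $\bm S_{G_i}^\top\bm S_{G_i}$ fixes the gains through $\sqrt{G_i}=\cosh r_i$, and the vacuum modes supply $\bm I_{2(M-N)}$. Finally, since $\bm S_{\mathsf{enc}}(\bm\Lambda_d\oplus\bm\Lambda_a)$ and $\bm S_{\mathrm{TMS}}\equiv\bigoplus_i\bm S_{G_i}\oplus\bm I_{2(M-N)}$ now share the same value of $\bm P$, the uniqueness of the polar decomposition of symplectic matrices implies $\bm S_{\mathsf{enc}}(\bm\Lambda_d\oplus\bm\Lambda_a)=\bm O'\,\bm S_{\mathrm{TMS}}$ for some orthogonal symplectic $\bm O'$; absorbing $\bm O'$ into the noise freedom (taking $\bm O=\bm O'$) completes the reduction of $\bm S_{\mathsf{enc}}$ itself.

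I expect the main obstacle to be the bookkeeping of the operational freedoms rather than the linear algebra: one must verify carefully that right-multiplication by $\bm\Lambda_a$ genuinely maps the syndrome-extraction and lattice-correction step to an equivalent one on the transformed lattice $\mathcal{L}'$ — so that the full \emph{decoder}, and not merely the covariance, is preserved — and that the leftover orthogonal factor $\bm O'$ can be absorbed into the noise without altering the syndrome statistics. Secondary care is needed for the case $M\ge N$ and for trivial pairs with $G_i=1$, which simply collapse into additional identity modes.
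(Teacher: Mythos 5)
Your proposal is correct, and its mathematical core coincides with the paper's proof in Appendix~\ref{app:proof_lemma_redux}: both arguments pivot on the modewise entanglement theorem (Theorem~\ref{thm:modewise}, the phase-space Schmidt decomposition), applied to the same invariant up to inversion---you decompose $\bm P=\bm S_{\mathsf{enc}}^\top\bm S_{\mathsf{enc}}$, while the paper decomposes the effective noise matrix $\bm Y_1=\sigma^2\bm S_{\mathsf{enc}}^{-1}\bm S_{\mathsf{enc}}^{-\top}=\sigma^2\bm P^{-1}$ of the error channel $\mathcal{U}_{\bm S_{\mathsf{enc}}^{-1}}\circ\big(\bigotimes_{i=1}^K\mathcal{N}_\sigma\big)\circ\mathcal{U}_{\bm S_{\mathsf{enc}}}$. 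Where you genuinely differ is the certification step. The paper never factorizes $\bm S_{\mathsf{enc}}$ at all: it characterizes the error channel by its $(\bm d,\bm X,\bm Y)$ data, conjugates by $\mathcal{U}_{\bm\Lambda_d}\otimes\mathcal{U}_{\bm\Lambda_a}$, and uses Gaussian channel synthesis (Theorem~\ref{thm:synthesis}) to conclude that the result equals the TMS-code channel; since Gaussian channels with identical $(\bm d,\bm X,\bm Y)$ are the same channel, a residual passive factor never appears. You instead produce an explicit matrix factorization $\bm S_{\mathsf{enc}}(\bm\Lambda_d\oplus\bm\Lambda_a)=\bm O'\bm S_{\mathrm{TMS}}$ via uniqueness of the symplectic polar decomposition, and then dispose of $\bm O'$ by the invariance of $\sigma^2\bm I_{2K}$ under passive conjugation---which is precisely the same physical fact that renders the paper's channel-level bookkeeping insensitive to $\bm O'$. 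Your route buys a constructive, purely group-theoretic statement about $\bm S_{\mathsf{enc}}$ itself (passive~$\times$~TMS~$\times$~local), at the cost of the extra polar-decomposition step; the paper's route is shorter because it only ever tracks the operationally relevant quantity. One remark on your closing caveats: verifying that $\bm\Lambda_a$ maps the lattice and decoder to an equivalent pair, and that $\bm\Lambda_d$ is harmless for the figure of merit, is not needed for Lemma~\ref{lemma:lemma_redux} as stated (the reduction is ``up to local Gaussian unitaries''); the paper defers exactly those points to the proof of Theorem~\ref{thm:code_redux}.
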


In other words, we can decompose a general GKP-stabilizer code into TMS operations and local symplectic operations; see Fig.~\ref{fig:code_redux} (a-b) for a visual aid of the lemma for the $M=N$ case. The local Gaussian unitary $U_{\bm\Lambda_d}$ and its inverse are applied on the data modes, while the local Gaussian unitary $U_{\bm\Lambda_a}$ and its inverse are applied on the ancilla modes. Consequently, the encoding and decoding can be taken as simple product of TMS operations. The proof of this result is based on Gaussian channel synthesis and the modewise entanglement theorem~\cite{reznik2003modewise} (Theorem~\ref{thm:synthesis} and  Theorem~\ref{thm:modewise} of Appendix~\ref{app:gauss_evol}, respectively), which we present in Appendix~\ref{app:proof_lemma_redux} in full detail.

The above lemma reduces the number of parameters in code description from $2(M+N)^2$ to $2M^2+2N^2$ in the leading order. In fact, as we will explain later, coherent data processing (via $\bm \Lambda_d$) only reshapes the residual noise; it does not aid in error correction. Hence a further reduction to $2M^2$ is permissible. Moreover, the generally multimode entangling operations between data and ancilla are now given by standard TMS operations with $N$ to-be-determined gain parameters $G_i$.

We now consider the GM error as a performance metric and arrive at the following main result of the work (see Fig.~\ref{fig:code_redux} (c)).

\begin{theorem}[Sufficiency of $\mathsf{TMS}^\otimes(\mathcal{L})$]\label{thm:code_redux}
For an iid AGN channel, in terms of the geometric mean error on the data modes, the most general GKP-stabilizer code can be completely reduced to a direct product of TMS codes with a general (potentially multimode) ancillary lattice state $\mathcal{L}$, defined as $\mathsf{TMS}^\otimes(\mathcal{L})$.
\end{theorem}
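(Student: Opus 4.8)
The plan is to obtain Theorem~\ref{thm:code_redux} as a direct consequence of Lemma~\ref{lemma:lemma_redux}, supplemented by two invariance observations that render the two residual local Gaussian unitaries ``free'' with respect to the geometric mean error (cf.\ Fig.~\ref{fig:code_redux}(c)). First I would invoke Lemma~\ref{lemma:lemma_redux} to bring the encoding to the form $\bm S_{\mathsf{enc}}\to\bigoplus_{i=1}^N\bm S_{G_i}\oplus\bm I_{2(M-N)}$ up to a local symplectic $\bm\Lambda_d$ on the data modes and a local symplectic $\bm\Lambda_a$ on the ancilla modes. This already isolates the only genuinely entangling part of the code as a direct product of $N$ TMS operations with to-be-determined gains $G_i$, the remaining $M-N$ ancilla modes being decoupled through the identity block.

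Next I would dispose of $\bm\Lambda_d$. Since it acts only on the data, it can be pulled out of the encoding and undone coherently as pre-/post-processing, which transforms the residual covariance as $\bm V_{\rm out}\to\bm\Lambda_d^{-1}\bm V_{\rm out}\bm\Lambda_d^{-\top}$. Because $\bm\Lambda_d$ is symplectic, $\det\bm\Lambda_d=1$, so $\det\bm V_{\rm out}$ and hence $\bar\sigma_{\rm GM}^2=\sqrt[2N]{\det\bm V_{\rm out}}$ are unchanged. This is precisely the symplectic invariance of the GM error on the data noted in Section~\ref{sec:general_code}: coherent data processing cannot improve the figure of merit, so $\bm\Lambda_d$ may simply be dropped.

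Then I would absorb $\bm\Lambda_a$ into the ancilla state. The ancilla begins in the GKP lattice state $\mathcal{L}$, and applying the Gaussian unitary $U_{\bm\Lambda_a}$ to a lattice state yields another lattice state $\mathcal{L}'$ whose generator is $\bm\Lambda_a$ times that of $\mathcal{L}$ (a symplectic image of a symplectic lattice is again such a lattice). Since the code family is defined by optimizing over all admissible ancillary lattices, replacing $\mathcal{L}$ by $\mathcal{L}'$ stays within the family; the stabilizer measurement and the decoding function $\bm f$ are correspondingly conjugated by the fixed, known $\bm\Lambda_a$, so the available syndrome information, and thus the optimal achievable GM error, is preserved. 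What remains is exactly a direct product of $N$ TMS operations on an $M$-mode ancilla prepared in the generally multimode lattice $\mathcal{L}'$, i.e.\ $\mathsf{TMS}^\otimes(\mathcal{L}')$, where the $M-N$ decoupled modes are retained as part of the multimode lattice in case their correlations with the TMS-paired modes aid syndrome extraction.

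The main obstacle is not in this packaging but in ensuring compatibility with the iid AGN model, which is already the burden of Lemma~\ref{lemma:lemma_redux}: one must verify that factoring the local unitaries out of $\bm S_{\mathsf{enc}}$ does not covertly distort the homogeneous noise (for instance, iid noise conjugated by any output-side squeezing component), and that the Gaussian-channel-synthesis/modewise-decomposition step can be arranged so that the noise seen by the reduced $\mathsf{TMS}^\otimes(\mathcal{L}')$ code is again iid. Granting the lemma, the two invariance facts upgrade its ``up to local unitaries'' statement to an exact equality of optimal GM error; and because every manipulation used (the local unitaries and the lattice relabeling) is invertible and GM-preserving, the optimum over $\mathsf{TMS}^\otimes(\mathcal{L})$ coincides with the optimum over all GKP-stabilizer codes, establishing sufficiency.
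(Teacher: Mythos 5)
Your proposal is correct and follows essentially the same route as the paper's proof: invoke Lemma~\ref{lemma:lemma_redux}, absorb $\bm\Lambda_a$ into a relabeled symplectically integral lattice $\mathcal{L}'$, and eliminate $\bm\Lambda_d$ via the symplectic invariance of $\det\bm V_{\rm out}$ (hence of $\bar\sigma_{\rm GM}^2$). The only cosmetic difference is that the paper makes the commutation of $\bm\Lambda_d^{-1}$ past the corrective displacement explicit by redefining the estimator as $\bm f'\equiv\bm\Lambda_d\bm f$, a detail your argument subsumes in the statement that the data-side unitary can be pulled through as free pre-/post-processing.
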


\begin{proof}
Theorem~\ref{thm:code_redux} is a consequence of Lemma~\ref{lemma:lemma_redux}, together with the fact that (1) a local symplectic transformation $\bm\Lambda_a$ on the initial ancillary lattice $\mathcal{L}$ defines a new (symplectically integral) lattice $\mathcal{L}^\prime$ (see, e.g., Appendix~\ref{app:gkp_lattice}) and (2) coherent pre- and post-processing of the data modes do not increase code performance. To show (2), using the geometric mean $\bar\sigma_{ \rm GM}^2$ as a performance metric, first observe that we can push data noise processing via $\bm\Lambda_d^{-1}$ after the corrective displacement ${D}_{\bm f}$ by simply redefining our error estimator $\bm f^\prime\equiv\bm\Lambda_d\bm f$, i.e., ${D}_{\bm f}\circ \mathcal{U}_{\bm\Lambda_d^{-1}}= \mathcal{U}_{\bm\Lambda_d^{-1}}\circ {D}_{\bm\Lambda_d\bm f}$. Just after corrective displacements (but prior to data processing), the residual error covariance matrix on the data modes is $\bm V_{\rm out}$ and transforms to $\bm V_{\rm out}^\prime=\bm\Lambda_d^{-1}\bm V_{\rm out}\bm\Lambda_d^{-\top}$ after data processing. However, the geometric mean is invariant under symplectic transformations; in other words, $\det\bm V_{\rm out}^\prime=\det\bm V_{\rm out}$. Thus, the performance of the code (as quantified by the geometric mean error) only depends on the gain parameters of the TMS operations, the ancillary lattice state, and the error estimator for corrective displacements---not on $\bm\Lambda_d$.
\end{proof}

Theorem~\ref{thm:code_redux} reduces code design to choosing $N$ gain parameters for the TMS operations and finding a good ancillary lattice $\mathcal{L}$ for best code performance. [\QZ{We approach this problem numerically in later sections.}] \QZ{Therefore, the following corollary directly follows from Theorem~\ref{thm:code_redux} as an observation.} 
\begin{corollary}[Optimality of $\mathsf{TMS}^\otimes(\mathcal{L})$]
The optimal GKP-stabilizer code \QZ{with $N$ data modes} can be constructed from a GKP-TMS code $\mathsf{TMS}^\otimes(\mathcal{L})$ with an optimized (potentially multimode) GKP lattice $\mathcal{L}$ and optimized TMS gain parameters \QZ{$\{G_i\}_{i=1}^N$}.
\label{coro:optimality}
\end{corollary}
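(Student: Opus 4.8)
The plan is to read the corollary as an elementary consequence of Theorem~\ref{thm:code_redux}, obtained by comparing the sets of geometric-mean errors achievable within two code families and showing they coincide. First I would fix the figure of merit to be the geometric mean error $\bar\sigma_{\rm GM}^2=\sqrt[2N]{\det\bm V_{\rm out}}$ of Eq.~\eqref{eq:sigma_GM}, and define the optimal value over \emph{all} GKP-stabilizer codes as the infimum of $\bar\sigma_{\rm GM}^2$ taken over every admissible symplectic encoding $\bm S_{\mathsf{enc}}$, ancillary lattice state, and error estimator $\bm f$. The claim to be established is that this infimum is already attained (or, read weakly, approached arbitrarily well) within the restricted family $\mathsf{TMS}^\otimes(\mathcal{L})$, parameterized only by the $N$ TMS gains $\{G_i\}_{i=1}^N$ and the ancillary lattice $\mathcal{L}$.

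The argument then proceeds by two inclusions. For one direction, every $\mathsf{TMS}^\otimes(\mathcal{L})$ code is itself a legitimate GKP-stabilizer code --- its encoding is the Gaussian symplectic $\bigoplus_{i=1}^N\bm S_{G_i}\oplus\bm I_{2(M-N)}$ acting on a GKP lattice ancilla --- so the optimal value over the restricted family can be no smaller than the global optimum. For the converse, Theorem~\ref{thm:code_redux} supplies, for an arbitrary GKP-stabilizer code, a $\mathsf{TMS}^\otimes(\mathcal{L}^\prime)$ code with the \emph{same} $\det\bm V_{\rm out}$, and hence the same $\bar\sigma_{\rm GM}^2$: Lemma~\ref{lemma:lemma_redux} strips the encoding down to $\bigoplus_i\bm S_{G_i}\oplus\bm I_{2(M-N)}$ up to local unitaries $\bm\Lambda_d,\bm\Lambda_a$; the ancilla-side $\bm\Lambda_a$ merely relabels $\mathcal{L}\to\mathcal{L}^\prime$; and the data-side $\bm\Lambda_d$ is absorbed into a redefined estimator $\bm f^\prime=\bm\Lambda_d\bm f$ and leaves $\det\bm V_{\rm out}$ invariant, by the symplectic-invariance step already established in the proof of the theorem. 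Combining the two inclusions forces the two optimal values to be equal, so an optimizer can be taken in $\mathsf{TMS}^\otimes(\mathcal{L})$ form, with all remaining freedom carried by $\{G_i\}_{i=1}^N$ and $\mathcal{L}$.

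Since the substantive content is already discharged in Theorem~\ref{thm:code_redux}, I do not expect a genuine analytical obstacle; the corollary is essentially a variational restatement. The one point deserving care is the claim of \emph{attainment} rather than mere infimum, since ``can be constructed from'' presumes an optimizer exists. I would either (i) interpret the statement weakly, asserting only that every code is matched by some TMS code so that restricting to the TMS family loses no generality, or (ii) argue attainment by a compactness/closedness argument on the parameter space --- the gains $G_i\ge 1$ together with the lattice modulus can be confined to a compact region once one checks that unbounded gains or degenerate lattices drive $\bar\sigma_{\rm GM}^2$ away from its minimum. Finally, I would note that because the $\bm\Lambda_a$-freedom is already folded into the choice of $\mathcal{L}$, the lattice optimization ranges over all (symplectically integral) GKP lattices without further quotienting, which is precisely the reduced search space exploited in the later numerical sections.
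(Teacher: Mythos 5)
Your proposal is correct and follows essentially the same route as the paper, which presents this corollary as a direct observation from Theorem~\ref{thm:code_redux}: any code is matched in geometric-mean error by a $\mathsf{TMS}^\otimes(\mathcal{L}')$ code (via Lemma~\ref{lemma:lemma_redux}, absorption of $\bm\Lambda_a$ into the lattice, and symplectic invariance of $\det\bm V_{\rm out}$ under $\bm\Lambda_d$), while every such TMS code is itself a GKP-stabilizer code. Your additional care about attainment versus infimum is a reasonable refinement, but the paper implicitly adopts your weak reading (restriction to the TMS family loses no generality), so no new argument is required.
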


As a side note, in deriving Lemma~\ref{lemma:lemma_redux} and Theorem~\ref{thm:code_redux}, observe that we do not explicitly utilize properties of the ancilla state $\mathcal{L}$. This indicates that our results hold for any Gaussian encoding with general non-Gaussian ancilla, not only for the GKP-stabilizer codes. In particular, one can likewise consider a general $M$-mode non-Gaussian resource state $\mathcal{L}$ and a family of such states $\mathscr{F}(\mathcal{L})$ that are related to $\mathcal{L}$ via Gaussian transformations, i.e. $\mathscr{F}(\mathcal{L})\equiv\{\mathcal{L}^\prime=\mathcal{U}_{\bm\Lambda_a}(\mathcal{L})|\bm\Lambda_a\in{\rm Sp}(2M,\mathbb{R})\}$, where ${\rm Sp}(2M,\mathbb{R})$ is the set of $2M\times2M$ real symplectic matrices. Such a class of codes represent a fairly general encoding of oscillators-to-oscillators, especially considering that Gaussian operations supplied with non-Gaussian GKP ancillae are universal and sufficient for fault-tolerant quantum computation~\cite{baragiola2019all}.

\begin{figure}
    \centering
    \includegraphics[width=\linewidth]{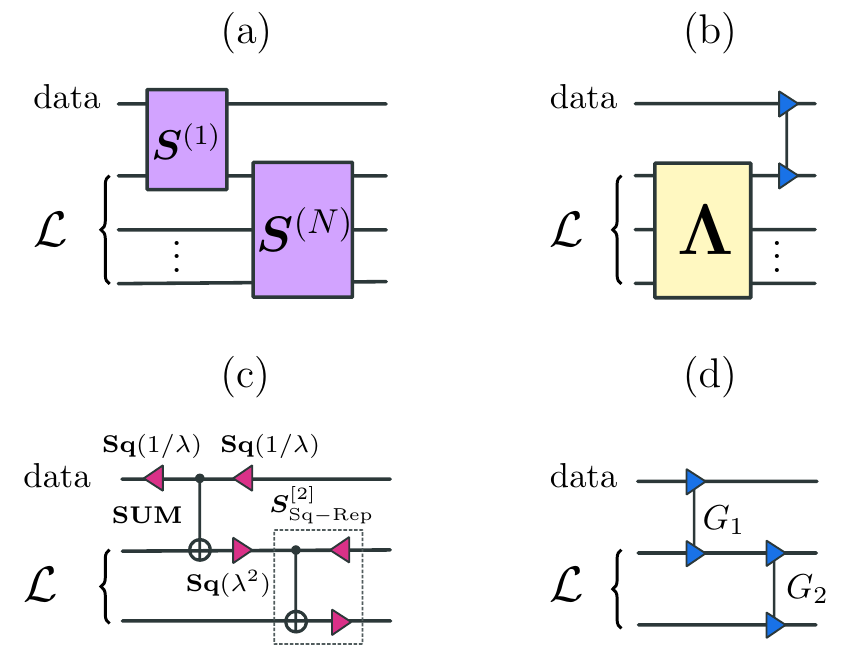}
    \caption{\QZ{Circuit diagrams for concatenated codes. (a) General definition of a concatenated code. Data couples to an ancilla via $\bm S^{(1)}$. Adding more layers leads to further error suppression. (b) Equivalent representation using TMS code reduction. Most operations in a concatenated code can be done offline on the ancillae and the ancillae can be coupled to the data at the end via TMS. (c) Example of a three mode concatenated squeezed repetition code (cf. Ref.~\cite{noh2019encoding}). (d) Example of a three mode concatenated TMS code.}}
    \label{fig:concatenated_codes}
\end{figure}

\subsection{\QZ{Concatenated codes}} 
\label{sec:concatenation_reduction}

\QZ{In general, a concatenated code applies another layer of encoding on the ancilla mode to suppress the noise level on the ancilla modes, prior to utilizing the ancilla modes to protect the data modes, as shown in Fig.~\ref{fig:concatenated_codes}(a). Examples of concatenated codes are the GKP-squeezed-repetition code detailed in Fig.~\ref{fig:concatenated_codes}(c), where repeated sum-gate and single-mode squeezing is applied in each layer of encoding, and the three-mode TMS code shown in Fig.~\ref{fig:concatenated_codes}(d). The purpose of concatenation is to suppress the logical output variance to higher powers of the input AGN $\sigma$, analogous to DV qubit codes. In particular, for $N$ concatenation layers, we expect that $\sigma^{\rm out}\sim\sigma^{N+1}$. Evidence of higher-order error suppression has been shown for squeezed repetition codes~\cite{noh2019encoding,zhuang2020distributed} and TMS codes~\cite{wu2021continuous,zhou2022enhancing}.}

\QZ{We now make an interesting observation on Theorem~\ref{thm:code_redux} as it pertains to concatenated codes}. To protect $N$ data modes with a concatenated code, Theorem~\ref{thm:code_redux} states that we can prepare a multimode ancilla ($M>N$) ``offline'' and then couple the $N$ data modes to the $M$ ancilla modes with only $N$ TMS gates. \QZ{This is shown schematically in Fig.~\ref{fig:concatenated_codes}(b) for $N=1$ data modes.} In other words, most of the operations are pushed to entangling ancilla modes (which generally requires $\sim M^2$ elementary gates). As an example, to protect a single data mode ($N=1$) by a concatenated code that leverages $M$ ancilla modes, the ancilla modes themselves need to interact by some $M$ mode Gaussian unitary, which can be done offline, but the data only needs to couple to \textit{one} of the ancilla modes by a single TMS operation \QZ{in the last step of encoding}; see Fig.~\ref{fig:concatenated_codes}(b) for an illustration. 

\subsection{Example of code reduction} 
\label{sec:example_reduction}
We give a few example code reductions (consequences of Lemma~\ref{lemma:lemma_redux}) by specifying the local symplectic matrices $\bm\Lambda_d$ and $\bm\Lambda_a$ discussed in Lemma~\ref{lemma:lemma_redux}.

In the two-mode case, we reduce the GKP-squeezed-repetition code, $\bm S^{[2]}_{\rm{Sq-Rep}}$ of Eq.~\eqref{eq:rep12}, to a GKP-TMS code. The local symplectic transform $\bm\Lambda_d$ is a single-mode squeezer $\bm\Lambda_d={\rm diag}(\sqrt[4]{2}\lambda, 1/\sqrt[4]{2}\lambda)$ and $\bm \Lambda_a=\bm \Lambda_d^{-1}$. Applying $\bm \Lambda_a$ and $\bm \Lambda_d$, we obtain TMS with gain ${G=(\sqrt{2}+1)/2}$. Hence, the two-mode squeezed repetition code is equivalent to a TMS code of fixed gain $G=(\sqrt{2}+1)/2$ and an ancilla rectangular GKP (with dimensions specified by the squeezing parameter $\lambda$).

Consider one data mode and two ancilla modes. We show that a downward staircase concatenation of the TMS code \QZ{(as sketched in Fig.~\ref{fig:concatenated_codes}(d))}~\cite{wu2021continuous} $\bm S_{G_1,G_2}^{\rm dwn}= (\bI_2 \oplus \bm S_{G_2})(\bm S_{G_1} \oplus \bI_2)$ is equivalent to an upward staircase concatenation $\bm S_{G_a,G_{12}}^{\rm up}=(\bm S_{G_{12}} \oplus \bI_2 ) (\bI_2 \oplus \bm S_{G_a})$ with to be determined gains $G_{a}$ and $G_{12}$~\footnote{The phrase downward staircase refers to the fact that we couple the first and second modes, then the second and third modes etc., whereas an upward staircase starts from the bottom mode and goes to the top.}. The data mode is already in normal form, hence $\bm \Lambda_d=\bm I_2$. The local symplectic transform on the ancilla modes is given by an inverse TMS transformation, $\bm \Lambda_a= \bm S_{G_a}^{-1}$, with gain ${G_a={G_2 G_1}/({1+G_2(G_1-1)})}$. Applying $\bm \Lambda_d$ and $\bm \Lambda_a$ then reduces the staircase encoding to TMS between the first and second modes with gain $G_{12}=1+G_2(G_1-1)$.

As a final example, we reduce the three-mode squeezed-repetition code introduced in Ref.~\cite{noh2019encoding} \QZ{(see sketch in Fig.~\ref{fig:concatenated_codes}(c))}, which leads to cubic noise suppression for the output standard deviation. The squeezed-repetition code has encoding matrix
\begin{equation}
   \bm S^{[3]}_{\rm{Sq-Rep}}=
   \begin{pmatrix}
    1/\lambda^2 & 0 & 0 & 0 & 0 & 0 \\
    0 & \lambda^2 & 0 & -\lambda & 0 & 0 \\
    1 & 0 & \lambda & 0 & 0 & 0 \\
    0 & 0 & 0 & 1/\lambda & 0 & -\lambda \\
    \lambda^2 & 0 & \lambda^3 & 0 & \lambda & 0 \\
    0 & 0 & 0 & 0 & 0 & 1/\lambda
    \end{pmatrix};
\end{equation}
see Fig.~\ref{fig:concatenated_codes}(c) for a circuit diagram. By squeezing the data mode via $\bm \Lambda_d = \textbf{Sq}(\lambda (\lambda^2+\lambda^4)^{\frac{1}{4}})$ and applying a two-mode ancilla transformation $\bm \Lambda_a$, which is lengthy to show, we end up with TMS between the first and the second mode with gain $G_{12}=(\sqrt{1/\lambda^2+1}+1)/2$.

\QZ{In our analyses of this section, we have considered geometric mean as a performance metric. While such a choice is motivated by information-theoretical roots of the metric, some generalization of the results is possible. First, while the sufficiency and optimality of $\mathsf{TMS}^\otimes(\mathcal{L})$ in Theorem~\ref{thm:code_redux} and Lemma~\ref{lemma:lemma_redux} are derived with geometric mean of error as the metric, they hold for any metric that is invariant under local symplectic transformations. On the other hand, it is generally an open question to what extend some of the results here hold or approximately hold for non-Gaussian additive noise channels. 
}

\QZ{
\section{Code reduction implies no threshold for finite squeezing}
\label{sec:code_reduction_no_threshold}
We prove a no threshold theorem for finite squeezing that applies to general oscillator-to-oscillator codes based on Gaussian encoding. Interestingly, the no threshold theorem is a direct consequence of our general code reduction to TMS codes.
}

\QZ{
Consider iid AGN noise $\bm\xi=(\bm\xi_d,\bm\xi_a)$ for the displacement errors on $N$ data modes and $M$ ancilla modes. After the encoding (and decoding) transformations $\bm S_{\rm enc}$ ($\bm S_{\rm enc}^{-1}$), the correlated noises are $(\bm x_d, \bm x_a)= {\bm S}_{\rm enc}^{-1} \bm \xi\sim\mathcal{N}(0,\bm V_{\bx})$, where $\bm V_{\bx}=\sigma^2\bm S_{\rm enc}^{-1}\bm S_{\rm enc}^{-\top}$. Since all encoding can be reduced to TMS up to local transformations via Lemma~\ref{lemma:lemma_redux}, we need only consider $\bm V_{\bx}$ as a direct sum of $N$ correlated two-mode blocks (plus an identity block on the remaining $M-N$ ancilla modes), which has $qq$ and $pp$ correlations between data and ancilla arising from two-mode squeezing (but zero $qp$ correlations). Therefore, to analyze general properties of the code, we can focus our attention on one data-ancilla mode pair (say, the $i$th mode pair) and one quadrature (say, the $q$ quadrature) at a time; see Appendix~\ref{app:classical_it} for further details. 
}

\QZ{
Let $q_{a_i}$ be the $(2i-1)$th element of $\bm x_a$ (where $i=1,2,\dots,N$) that is correlated with $q_{d_i}$ of $\bm x_d$ via TMS with gain $G_i$, and let $\Tilde{q}_{d_i}\equiv\Tilde{q}_{d_i}(q_{a_i})$ be the estimation of the data noise given knowledge of the ancilla noise, which we can extract from, e.g., syndrome measurements. Although perfect knowledge of the ancilla noise is not generally available, we assume that it is in order to place an ultimate lower bound. Now from the corollary of Theorem 8.6.6 in Ref.~\cite{cover2006elements}, the estimation variance of a generic random variable $X$, given side information $Y$, is lower bounded by a function of the conditional differential entropy $S(X|Y)$ via $\mathbb{E}[(X-\Tilde{X}(Y))^2]\geq\exp\left[2 S(X|Y)\right]/2\pi e$. In our current setting, $S(q_{d_i}|q_{a_i}) =\ln(\frac{2\pi e\sigma^2}{2G_i-1})/2$, which is limited by the finite squeezing to correlate the noises (see Appendix~\ref{app:classical_it} for a derivation). We point out that an equivalent relation holds for the momenta, $p_{a_i}$ and $p_{d_i}$. Therefore, 
\begin{equation}\label{eq:qestimation_variance}
    \mathbb{E}\left[\big(q_{d_i}-\tilde{q}_{d_i}\big)^2\right]\geq\frac{\sigma^2}{2G_i-1}.
\end{equation}
If the TMS has a finite squeezing level $G_i$, then having a larger number of ancilla modes (or an arbitrary ancilla state) will not further help error correction. This implies a universal no-threshold theorem for a wide variety of codes based on Gaussian encoding---including but not limited to GKP-stabilizer codes.
}

\QZ{
\begin{theorem}[No threshold for finite squeezing]\label{thm:noThreshold} For $N$ data modes and arbitrary number of ancilla modes, the residual error for any oscillators-to-oscillators code using Gaussian encoding is lower bounded by
\begin{align}
\Tr(\bm V_{\rm out})\geq\sum_{i=1}^{N}\frac{ 2\sigma^2}{2G_i-1}
\end{align}
where $\bm V_{\rm out}$ is the covariance matrix for the residual output error of the code, $G_i$ are the two-mode squeezing gains of the encoding after code reduction [see Lemma~\ref{lemma:lemma_redux} and Theorem~\ref{thm:code_redux}], and $\sigma^2$ is the variance of the iid AGN channels.
\end{theorem}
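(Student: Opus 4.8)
The plan is to read off $\Tr(\bm V_{\rm out})$ as the sum, over all $2N$ data quadratures, of the residual estimation-error variances, and then lower-bound each such variance by the entropic estimation bound already established in Eq.~\eqref{eq:qestimation_variance}. The key realization is that, after correction, the residual displacement on the $q$-quadrature of data mode $i$ is precisely the estimation error $q_{d_i}-\tilde q_{d_i}$ (and analogously $p_{d_i}-\tilde p_{d_i}$ for momentum), so these variances are exactly the diagonal entries of $\bm V_{\rm out}$ and the trace simply adds them.

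First I would invoke Lemma~\ref{lemma:lemma_redux} to reduce the encoding to $\bigoplus_{i=1}^N\bm S_{G_i}\oplus\bm I_{2(M-N)}$, so that the propagated noise covariance $\bm V_{\bx}=\sigma^2\bm S_{\rm enc}^{-1}\bm S_{\rm enc}^{-\top}$ splits into $N$ independent two-mode TMS blocks—each correlating one data quadrature with its partner ancilla quadrature at gain $G_i$, with vanishing $qp$ correlations—plus a decoupled identity block on the remaining $M-N$ ancilla modes. Because this identity block is statistically independent of the data, those extra ancilla carry no side information; this is exactly the mechanism by which enlarging the ancilla (e.g.\ through concatenation) cannot improve error correction beyond what the fixed gains $G_i$ permit, and it is why $M$ does not appear in the final bound. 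Within each block, the absence of $qp$ correlations means the best estimate of $q_{d_i}$ draws only on $q_{a_i}$, reducing the multimode problem to $2N$ independent single-variable estimation tasks.

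Granting the decoder ideal (non-modular) access to the ancilla noise—an idealization that can only decrease the residual relative to any realizable GKP syndrome-based correction—and applying Eq.~\eqref{eq:qestimation_variance} quadrature by quadrature gives $\mathbb{E}[(q_{d_i}-\tilde q_{d_i})^2]\geq\sigma^2/(2G_i-1)$ together with the identical momentum bound. Summing the $2N$ diagonal contributions then yields
\begin{align}
\Tr(\bm V_{\rm out})=\sum_{i=1}^N\!\Big(\mathbb{E}\big[(q_{d_i}-\tilde q_{d_i})^2\big]+\mathbb{E}\big[(p_{d_i}-\tilde p_{d_i})^2\big]\Big)\geq\sum_{i=1}^N\frac{2\sigma^2}{2G_i-1},
\end{align}
which is the claimed inequality.

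I expect the main obstacle to be the careful justification of the decoupling rather than any hard calculation: one must confirm that the direct-sum structure makes the $N$ blocks statistically independent (so cross-block information is useless) and that the vanishing $qp$ correlations legitimately isolate each quadrature, allowing the single-variable entropy bound to be applied block by block. A secondary point to make precise is the claim that handing the decoder full knowledge of the ancilla noise only lowers the residual compared to the actual modular syndrome information, so the inequality holds uniformly over all physically realizable decoders—and, crucially, that the $M-N$ spectator modes contribute nothing, which is the concrete content of the no-threshold statement.
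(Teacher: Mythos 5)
Your proposal is correct and follows essentially the same route as the paper: the paper's proof also invokes the code reduction of Lemma~\ref{lemma:lemma_redux}, grants the decoder ideal knowledge of the ancilla noise, applies the entropic estimation bound of Eq.~\eqref{eq:qestimation_variance} quadrature by quadrature (using the block structure with only $qq$ and $pp$ correlations), and sums the $2N$ contributions to bound $\Tr(\bm V_{\rm out})$. The extra care you take in justifying the statistical independence of the spectator ancilla modes and the quadrature-wise decoupling is exactly the content the paper places in the discussion preceding the theorem rather than in the proof itself.
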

\begin{proof}
    The proof follows by summing over the individual variances (left hand side) of Eq.~\eqref{eq:qestimation_variance}, which is less than or equal to the trace of the residual output covariance matrix, $\bm V_{\rm out}$. The factor of two is due to the fact that the $q$ and $p$ quadratures of the $i$th mode contribute equally to the sum due to the structure of two-mode squeezing.
\end{proof}
}

\QZ{
If we place a tolerance on the output error $\varepsilon\geq\Tr(\bm V_{\rm out})$, Theorem~\ref{thm:noThreshold} implies that the (average) gain $G$ must scale as $G\sim N\sigma^2/\varepsilon$ for the error to stay below tolerance, which is independent of the number of ancilla modes $M$ used in the code. Thus, we cannot make $\varepsilon$ arbitrarily small with a finite amount of squeezing even if we increase the number of ancilla modes; this is the essence of the no-threshold theorem for oscillator-to-oscillator codes. 
}

\QZ{
Our proof follows from the general code reduction Theorem~\ref{thm:code_redux} (see also Lemma~\ref{lemma:lemma_redux}) and a simple, classical data-processing argument. Furthermore, our result has a broader scope than a similar no-threshold result of Ref.~\cite{hanggli2021oscillator} based on GKP-stabilizer codes and maximum-likelihood decoding, as we do not require the ancilla modes to be prepared in GKP states and the decoding strategy does not enter into our proof. The only caveat here is the assumption of iid noise across all modes (likewise in~\cite{hanggli2021oscillator}).
}

\section{General GKP lattices and minimum mean square estimation}\label{sec:gkp_lattices_mmse}

Before exploring code designs, we review general GKP lattices and derive the minimum mean square estimation (MMSE) that minimizes $\bar\sigma_{ \rm RMS}^2$ in Eq.~\eqref{eq:sigma_RMS}. 

\subsection{GKP lattices} 
Consider a set of vectors $\{\bm\lambda_K^{\mathcal{L}}\}_{k=1}^{2N}$, where $\bm\lambda_K^{\mathcal{L}}\in\mathbb{R}^{2N}$, that generate a rigid phase-space lattice $\mathcal{L}\subset\mathbb{R}^{2N}$. We define the `stabilizers' (formed by displacement operators) of the lattice as ${S}_K^{\mathcal{L}}\equiv{D}_{\bm\lambda_K^{\mathcal{L}}}$, such that
\begin{equation}\label{eq:stabilizer_comm_general}
    [{S}_K^{\mathcal{L}}, {S}_J^{\mathcal{L}}]=0,\quad \forall~ K,J.
\end{equation}
The commutator relation is equivalent to the condition $\bm\lambda_K^{\mathcal{L}\,\top}\bm\Omega\bm\lambda_J^{\mathcal{L}}=2\pi n_{KJ}$, where $\bm\Omega=
\bm{I}_N\otimes\big(\begin{smallmatrix}
0 & 1\\
-1 & 0
\end{smallmatrix}\big)$ is the $N$-mode symplectic form and $n_{KJ}\in\mathbb{Z}$ is an integer (see Appendix~\ref{app:gauss_evol} and Appendix~\ref{app:gkp_lattice}). A lattice $\mathcal{L}$ with basis vectors $\{\bm\lambda_K^{\mathcal{L}}\}$ which satisfy this condition is known as a \textit{symplectically integral} lattice~\cite{baptiste2022multiGKP}. By virture of the commutator~\eqref{eq:stabilizer_comm_general}, the stabilizers form a group $\mathcal{S}_\mathcal{L}\equiv\langle{S}_1^{\mathcal{L}},\dots,{S}_{2N}^{\mathcal{L}}\rangle$. We define the lattice state $\ket{\mathcal{L}}$ as the $+1$ eigenstate of all elements in $\mathcal{S}_\mathcal{L}$, i.e. `$\mathcal{S}_\mathcal{L}\ket{\mathcal{L}}=\ket{\mathcal{L}}$'. This state is periodic in the $2N$ dimensional phase space of the modes. See Fig.~\ref{fig:lattice_examples} for an illustration of two-dimensional (symplectically integral) lattices.

From the lattice (column) vectors, one can construct a $2N\times2N$ generator matrix,
\begin{equation}
    \bm M=
    \begin{pmatrix}
        \bm\lambda_1^{\mathcal{L}} & \bm\lambda_2^{\mathcal{L}} & \dots & \bm\lambda_{2N}^{\mathcal{L}}
    \end{pmatrix}.
    \label{eq:lattice-generator-matrix}
\end{equation}
Then, the set of conditions $\bm\lambda_K^{\mathcal{L}\,\top}\bm\Omega\bm\lambda_J^{\mathcal{L}}=2\pi n_{KJ}$ can be compactly written as
\begin{equation}\label{eq:integral_A}
    \bm M^\top\cdot\bm\Omega\cdot\bm M=2\pi\bm A,
\end{equation}
where $\bm A$ is an anti-symmetric matrix with only integer elements.
We can generate the same lattice with different choices of basis vectors. For instance, given a generator matrix $\bm M$ that generates a lattice $\mathcal{L}$, one can choose a unimodular matrix $\bm N$ (i.e., integer matrix with $\det\bm N=1$) such that $\bm M^\prime = \bm M\bm N$ also generates $\mathcal{L}$.

\QZ{We now make the distinction between computational GKP states and canonical GKP states precise. In general, if we want to encode a qudit with $d$-levels into a system of $M$ oscillators, the encoded Hilbert space stabilized by $\mathcal{S}_{\mathcal{L}}$ will be related to the generator matrix $\bm M$ via $d=\sqrt{\det \bm M/(2\pi)^{M}}$~\cite{gkp2000,baptiste2022multiGKP,eisert2022lattice}.\footnote{The factor $1/(2\pi)^M$ is due to our definition of the generator matrix $\bm M$, which differs from convention in Refs.~\cite{gkp2000,baptiste2022multiGKP,eisert2022lattice} by a constant factor $\sqrt{2\pi}$.} In this paper, we are focusing on the $d=1$ case which only supports a single code state. This is commonly referred to as canonical GKP state~\cite{noh2020o2o} (hence the moniker ``canonical GKP lattice state'' for generic lattices) or the sensor state in the square lattice case~\cite{duivenvoorden2017}.}

\begin{figure}
    \centering
    \includegraphics[width=\linewidth]{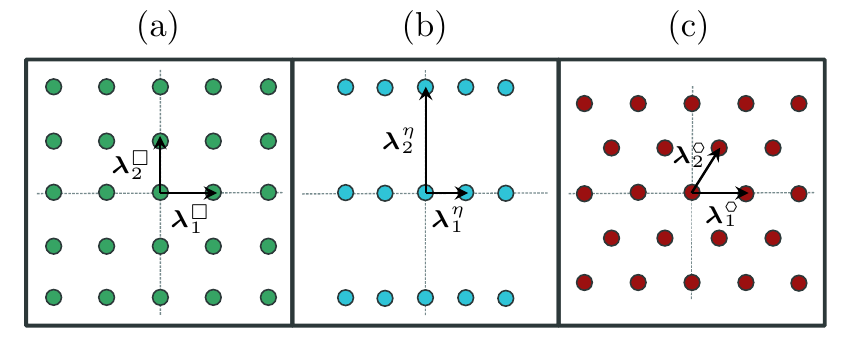}
    \caption{Illustration of two-dimensional lattices with basis vectors denoted: (a) Square, (b) Rectangular, (c) Hexagonal. Any two-dimensional symplectically integral lattice can be generated from the square lattice by rotation and squeezing.}
    \label{fig:lattice_examples}
\end{figure}

\subsection{Minimum mean square error (MMSE) estimation} 
Here we consider MMSE for corrective displacements, which is constructed to minimize the RMS error of Eq.~\eqref{eq:sigma_RMS} (as RMS error is the square root of mean square error). The encoding symplectic transform $\bm{S}_{\rm{enc}}$ correlates the additive noise between the data and the ancilla. Let the covariance matrix of the AGN be $\bm{V}_\xi$. The error correlations are described by
\begin{align}
    \bm V_{\bx} = \bm{S}_{\rm{enc}}^{-1}\bm{V}_\xi\bm{S}_{\rm{enc}}^{-\top}.
    \label{eq:Noise-CM-transform}
\end{align}
The additive noise on the ancilla $\bx_a$ can be extracted by measuring the stabilizers $S_J^{\mathcal{L}}$. This leads to an error syndrome $\bm s \in [-\sqrt{\pi/2},\sqrt{\pi/2}]^{2M}$, from which we can estimate the additive noise on the data $\bx_d$. For a general lattice with generator matrix $\bm M$, we have $\bm s = R_{\sqrt{2\pi}}(\bm M^\top \bm\Omega\bx_a)$, where $R_{\sqrt{2\pi}}$ is the element-wise modulus of $\sqrt{2\pi}$ that associates the vector $\bm M^\top \bm\Omega\bx_a$ to the nearest lattice point within a region $[-\sqrt{\pi/2},\sqrt{\pi/2}]^{2M}$ of that point~\cite{noh2020o2o}; see also Appendix~\ref{app:measurements}.

The joint probability density distribution (PDF) of the data and the error syndrome, $P(\bx_d,\bm s)$, is not a Gaussian distribution but a sum of Gaussian distributions. The MMSE minimizes
$\bar\sigma_{ \rm RMS}^2$ in Eq.~\eqref{eq:sigma_RMS}, and the estimator can be derived from the conditional distribution $P(\bx_d|\bm s)$ via
$
    \bm f_{\rm MMSE}(\bm s)= \int_{\mathbb{R}^{2N}}\diff{\bx_d}\;\;  \bx_d P(\bx_d|\bm s)
$
leading to the following theorem (see Appendix~\ref{app:error_estimation} for a derivation). 
\begin{theorem}\label{thm:fMMSE}
For a GKP-stabilizer code with GKP lattice state $\mathcal{L}$ described by generator matrix $\bm M$, the minimum mean square estimation (MMSE) for an error syndrome $\bm s$ is given by
\begin{align}
    &\bm f_{\rm MMSE}(\bm s) \nonumber
    \\
   & = -\frac{\sum_{\bm n} \bm V_d^{-1}\bm V_{da}(\bm s-\bm n\sqrt{2\pi})e^{(\sqrt{2\pi} \bm n^{\top} \bm V_{d|a}\bm s-\pi \bm n^{\top} \bm  V_{d|a}\bm n)} }
    {\sum_{\bm m} e^{(\sqrt{2\pi} {\bm m}^{\top} \bm V_{d|a}\bm s-\pi {\bm m}^{\top} \bm V_{d|a}\bm m)}},
    \label{eq:fMMSE_main}
\end{align}
where $\bm m, \bm n\in\mathbb{Z}^{2M}$ sum over all integer vectors, and the matrices $\bm V_{da}$, $\bm V_d$ and $\bm V_a$ are defined through the equation
\begin{align}
    \begin{pmatrix}
    \bm V_d & \bm V_{da}\\
    \bm V_{da}^T & \bm V_a
    \end{pmatrix}^{-1}\equiv 
    (\bI_{2N} \oplus {\bm M^\top \bm \Omega}) \bm V_{\bx} (\bI_{2N} \oplus {(\bm M^\top \bm \Omega)} ^{\top})
    .
    \label{eq:MMSE_submatrices}
\end{align}
and $\bm V_{d|a}=\bm V_a-\bm V_{da}^T \bm V_d^{-1}\bm V_{da}$.
\end{theorem}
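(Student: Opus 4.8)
The plan is to reduce the whole statement to the textbook Gaussian conditioning formulas, applied to a \emph{discrete mixture} that is forced on us by the modular (folded) nature of the syndrome. First I would introduce the pre-modular (``un-folded'') syndrome $\bm y \equiv \bm M^\top\bm\Omega\,\bx_a \in \mathbb{R}^{2M}$, so that the observed syndrome is $\bm s = R_{\sqrt{2\pi}}(\bm y)$. Since $(\bx_d,\bx_a)\sim\mathcal{N}(\bm 0,\bm V_{\bx})$ and $\bm y$ is a linear image of $\bx_a$ (with $\bm M^\top\bm\Omega$ invertible because $\bm M$ generates a full-rank lattice), the pair $(\bx_d,\bm y)=(\bI_{2N}\oplus\bm M^\top\bm\Omega)(\bx_d,\bx_a)$ is jointly zero-mean Gaussian, and its covariance is the transformed matrix appearing on the right of Eq.~\eqref{eq:MMSE_submatrices}. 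The crucial bookkeeping point is that Eq.~\eqref{eq:MMSE_submatrices} defines $\bm V_d,\bm V_{da},\bm V_a$ as blocks of the \emph{inverse} of that covariance, i.e.\ of the \emph{precision} matrix of $(\bx_d,\bm y)$; getting this convention right is what makes the later formulas come out clean, and it is the easiest place to slip a sign.

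Next I would convert the modular reduction into an aliasing sum. The event $\{\bm s=\bm s_0\}$ is exactly $\{\bm y\in\bm s_0+\sqrt{2\pi}\,\mathbb{Z}^{2M}\}$, and the folding map $\bm y\mapsto R_{\sqrt{2\pi}}(\bm y)$ is measure preserving on each period cell (unit Jacobian). Hence the joint density of the data noise and the observed syndrome is the lattice sum of the Gaussian density over the coset, $P(\bx_d,\bm s)=\sum_{\bm n\in\mathbb{Z}^{2M}} P\big(\bx_d,\,\bm y=\bm s-\sqrt{2\pi}\,\bm n\big)$. Because the Gaussian weight decays like $e^{-\pi\,\bm n^\top\bm V_{d|a}\bm n}$, the sum converges absolutely and I may interchange it with the $\bx_d$-integration by Fubini. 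This interchange (plus positive-definiteness of $\bm V_{d|a}$) is the only genuinely analytic point, and I expect it to be the main obstacle to stating the result rigorously rather than any clever manipulation.

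I would then invoke the standard precision-matrix conditioning identities for the Gaussian $P(\bx_d,\bm y)$: the conditional mean is $\mathbb{E}[\bx_d\mid\bm y]=-\bm V_d^{-1}\bm V_{da}\,\bm y$, and the marginal density of $\bm y$ is $P(\bm y)\propto\exp[-\tfrac12\bm y^\top\bm V_{d|a}\bm y]$, whose precision is precisely the Schur complement $\bm V_{d|a}=\bm V_a-\bm V_{da}^\top\bm V_d^{-1}\bm V_{da}$ of the $d$-block. Writing the MMSE estimator as the ratio $\bm f_{\rm MMSE}(\bm s)=\int\bx_d\,P(\bx_d,\bm s)\,\mathrm{d}\bx_d\big/\int P(\bx_d,\bm s)\,\mathrm{d}\bx_d$ and using $\int\bx_d\,P(\bx_d,\bm y)\,\mathrm{d}\bx_d=P(\bm y)\,\mathbb{E}[\bx_d\mid\bm y]$ term by term, each summand of the numerator contributes $-\bm V_d^{-1}\bm V_{da}(\bm s-\sqrt{2\pi}\,\bm n)$ weighted by $\exp[-\tfrac12(\bm s-\sqrt{2\pi}\bm n)^\top\bm V_{d|a}(\bm s-\sqrt{2\pi}\bm n)]$, while the denominator carries the same weights without the prefactor.

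Finally I would expand the quadratic exponent, $-\tfrac12(\bm s-\sqrt{2\pi}\bm n)^\top\bm V_{d|a}(\bm s-\sqrt{2\pi}\bm n)=-\tfrac12\bm s^\top\bm V_{d|a}\bm s+\sqrt{2\pi}\,\bm n^\top\bm V_{d|a}\bm s-\pi\,\bm n^\top\bm V_{d|a}\bm n$, and observe that the $\bm n$-independent factor $\exp[-\tfrac12\bm s^\top\bm V_{d|a}\bm s]$ is common to every numerator term and to every denominator term, so it cancels in the ratio. What remains is exactly Eq.~\eqref{eq:fMMSE_main}. The whole argument is therefore a change of variables, a folding (aliasing) sum, and standard Gaussian conditioning; the conceptual subtlety is entirely in the precision-versus-covariance identification of Step~1 and the convergence/interchange justification of Step~2, and the rest is routine algebra.
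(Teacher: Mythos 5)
Your proposal is correct and follows essentially the same route as the paper's own proof in Appendix~\ref{app:error_estimation}: transform to the unfolded syndrome $\bm y=\bm M^\top\bm\Omega\bx_a$ so that Eq.~\eqref{eq:MMSE_submatrices} gives the precision matrix of $(\bx_d,\bm y)$, write the folded joint density $P(\bx_d,\bm s)$ as an aliasing sum over $\sqrt{2\pi}\,\mathbb{Z}^{2M}$, factor each Gaussian summand into conditional (mean $-\bm V_d^{-1}\bm V_{da}\bm y$) times marginal (precision $\bm V_{d|a}$), and cancel the common $e^{-\frac{1}{2}\bm s^\top\bm V_{d|a}\bm s}$ factor in the ratio. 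The only cosmetic differences are that you make the Fubini interchange and the measure-preserving property of the folding map explicit, where the paper implements the same steps via a delta-function sum and the identity $g(\bm\Sigma,\bm x)=g(\bm L\bm\Sigma\bm L^\top,\bm L\bm x)$.
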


\QZ{As an exmaple, we provide the explicit application of Theorem~\ref{thm:fMMSE} on GKP-TMS code in Appendix~\ref{app:explicit_applying_theorems_estimator}.}

Given the estimator $f_{\rm MMSE}(\bm s)$ above, the residual noise covariance matrix for the data, $\bm V_{\rm out}$, can be evaluated. 
In Appendix~\ref{app:lattice_transform}, we show that the PDF of joint distribution $P(\bx_d,\bm s)$ is invariant under a change of lattice basis ${\bm M\rightarrow\bm M\bm N}$ where $\bm N$ is a unimodular matrix. It follows that the error correction performance with MMSE is invariant under a change of lattice basis as well.

If the noise level is much smaller than $\sqrt{2\pi}$, then $\bm s = R_{\sqrt{2\pi}}(\bm M^\top \bm\Omega \bx_a) \approx \bm M^\top \bm\Omega\bx_a$. In this case, the PDF is close to the Gaussian distribution. \QZ{From Eq.~\eqref{eq:fMMSE_main}, keeping the leading terms $\bm n=0$ and $\bm m=0$ for numerator and denominator, the estimator is close to linear estimation, $\bm f_{\rm{Linear}}(\bs) = -\bm V_d^{-1} \bm V_{da} \bs$.} However, unlike MMSE, linear estimation is not invariant under lattice basis transform. See Appendix~\ref{app:error_estimation} for technical details.

\QZ{Both the linear and MMSE estimators assume knowledge of the covariance matrix of the Gaussian noise. While matrix multiplication and inversion are both required (only once) to derive the estimators, linear estimation involves no summations. On the contrary, MMSE estimation requires summation over two integer vectors of length $2M$. Fortunately, the convergence of the summation is exponential, which makes the evaluation possible, although the cost grows with the number of modes.}

\begin{table*}
    \centering
    \renewcommand{\arraystretch}{2}
    \begin{tabular}{c|c c c c}
    \hline\hline
        $\left(\theta,r\right)$ & $\big(\frac{\pi}{4},\sqrt[4]{3}\big)$ & $\left(0.16\pi,2.095\right)$ &$\left(0.11\pi,3.021\right)$
        &$\left(0.18\pi,3.385\right)$\\\hline
        $\bm N$ & $\normalsize{\big(\begin{smallmatrix}
        1 & 0\\
        0 & 1
        \end{smallmatrix}\big)}$  & $\normalsize{\big(\begin{smallmatrix}
        -2 & -1\\
        1 & 0
        \end{smallmatrix}\big)}$ &
        $\normalsize{\big(\begin{smallmatrix}
        2 & -1\\
        1 & 0
        \end{smallmatrix}\big)}$ 
        &${\normalsize\big(\begin{smallmatrix}
        1 & -2\\
        1 & -1
        \end{smallmatrix}\big)}$\\
        \hline\hline
    \end{tabular}
    \caption{Equivalent representations of a hexagonal lattice. $\bm N$ is a unimodular matrix that relates the lattice basis vectors.}
    \label{tab:hexagonal}
\end{table*}

\section{Optimal single-mode code}
\label{sec:single_mode}

\begin{figure}
    \centering
    \includegraphics[width=\linewidth]{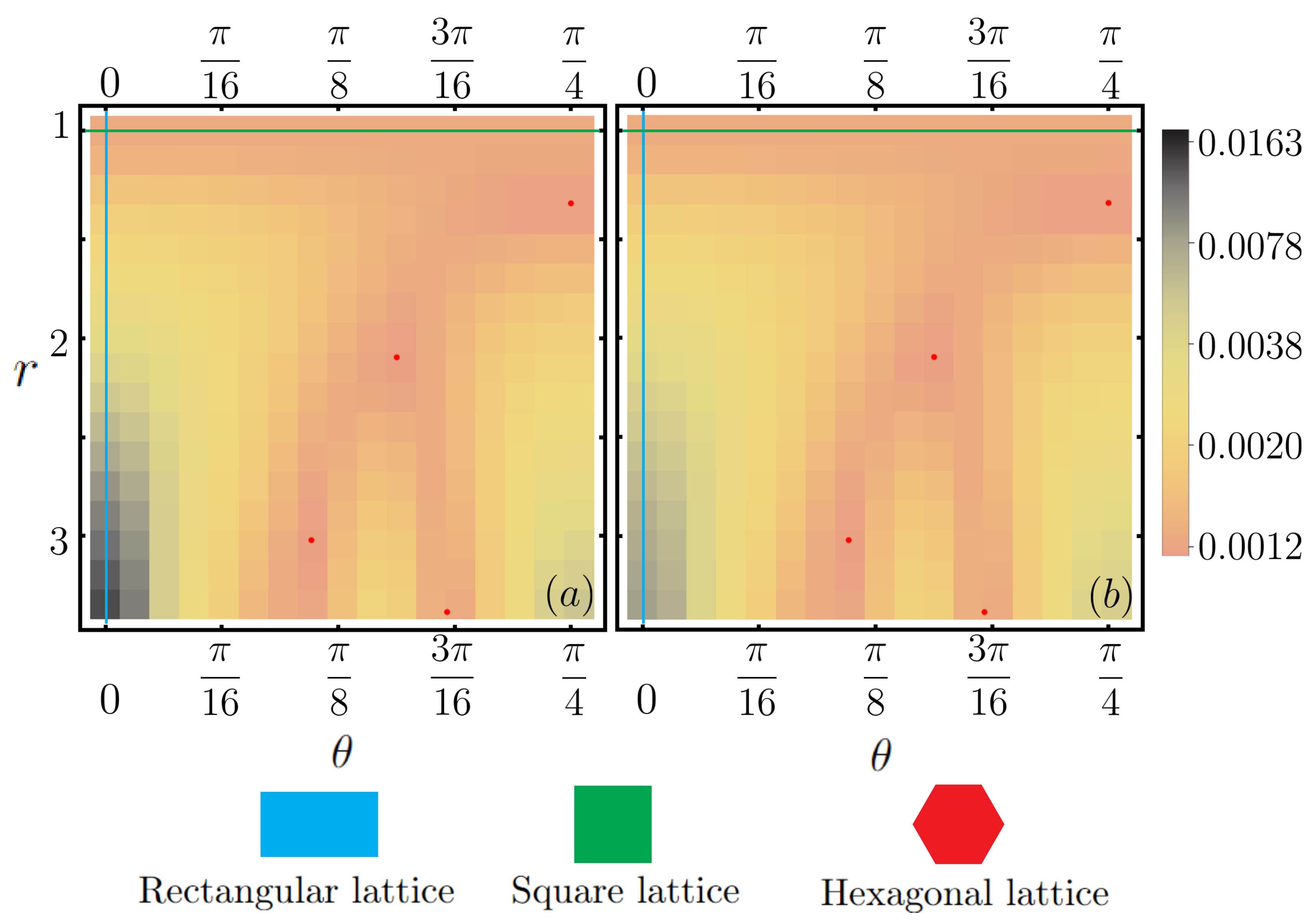}
    \caption{Output noise for a single-mode ($N=M=1$) GKP stabilizer code. Input noise variance is \QZ{$\sigma^2=10^{-2}$}. We optimize the TMS gain $G$ for each point $(r,\theta)$. (a) RMS error $\bar\sigma_{ \rm RMS}^2$, (b) GM error $\bar\sigma_{ \rm GM}^2$. For the square lattice (green line), $\bar\sigma_{ \rm RMS}^2=1.25129(5)\times10^{-3}$and $\bar\sigma_{ \rm GM}^2=1.25129(5)\times10^{-3}$. The four hexagonal lattice points (red dots) have the same output noises of $\bar\sigma_{ \rm RMS}^2=1.15575(5)\times10^{-3}$ for RMS error and $\bar\sigma_{ \rm GM}^2 =1.15575(5)\times10^{-3}$ for GM error. Only the range $\theta\in[0,\pi/4]$ is considered due to symmetry; see Appendix~\ref{app:Symmetry_single_mode}.}
    \label{fig:single_mode_lattice_contour}
\end{figure}

Since all single-mode canonical lattice states can be generated by local symplectic transformations on the square grid (see Appendix~\ref{app:gkp_lattice}), Theorem~\ref{thm:code_redux} and Corollary~\ref{coro:optimality} immediately imply the following.

\begin{theorem}\label{thm:tms_1mode}
For single-mode data and ancilla undergoing iid AGN, the TMS code $\mathsf{TMS}(\mathcal{L})$, with gain $G$ and an ancillary lattice $\mathcal{L}\subset\mathbb{R}^2$ generated from the square grid by a local symplectic transformation, is the optimal GKP-stabilizer code in terms of geometric mean error.
\end{theorem}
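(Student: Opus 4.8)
The plan is to read the statement as a direct specialization of the machinery already established, namely Corollary~\ref{coro:optimality} (itself a consequence of Theorem~\ref{thm:code_redux}), to the case $N=M=1$. First I would instantiate the corollary at $N=1$: it asserts that the optimal $N$-mode GKP-stabilizer code is a $\mathsf{TMS}^\otimes(\mathcal{L})$ code with optimized gains $\{G_i\}_{i=1}^N$ and an optimized (possibly multimode) ancillary lattice $\mathcal{L}$. For a single data mode the direct product collapses to a single TMS operation with one gain $G\equiv G_1$, so the only remaining freedom is the choice of the single-mode ancillary lattice $\mathcal{L}\subset\mathbb{R}^2$. Thus the entire content to be supplied is a characterization of the admissible single-mode canonical lattices.

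The one genuine step is therefore to show that every single-mode canonical lattice is the image of the square grid under a local symplectic transformation. I would argue directly from the generator matrix $\bm M$ of Eq.~\eqref{eq:lattice-generator-matrix}. The canonical ($d=1$) condition fixes the fundamental-cell volume to $|\det\bm M|=2\pi$. In two dimensions one has the identity $\bm M^\top\bm\Omega\bm M=\det(\bm M)\,\bm\Omega$, so the symplectic-integrality condition~\eqref{eq:integral_A} is automatically satisfied with $\bm A=\bm\Omega$. Writing $\bm M=\sqrt{2\pi}\,\bm S$ then gives $\bm S^\top\bm\Omega\bm S=\bm\Omega$, i.e. $\bm S\in{\rm Sp}(2,\mathbb{R})={\rm SL}(2,\mathbb{R})$, and hence $\mathcal{L}=\bm S\cdot(\sqrt{2\pi}\,\mathbb{Z}^2)$ is exactly the square grid acted on by the local symplectic map $\bm S$. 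This is the claim appearing in the sentence preceding the theorem, and it is the crux on which the specialization rests.

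Combining the two steps, optimization over all single-mode GKP-stabilizer codes reduces to optimization over the gain $G$ together with $\bm S\in{\rm Sp}(2,\mathbb{R})$ defining $\mathcal{L}$, which is precisely what $\mathsf{TMS}(\mathcal{L})$ denotes. To make the optimization finite-dimensional I would invoke the Euler (Bloch--Messiah) decomposition $\bm S=R(\phi)\,{\rm diag}(r,1/r)\,R(\theta)$; since the iid AGN covariance $\sigma^2\bm I$ is rotation-invariant, a global phase rotation acting on data, ancilla, and lattice together leaves the geometric mean error unchanged, so the leftmost rotation $R(\phi)$ is irrelevant and the lattice is captured by the two parameters $(\theta,r)$ used in Fig.~\ref{fig:single_mode_lattice_contour} (and further reduced by the residual symmetries of Appendix~\ref{app:Symmetry_single_mode}).

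I do not expect a substantive obstacle: the theorem is essentially a corollary, and the only place requiring care is confirming that the single-mode canonical-lattice family is \emph{exactly} the symplectic orbit of the square grid---no more and no less---so that invoking Corollary~\ref{coro:optimality} neither over- nor under-counts the codes. The mild subtlety is bookkeeping the lattice-basis redundancy (the $\mathrm{SL}(2,\mathbb{Z})$ action $\bm M\to\bm M\bm N$) together with the noise-rotation symmetry, which jointly explain why a two-parameter search suffices; but these only affect the parametrization used for the later numerics and not the validity of the optimality claim itself.
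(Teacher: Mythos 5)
Your proposal is correct and follows essentially the same route as the paper: the paper's proof is exactly the specialization of Theorem~\ref{thm:code_redux} and Corollary~\ref{coro:optimality} to $N=M=1$, combined with the fact that every single-mode canonical lattice lies in the symplectic orbit of the square grid. The only difference is that where the paper outsources that lattice fact to Appendix~\ref{app:gkp_lattice} (and ultimately to a citation of Conrad--Eisert--Arzani), you prove it directly via the two-dimensional identity $\bm M^\top\bm\Omega\bm M=\det(\bm M)\,\bm\Omega$ together with $\det\bm M=2\pi$, which is a clean, self-contained justification (modulo the trivial bookkeeping of choosing the column ordering so that $\det\bm M>0$).
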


Therefore, to obtain the best two-mode GKP-stablizer code, one simply needs to optimize the local Gaussian unitary and TMS gain, alongside choosing an optimal estimator $\bm f$ for the classical decoding strategy. Deriving the best estimator $\bm f$ to minimize the GM error is nontrivial. On the other hand, we can obtain an upper bound on the GM error from the RMS error since $\bar\sigma_{ \rm RMS}^2\ge  \bar\sigma_{ \rm GM}^2$. Suppose that one can optimize the RMS error (which is possible with MMSE of Theorem~\ref{thm:fMMSE}) while showing that the GM error is very close to $\bar\sigma_{ \rm RMS}^2$; then, this is an indication that the code is optimal for GM error as well.

For the single-mode case, any ancillary lattice $\mathcal{L}$ can be generated by applying a single-mode Gaussian unitary $U_{\bm \Lambda_a}$ on the square GKP lattice (see Fig.~\ref{fig:lattice_examples} for an illustration). The canonical square GKP lattice can be written down in the $q$ or $p$ quadrature bases as
\begin{equation}
\ket{\square}\propto\sum_{n\in\mathbb{Z}}\ket{q=n\ell}=\sum_{n\in\mathbb{Z}}\ket{p=n\ell},
\end{equation}
which is a translation invariant square lattice in the single-mode phase space $\mathbb{R}^2$ with period $\ell=\sqrt{2\pi}$. Moreover, any single-mode Gaussian unitary can be described by a symplectic transform, with the decomposition $\bm R(\phi)\text{\bf Sq}(r)\bm R(\theta)$, where $\bm R(\varphi)$ is a $2\times2$ rotation matrix and $\text{\bf Sq}(r)\equiv {\rm diag}[r,1/r]$ is single-mode squeezing. Due to the symmetry of the AGN, the effect of the last phase rotation $\bm R(\phi)$ applied on the lattice will not change the performance, therefore we parameterize the transform as $\bm \Lambda_a =\text{\bf Sq}(r)\bm R(\theta)$ such that $\ket{\mathcal{L}}=U_{\bm \Lambda_a}\ket{\square}$. As examples, a rectangular GKP state (Fig.~\ref{fig:lattice_examples}b) is given by $\theta=0$ and $r>0$, and a hexagonal GKP state (Fig.~\ref{fig:lattice_examples}c) is given by $\theta=\pi/4$ and $r_{{\tiny\hexagon}}=\sqrt[4]{3}$.

In Fig.~\ref{fig:single_mode_lattice_contour}(a), we plot the contour of the RMS error $\bar\sigma_{ \rm RMS}^2$ for an MMSE decoder optimized over the TMS gain $G$ for each point $(r,\theta)$. We find four equal minimum points for $\bar\sigma_{ \rm RMS}^2$, which turn out to be equivalent lattice representations of the hexagonal lattice as listed in Table~\ref{tab:hexagonal}. Meanwhile, the square lattice has $r=1$ with $\theta$ arbitrary (represented by the green line); the rectangular lattice has $\theta=0$ and $r$ changes the shape of the rectangle (represented by the blue line). The square and rectangular lattices are strictly sub-optimal.

In Fig.~\ref{fig:single_mode_lattice_contour}(b), we plot the GM error $\bar\sigma_{ \rm GM}^2$ in $(r,\theta)$ parameter space for the same optimized gain values of Fig.~\ref{fig:single_mode_lattice_contour}(a). The two subplots are very similar, with some deviations at the left-bottom corner due to the large squeezing of a rectangular lattice which induces asymmetry between $q$ and $p$ quadratures. The hexagonal lattices again minimize the output noise. Moreover, for the hexagonal lattices, $\bar\sigma_{ \rm GM}^2\approx \bar\sigma_{ \rm RMS}^2$ up to our numerical precision, which is a strong indicator that---even if we minimize the GM error instead---the hexagonal lattice is still optimal.

\section{Multimode codes}
\label{sec:multi_mode}

It turns out that all canonical (or ``self-dual'') lattices can be generated from canonical square GKP (consequence of Corollary 1 in Ref.~\cite{eisert2022lattice}). Therefore, Theorem~\ref{thm:tms_1mode} can generalize to the multimode case. However, optimization is still challenging since $|{\rm Sp}(2M,\mathbb{R})|=2M^2 +M$ parameters need to be optimized in general. The search for an optimal GKP lattice suggested by Theorem~\ref{thm:code_redux} and Corollary~\ref{coro:optimality} is therefore difficult. Nevertheless, as we will show in this section with a few examples, going to higher-dimensional lattices may indeed improve the performance of oscillators-to-oscillators codes.

Below, we present our results on multi-mode codes.
We first give a lower bound on the output noise for a general multimode GKP code, then discuss \QZ{break-even points}. Finally, evaluate the performance of \QZ{$N=M=2$ and $N=1,M=2$} multimode GKP stabilizer codes for various lattice configurations (e.g., square, hexagonal, and D4) and estimation strategies (e.g., linear estimation versus MMSE). 

\subsection{Lower bound and AGN \QZ{break-even point}}

By information theoretic arguments (see Appendix~\ref{app:bounds}), we are able to find lower bounds for the RMS and GM errors, $\Bar{\sigma}_{\rm RMS}$ and $\Bar{\sigma}_{\rm GM}$, for a general multimode GKP code, with $M$ ancilla modes and $N$ data modes, in terms of the variances $\sigma_i^2$ of the AGN channels $\bigotimes_i\mathcal{N}_{\sigma_i}$ (Theorem~\ref{thm:lb_logicalnoise}). In particular, for iid AGN, we show that (Corollary~\ref{cor:iid_bound}) $\bar{\sigma}_{\rm RMS}\geq\bar{\sigma}_{\rm GM}\geq\sigma_{\rm LB}$, where
\begin{equation}\label{eq:lower_bound}
    \sigma_{\rm LB}\equiv\frac{1}{\sqrt{e}}\left(\frac{\sigma^{2}}{1-\sigma^{2}}\right)^{\frac{N+M}{2N}}.
\end{equation}
For single-layer codes ($N=M$), there is at best quadratic error suppression, exactly similar to the $N=M=1$ GKP codes discussed in Ref.~\cite{noh2020o2o}. Higher order error suppression can be obtained for codes with $M>N$---with the output standard deviation scaling as $\sim\sigma^{1+\frac{M}{N}}$ per Eq.~\eqref{eq:lower_bound}---in agreement with the results on concatenated codes (for $N=1$ and $M>1$) found in Refs.~\cite{noh2019encoding,noh2020o2o,wu2021continuous,zhou2022enhancing}.

\begin{figure}
    \centering
    \includegraphics[width=.9\linewidth]{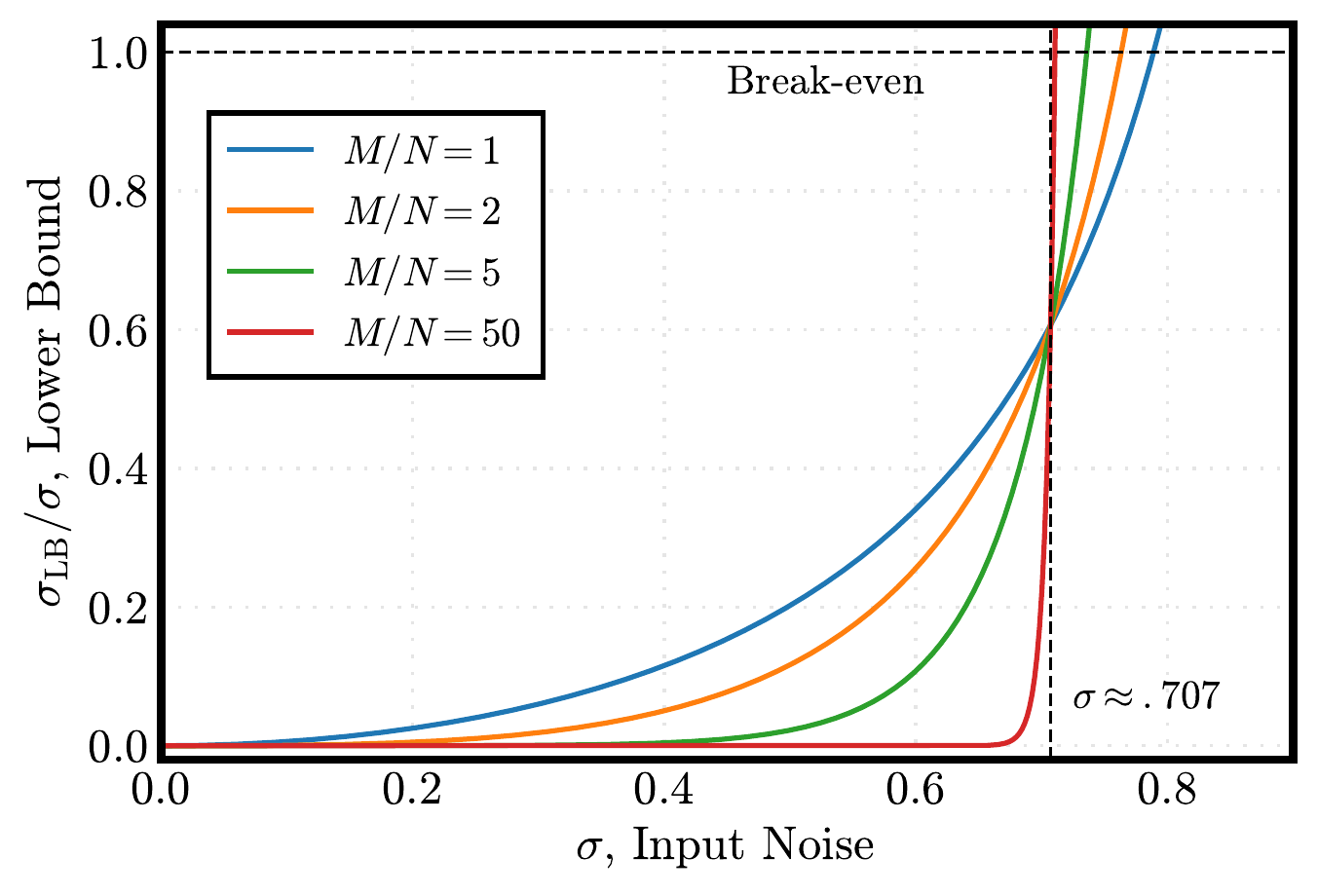}
    \caption{Lower bound $\sigma_{\rm LB}$ [Eq.~\eqref{eq:lower_bound}] on output noise versus initial noise $\sigma$ for multimode GKP codes with increasing $M/N$. A transition occurs at $\sigma=1/\sqrt{2}\approx.707$, which provides an upper bound for the maximal, correctable amount of AGN---i.e., $\sigma^\star\leq1/\sqrt{2}$, where $\sigma^\star$ is the AGN error \QZ{break-even point} for multimode GKP codes.}
    \label{fig:lowerbound}
\end{figure}

In Fig.~\ref{fig:lowerbound}, we plot the ratio $\sigma_{\rm LB}/\sigma$ versus the initial AGN noise $\sigma$, for increasing $M/N$; $\sigma_{\rm LB}/\sigma=1$ corresponds to the break-even point. We observe a sharp transition occurring near $\sigma=1/\sqrt{2}$ which defines an upper bound on the AGN error \QZ{break-even point} $\sigma^\star$ for general multimode GKP codes---i.e., $\sigma^\star\leq1/\sqrt{2}$. Thus, for $\sigma\gtrsim1/\sqrt{2}$, we expect no gain to be had from such codes. This is consistent with the upper bound on the energy-unconstrained quantum capacity of an AGN channel~\cite{albert2018GKPcapacity} (see also Ref.~\cite{wu2021continuous} and Lemma~\ref{lemma:capacity_ub} of Appendix~\ref{app:bounds}) $C_{\mathcal{Q}}(\mathcal{N}_\sigma)\leq\max[0,\log_2(\frac{1-\sigma^2}{\sigma^2})]$, which vanishes as $\sigma\rightarrow1/\sqrt{2}$. Furthermore, since a bosonic pure-loss channel with transmittance $\eta\in[0,1]$ can be converted via pre-amplification to an AGN channel with variance $\sigma^2=1-\eta$~\cite{albert2018GKPcapacity}, the AGN \QZ{break-even point} $\sigma^\star$ then corresponds to a pure-loss \QZ{transmissivity} $\eta^\star=1-\sigma^{\star\,2}\geq 1/2$.

The multimode GKP circuit of Fig.~\ref{fig:code_redux} corresponds to an ($N$ mode) additive non-Gaussian noise channel $\widetilde{\mathcal{N}}$ for the data, with output GM error $\bar{\sigma}_{\rm GM}$. We find a lower bound for the quantum capacity of the channel $\widetilde{\mathcal{N}}$ (see Lemma~\ref{lemma:multimode_lb} in Appendix~\ref{app:bounds}), ${C_{\mathcal{Q}}(\widetilde{\mathcal{N}})\geq\max[0,N\log_2(\frac{1}{e\bar{\sigma}_{\rm GM}^2})]}$. Assuming break-even $\bar{\sigma}_{\rm GM}=\sigma$, this in turn implies a lower bound for the AGN error \QZ{break-even point}, $\sigma^\star\geq1/\sqrt{e}\approx.607$ (thus, $\eta^\star\leq1-1/e\approx.632$). Hence, the \QZ{break-even point} $\sigma^\star$ ($\eta^\star$) for multimode GKP codes lies within $.607\leq\sigma^\star\leq.707$ ($.5\leq\eta^\star\leq.632$). In the next section, we numerically find break-even points for multimode ($N=M=2$) GKP stabilizer codes and MMSE estimation with AGN error \QZ{break-even points} near $1/\sqrt{e}\approx .607$. We remark that linear estimation strategies have a lower \QZ{break-even point} of $.558$~\cite{noh2020o2o}. See Fig.~\ref{fig:output_noise} and below.

\begin{figure*}
    \centering
    \includegraphics[width=.9\linewidth]{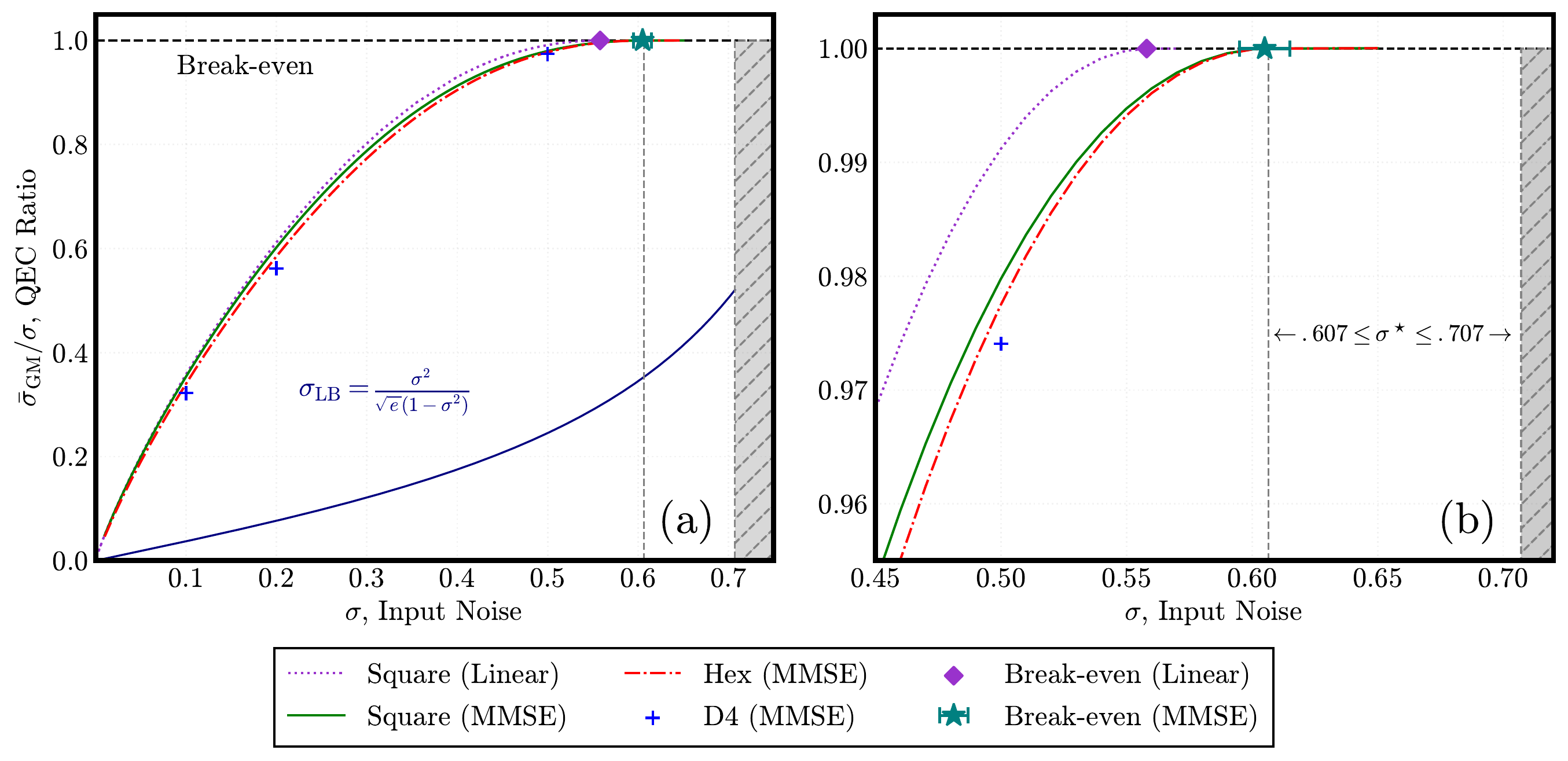}
    \caption{Quantum error correction (QEC) ratio between output noise and input noise of a single-layer, multimode ($N=M=2$) GKP TMS code $\mathsf{TMS}^{\otimes2}(\mathcal{L})$ for different lattices $\mathcal{L}$ (Square, Hexagonal, D4) and estimation strategies (linear, MMSE). The square code with linear estimation (dotted purple) agrees with the original GKP TMS code presented in Ref.~\cite{noh2020o2o}. Grey hatched region is forbidden by information theoretic arguments. \QZ{While the break-even point of the linear estimation with square lattice $\sigma\simeq 0.558$ (purple diamond), the break-even point of the MMSE estimation $\sigma=0.605(5)$ for both square and hexagonal lattices (green star).} \QZ{For $D4$ lattice, we narrowed the break-even point to $0.60-0.61$, which is consistent with square and hexagonal lattices.}}
    \label{fig:output_noise}
\end{figure*}

\subsection{\QZ{Two data modes and two ancilla modes ($N=M=2$)}}
\label{sec:N2M2}
We compare the performance of different initial lattices $\mathcal{L}$ for a single-layer, multimode ($N=M=2$) GKP-TMS code $\mathsf{TMS}^{\otimes2}(\mathcal{L})$. The encoding (decoding) is given by two TMS operations, with each TMS operation coupling one data mode to one ancilla mode; see Fig.~\ref{fig:code_redux}. The gain values of the TMS operations are (numerically) chosen to minimize the RMS error $\Bar{\sigma}_{\rm RMS}^2=\Tr{\bm V_{\rm out}}/4$. We consider MMSE estimation, which optimizes the RMS error and thus sets an upper bound on the optimal GM error. We also consider linear estimation with initial square GKP states, which was analyzed in Ref.~\cite{noh2020o2o}.

In our analysis, we choose three initial GKP lattice states: a direct product of square GKP states, a direct product of hexagonal GKP states, and a D4 lattice which can be generated from two-square GKP states by a two-mode symplectic transformation [see Eq.~\eqref{eq:Sd4} in Appendix~\ref{app:gkp_lattice}]. The D4 GKP state is necessarily entangled; hence, the D4 TMS code $\mathsf{TMS}^{\otimes2}(D_4)$ is a genuine multimode code. On the other hand, since the TMS encoding (decoding) operates on individual data modes and the additive noises are independent, the multimode square and hexagonal TMS codes, $\mathsf{TMS}^{\otimes2}(\square)$ and $\mathsf{TMS}^{\otimes2}(\hexagon)$, are simple extensions of their single-mode counterparts, i.e. $\mathsf{TMS}^{\otimes2}(\square)\simeq\mathsf{TMS}(\square)\otimes\mathsf{TMS}(\square)$ and $\mathsf{TMS}^{\otimes2}(\hexagon)\simeq\mathsf{TMS}(\hexagon)\otimes\mathsf{TMS}(\hexagon)$. 

\begin{table}
 \renewcommand{\arraystretch}{1.2}
 \setlength\tabcolsep{3ex}
    \centering
    \begin{tabular}{c | c c c}
        \hline\hline
         \diagbox[width=20ex,height=6ex, font=\normalsize]{\hspace{1ex}$\mathcal{L}$}{$\sigma$}& $0.1$ & $0.2$ & $0.5$ \\
         \hline
        Square & $0.0354$ &$0.120$ & $0.490$ \\
        Hexagonal & $0.0340$ & $0.117$ & $0.489$ \\
        D4 & $0.0322$ & $0.112$ & $0.487$ \\
        \hline\hline
    \end{tabular}
    \caption{Output geometric mean error $\Bar{\sigma}_{\rm GM}$ for a $N=M=2$ multimode GKP stabilizer code. Various lattices $\mathcal{L}$ (Square, Hexagonal, D4) and three representative values of AGN ($\sigma=0.1,\,0.2,\,0.5$) are considered. MMSE was used. See Fig.~\ref{fig:output_noise}.}
    \label{tab:relative_perform}
\end{table}

We report the output GM error $\Bar{\sigma}_{\rm GM}=\sqrt[4]{\det\bm V_{\rm out}}$ of the codes in Fig~\ref{fig:output_noise}. As shown in the figure, the D4 TMS code $\mathsf{TMS}^{\otimes2}(D_4)$ performs better than the square and hexagonal TMS codes $\mathsf{TMS}^{\otimes2}(\square)$ and $\mathsf{TMS}^{\otimes2}(\hexagon)$, at least for the few data that we generated (corresponding to $\sigma=.1,.2,.5$; see also Table~\ref{tab:relative_perform}), however we expect similar findings for $\sigma\ll1$. Indeed, using $\mathsf{TMS}^{\otimes2}(\square)$ as a benchmark~\cite{noh2020o2o}, we find the relative performance of other lattices to be better for smaller initial noise $\sigma$. For instance, at $\sigma=.01$, we find that $\mathsf{TMS}^{\otimes2}(\hexagon)$ outperforms $\mathsf{TMS}^{\otimes2}(\square)$ by a relative difference of about $5.25\%$. For $\sigma=.1$, $\mathsf{TMS}^{\otimes2}(\hexagon)$ achieves a relative difference of about $3.95\%$, whereas $\mathsf{TMS}^\otimes(D_4)$ achieves a relative difference of about $9.04\%$. For $\sigma=.2$, the relative difference is $2.5\%$ for $\mathsf{TMS}^{\otimes2}(\hexagon)$ and $6.67\%$ for $\mathsf{TMS}^\otimes(D_4)$. Finally, for $\sigma=.5$ (which is near the \QZ{break-even point} $\sigma\approx.607$), the relative difference is $.20\%$ for $\mathsf{TMS}^{\otimes2}(\hexagon)$ and $.61\%$ for $\mathsf{TMS}^\otimes(D_4)$. Numerical values of $\Bar{\sigma}_{\rm GM}$ for the different lattices and $\sigma=.1,.2,.5$ are displayed in Table~\ref{tab:relative_perform}.

We observe that single-layer ($N=M$) codes using MMSE---regardless of the initial lattice $\mathcal{L}$ \QZ{being hexagonal or square}---have a \QZ{break-even point} $\sigma^\star_{\rm MMSE}\approx.605(5)$ (teal star in Fig.~\ref{fig:output_noise}), whereas linear estimation \QZ{on square lattice} has a \QZ{break-even point} of $\sigma^\star_{\rm lin}\approx.558$~\cite{noh2020o2o} (purple diamond). \QZ{For $D4$ lattice, we narrowed the break-even point to between $0.60$ and $0.61$, which is consistent with $\sigma^\star_{\rm MMSE}$. Indeed, the break-even point is likely universal for all lattices with MMSE decoding.} The value $\sigma^\star_{\rm MMSE}\approx.605(5)$ for MMSE agrees with the lower bound on the \QZ{break-even point} ($.607\leq\sigma^\star\leq.707$) for general GKP codes discussed in the previous section. It is an open question whether this can be pushed further or not.

\QZ{
\subsection{One data mode and two ancilla modes ($N=1$ and $M=2$)}
\label{sec:N1M2}
In Sec.~\ref{sec:concatenation_reduction}, we showed that general concatenation can be reduced to ancilla preparation and then presented examples of this reduction with one data mode and two ancilla modes ($N=1,M=2$) in Sec.~\ref{sec:example_reduction}. Indeed, the reduction of encoding shows that a single TMS operation between the lone data mode and only one of the ancilla modes is required (see Fig.~\ref{fig:concatenated_codes}(b)). The other ancilla mode---which need not directly interact with the data---is in general entangled with the ancilla that interacts with the data. Code optimization then reduces to optimizing the ancillary GKP lattice state and the TMS strength. 
}

\QZ{Due to the numerical challenge in two-mode lattice optimization, we focus on the TMS concatenation code and compare a few lattices, similar to Sec.~\ref{sec:N2M2}. We examine the concatenated GKP-TMS code as it represents the best-known code for $N=1$ and $M=2$ (better than the squeezed-repetition code~\cite{wu2021continuous}). In what follows, we consider an ``upward staircase'' TMS concatenation, in which two ancillary modes (prepared in various lattices) are first coupled by a TMS operation of gain $G_2$ followed by a TMS operation of gain $G_1$ that couples the lone data mode to only one of the ancilla; see the inset of Fig.~\ref{fig:N1M2} for an illustration.\footnote{It is worth noting that this ``upward staircase'' concatenation is equivalent to a ``downward staircase'' concatenation [Fig.~\ref{fig:concatenated_codes}(d)], as demonstrated in Sec.~\ref{sec:example_reduction}.} Consequently, we can interpret the code as a scenario where a lone data mode is coupled to a correlated two-mode lattice state. To provide a more comprehensive analysis, we numerically optimize over the TMS gains $G_1,G_2$ for three different types of initial ancillary lattices: square, hexagonal and D4.}

We consider three different levels of initial noise $\sigma=0.1,0.2,0.3$ and numerically optimize the gains $(G_1,G_2)$ to minimize $\bar{\sigma}_{\rm GM}$. The quantum error correction (QEC) ratio $\bar{\sigma}_{\rm GM}/\sigma$ is plotted in Fig.~\ref{fig:N1M2}, and the corresponding optimal values of gain parameters are shown in Table~\ref{tab:N1M2}. With the additional TMS gain $G_2>1$, the product of two square or hexagonal lattices becomes a four-dimensional lattice entangled between the two ancillary modes. Moreover, the four dimensional lattice generated in such way can outperform D4 lattice, especially for the case $\sigma=0.1$ case. Although there are always caveats with numerical optimization, such results indicate that high-dimensional GKP lattice design is very much an open problem for GKP-stabilizer codes.

\begin{figure}
    \centering
    \includegraphics[width=\linewidth]{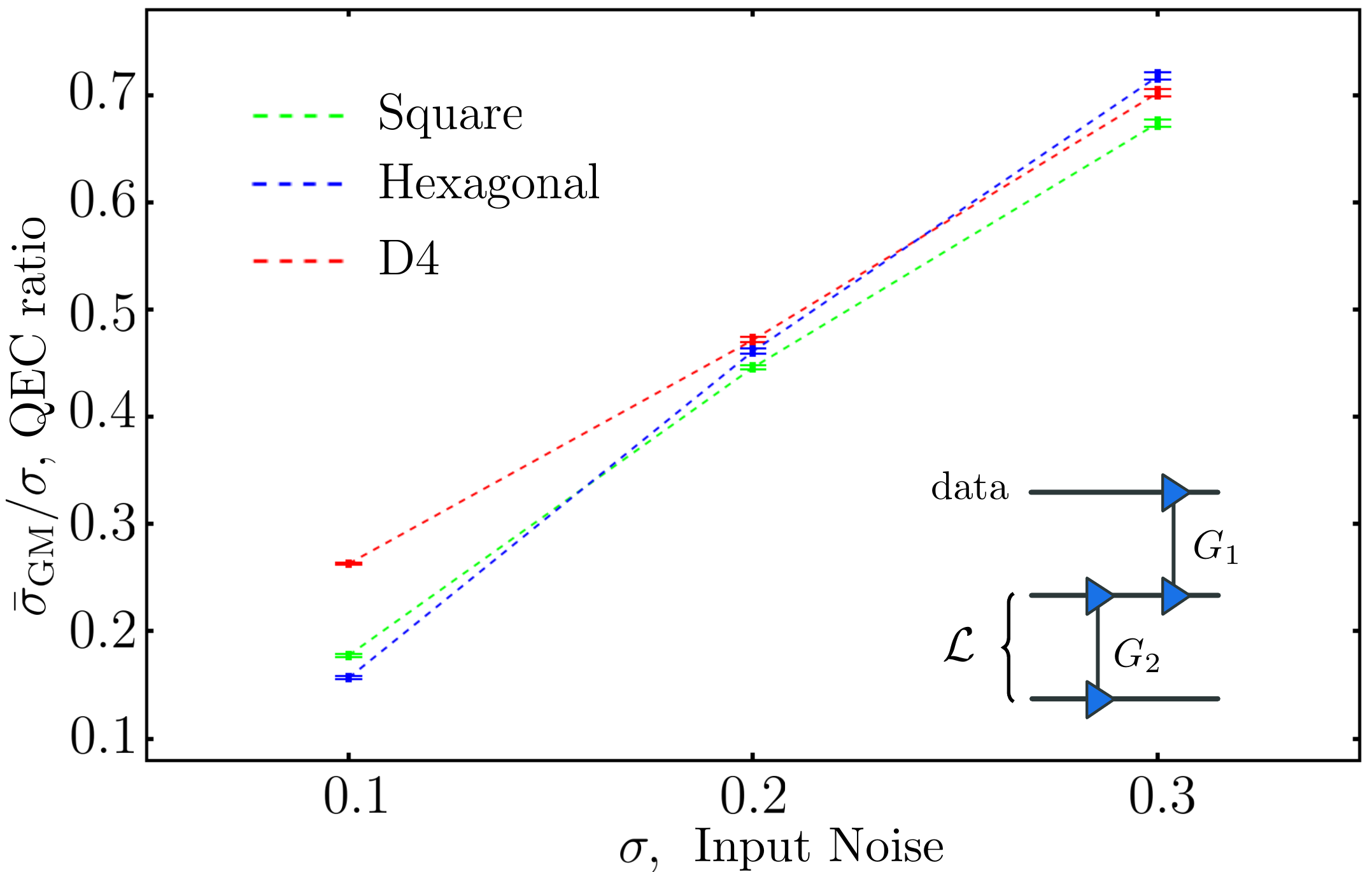}
    \caption{Quantum error correction (QEC) ratio of different lattices for a 1 data + 2 ancilla concatenated TMS code. The fences represent error bars of the local minima. Inset shows the encoding where $G_1$ and $G_2$ are the TMS gains.}
    \label{fig:N1M2}
\end{figure}

\begin{table}
    \centering
    \begin{tabular}{c|c c c}
    \hline\hline
        $\sigma$ & $0.1$ & $0.2$ & $0.3$ \\ \hline
        \rm{Square} & $(18.9,1.14)$ &  $(3.43,1.20)$ & $(1.92,1.20)$  \\ \hline
        \rm{Hexagonal} & $(22.5,2.04)$ & $(3.20,1.11)$ & $(1.75,1.11)$\\ \hline
        \rm{D4} & $(8.46,1.21)$ &$(3.21,1.04)$ & $(1.72,1.01)$\\
        \hline \hline
    \end{tabular}
    \caption{\QZ{Optimized two-mode squeezing gains $(G_1,G_2)$ for different input noise $\sigma$. $G_1$ characterizes the two-mode squeezing between the data and ancilla, while $G_2$ characterizes the squeezing that correlates the two-mode lattices. See Fig.~\ref{fig:N1M2} inset.}}
    \label{tab:N1M2}
\end{table}


\section{Heterogeneous noise case}
\label{sec:heterogeneous}

We attempt to make progress towards generalizing our findings on iid AGN to more generic sources of AGN. Consider a generic $N$ mode AGN channel $\mathcal{N}_{\bm Y}$, where $\bm Y$ is a $2N\times2N$ noise matrix that may contain correlations. From Ref.~\cite{wu2021continuous}, any correlated Gaussian noises can be reduced to a product of heterogeneous AGN channel by symplectic transformations, i.e. $\mathcal{N}_{\bm Y}\rightarrow\bigotimes_i\mathcal{N}_{\sigma_i}$, where $\sigma_i^2$ is the variance of the $i$th mode. It is thus sufficient to examine independent AGNs with different variances, for which we posit the following conjecture.

\begin{conjecture*}\label{conject:heterogenous}
Consider $N$ data modes and $N$ ancilla modes undergoing heterogeneous AGN ($\bigotimes_i\mathcal{N}_{\sigma_i}$). Then, up to unitary data processing and ancilla preparation, two- (one data and one ancilla) and four-mode interactions (two data and two ancilla) are sufficient for encoding. 
\end{conjecture*}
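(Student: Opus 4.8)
The plan is to generalize the reduction of Lemma~\ref{lemma:lemma_redux} by tracking the non-isotropic noise explicitly. After inverse encoding, the correlated displacement noise on the $N$ data and $N$ ancilla modes has covariance $\bm V_{\bx}=\bm S_{\mathsf{enc}}^{-1}\bm V_\xi\bm S_{\mathsf{enc}}^{-\top}$, where now $\bm V_\xi=\bigoplus_{i=1}^{2N}\sigma_i^2\,\bm I_2$ is heterogeneous. Because data processing $\bm\Lambda_d$ and ancilla preparation $\bm\Lambda_a$ are free and the geometric-mean error is invariant under them (exactly as in the proof of Theorem~\ref{thm:code_redux}), the code is fixed only by the orbit of $\bm V_{\bx}$ under local symplectic congruence $\bm V_{\bx}\mapsto(\bm\Lambda_d\oplus\bm\Lambda_a)\,\bm V_{\bx}\,(\bm\Lambda_d\oplus\bm\Lambda_a)^\top$. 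So I would first recast the conjecture as a normal-form statement: every such orbit has a representative that is block diagonal, with each block coupling at most two data modes to at most two ancilla modes. Converting a block-diagonal covariance back into an encoding is then routine: each block carries a definite subset of the symplectic spectrum $\{\sigma_i^2\}$ (preserved because $\bm S_{\mathsf{enc}}$ is symplectic), so by Williamson's theorem and Gaussian channel synthesis (Theorem~\ref{thm:synthesis}) it is realizable as a two- or four-mode Gaussian interaction acting on the corresponding product of heterogeneous AGN.

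The tool for the normal form is a mixed-state analogue of the modewise entanglement theorem (Theorem~\ref{thm:modewise}). In the iid case $\bm V_\xi=\sigma^2\bm I$ and $\bm V_{\bx}=\sigma^2\bm S_{\mathsf{enc}}^{-1}\bm S_{\mathsf{enc}}^{-\top}$ is a \emph{pure}-state covariance (all symplectic eigenvalues equal), so the pure-state theorem pairs each data mode with a single ancilla mode into a two-mode-squeezed (TMS) block and the reduction closes with only two-mode interactions. For heterogeneous noise $\bm V_{\bx}$ is \emph{mixed}, with symplectic spectrum $\{\sigma_i^2\}$, and the rigid pure-state pairing no longer holds. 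The target is instead to show that the mixed bipartite covariance reduces, under local symplectics, to a direct sum of two-mode blocks (one data, one ancilla) and irreducible four-mode blocks (two data, two ancilla), the latter carrying the residual $q$--$p$ cross-correlations that a single TMS cannot absorb. I would establish this by classifying the local-symplectic invariants of the pair $(\bm V_{\bx},\bm\Omega)$---the symplectic spectra of the reduced blocks $\bm V_d,\bm V_a$ together with the singular values of the normalized cross-correlation $\bm V_d^{-1/2}\bm V_{da}\bm V_a^{-1/2}$---and checking that a four-mode block already saturates them.

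The hardest step is bounding the block size, and here a parameter count exposes the real difficulty. The symplectic-congruence orbit of $\bm V_\xi$ has dimension $8N^2$, and after quotienting by the local group $\mathrm{Sp}(2N,\mathbb{R})_d\times\mathrm{Sp}(2N,\mathbb{R})_a$ a generic orbit carries $\sim 4N^2$ local-symplectic invariants. A four-mode block contributes $\dim\mathrm{Sp}(8)-2\dim\mathrm{Sp}(4)=36-20=16$ invariants and a two-mode block contributes $4$; distributing $N$ data and $N$ ancilla modes into disjoint blocks therefore supplies at most $8N$ invariants. Since $8N\geq 4N^2$ only for $N\le 2$, a single four-mode block exactly matches the $16$ invariants of the $N=2$ case, but for $N\ge 3$ disjoint two- and four-mode blocks are provably too few to represent a generic heterogeneous $\bm V_{\bx}$. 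Thus I expect the clean ``any code reduces'' form of the statement to hold only for $N\le 2$; for larger $N$ it can survive only if one restricts to the lower-dimensional locus of \emph{optimal} codes, whose special structure (e.g.\ extremal cross-correlations enforced by minimizing $\bar\sigma_{\rm GM}^2$) might still collapse onto bounded-size blocks. My plan would therefore be to prove the $N=2$ case rigorously via the four-mode normal form, and then to analyze the stationarity conditions of the optimal code to determine whether---and in what precise sense---only two- and four-mode interactions remain at the optimum for $N\ge3$. Verifying this optimal-code structure is the principal obstacle.
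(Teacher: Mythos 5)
You should first be clear that the paper does \emph{not} prove this statement: it is presented as a conjecture, and the authors explicitly write that they ``do not find this argument strong enough to unequivocally validate the conjecture.'' So the comparison here is between two incomplete lines of support, not between your plan and a finished proof. The paper's supporting argument is genuinely different from yours: it applies the not-so-normal mode decomposition of Wolf (Theorem~\ref{thm:not_so_norm}) to the off-diagonal data--ancilla block of $\bm V_{\bx}=\bm S_{\rm enc}^{-1}\left(\bigoplus_i\sigma_i^2\bm I_2\right)\bm S_{\rm enc}^{-\top}$, using the two-sided freedom $\bm V_{da}\mapsto\bm\Lambda_d\bm V_{da}\bm\Lambda_a^\top$ to condense the cross-correlations into two-mode (real eigenvalue) and four-mode (complex eigenvalue) units. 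That theorem controls only the correlation block; the diagonal blocks $\bm\Lambda_d\bm V_d\bm\Lambda_d^\top$ and $\bm\Lambda_a\bm V_a\bm\Lambda_a^\top$ are dragged along arbitrarily, which is precisely why the paper stops at ``plausible.'' Your route instead demands a normal form for the \emph{full} covariance under local symplectic congruence, and your dimension count is the genuinely new and valuable ingredient: disjoint two-/four-mode blocks supply only $O(N)$ local invariants while a generic orbit of $\bm V_{\bx}$ carries $O(N^2)$, so the strong reading of the conjecture (``every encoding reduces'') must fail generically for $N\ge3$, and any correct proof has to exploit optimality of the code rather than a generic reduction. You also correctly diagnose why the iid argument (Lemma~\ref{lemma:lemma_redux}, via the phase-space Schmidt decomposition of Theorem~\ref{thm:modewise}) cannot be recycled: heterogeneous noise makes $\bm V_{\bx}$ a mixed-state covariance with symplectic spectrum $\{\sigma_i^2\}$.

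Two caveats on your counting and your plan. First, your invariant counts use unconstrained block covariances rather than orbit representatives whose symplectic spectrum is fixed by the $\sigma_i$: a two-mode block carries $(10-2)-6=2$ generic local invariants and a four-mode block $(36-4)-20=12$, not $4$ and $16$; the $O(N)$-versus-$O(N^2)$ conclusion, and even the crossover at $N\le2$, survive, but the numbers as written are wrong. Second, the cases you propose to prove ``rigorously'' are vacuous: for $N\le2$ the whole system is at most two data plus two ancilla modes, so every encoding is already a two- or four-mode interaction and there is nothing to show. All of the conjecture's content sits at $N\ge3$, exactly where your obstruction forbids a generic reduction; carrying out your plan would therefore refine (or refute, as literally stated) the conjecture rather than prove it, and the analysis of the optimal-code locus that you defer at the end is precisely the open problem the paper also leaves unresolved.
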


To promote support for this claim, consider a generic symplectic encoding matrix $\bm S_{\rm enc}$. Observe that the Gaussian channel $\mathcal{N}_{\bm Y^\prime}=\mathcal{U}_{\bm S_{\rm enc}}^{-1}\circ(\bigotimes_i\mathcal{N}_{\sigma_i})\circ\mathcal{U}_{\bm S_{\rm enc}}$ is an AGN channel with noise matrix $\bm Y^\prime=\bm S_{\rm enc}^{-1}(\bigoplus_i\sigma_i\bm I_2)\bm S_{\rm enc}^{-\top}$, which has diagonal blocks corresponding to local noises and off diagonal blocks corresponding to (generally multimode) correlated noises. By the results of Ref.~\cite{wolf2008not} (see also Ref.~\cite{caruso2008multi}), there exists local symplectic transformations $\bm\Lambda_d,\bm\Lambda_a\in{\rm Sp}(2N,\mathbb{R})$ that condense the correlation blocks to elementary two (one data and one ancilla) and four (two data and two ancilla) mode units.  Since the initial noise channels are independent, the correlation units originate from the encoding, and it certainly seems plausible that two- and four-mode interactions are sufficient to generate them. However, we do not find this argument strong enough to unequivocally validate the conjecture. 

Though we have inferred that a generic $2N$ mode encoding (encoded in $\bm S_{\rm enc}$) reduces to ``local'' two- and four-mode interactions between data and ancillae, our reduction is not constructive, in the sense that it does not give us which particular interactions to use (such as the TMS of Theorem~\ref{thm:code_redux}). Moreover, as stated, our reduction \QZ{in the heterogeneous case is only expected to} hold when the number of data modes equals the number of ancilla modes ($N=M$) and thus does not straightforwardly apply to \QZ{codes where the number of ancilla modes is greater than the number of data modes (such as concatenated codes).}

\section{Analyses on finite-squeezing}\label{sec:approx_gkp}

\QZ{
A GKP state $\ket{\mathcal{L}}$ has support on the entire phase space of the modes and thus has infinite energy. We can regularize the state by confining the energy to a ball of radius $\sim\Delta^{-1}$ in phase space via
\begin{equation}
    \ket{\mathcal{L}(\Delta)}\propto\exp\left(-\frac{\Delta^2\hat{\bm r}^\top\hat{\bm r}}{2}\right)\ket{\mathcal{L}};
\end{equation}
see also Refs.~\cite{albert2018GKPcapacity,PhysRevA.101.012316,noh2020o2o,royer2020stabilization}. Note that $\hat{\bm r}^\top\hat{\bm r}/2=\sum_{i=1}^N\hat{n}_i+N/2$. The regularizer (or envelope operator) has a nice form in the displacement operator basis,
\begin{equation}
    \exp\left(-\frac{\Delta^2\hat{\bm r}^\top\hat{\bm r}}{2}\right)\propto\int\dd{\bm \mu}\exp\left(-\frac{\bm\mu^2}{4\tanh(\Delta^2/2)}\right)D_{\bm \mu}.
\end{equation}
We can thus think of the regularization procedure as coherently applying displacements (with a Gaussian envelope) on the GKP state. To simplify the analysis, we can twirl the regularized state~\cite{PhysRevA.101.012316} to generate an incoherent GKP state, $\bigotimes_{i=1}^N\mathcal{N}_{\sigma_{\rm GKP}}(\dyad{\mathcal{L}})$, where we define the finite GKP noise per mode $\sigma_{\rm GKP}^2=\tanh(\Delta^2/2)$; the relation between finite squeezing and the variance of the GKP noise is $s_{\rm{GKP}}=-10\log_{10}(2\sigma_{\rm{GKP}}^2)$. This twirling is not physical as it requires an infinite amount of energy and is only used for computational convenience. We use this noise model to approximate finite GKP noise in all of our calculations. Specifically, we assume that the GKP ancilla modes as well as the GKP states used for measurements are all noisy. We thus have two layers of finite GKP squeezing noise which enter into the analysis in slightly different ways as discussed below.
}

\begin{figure}[t]
    \centering
    \includegraphics[width=\linewidth]{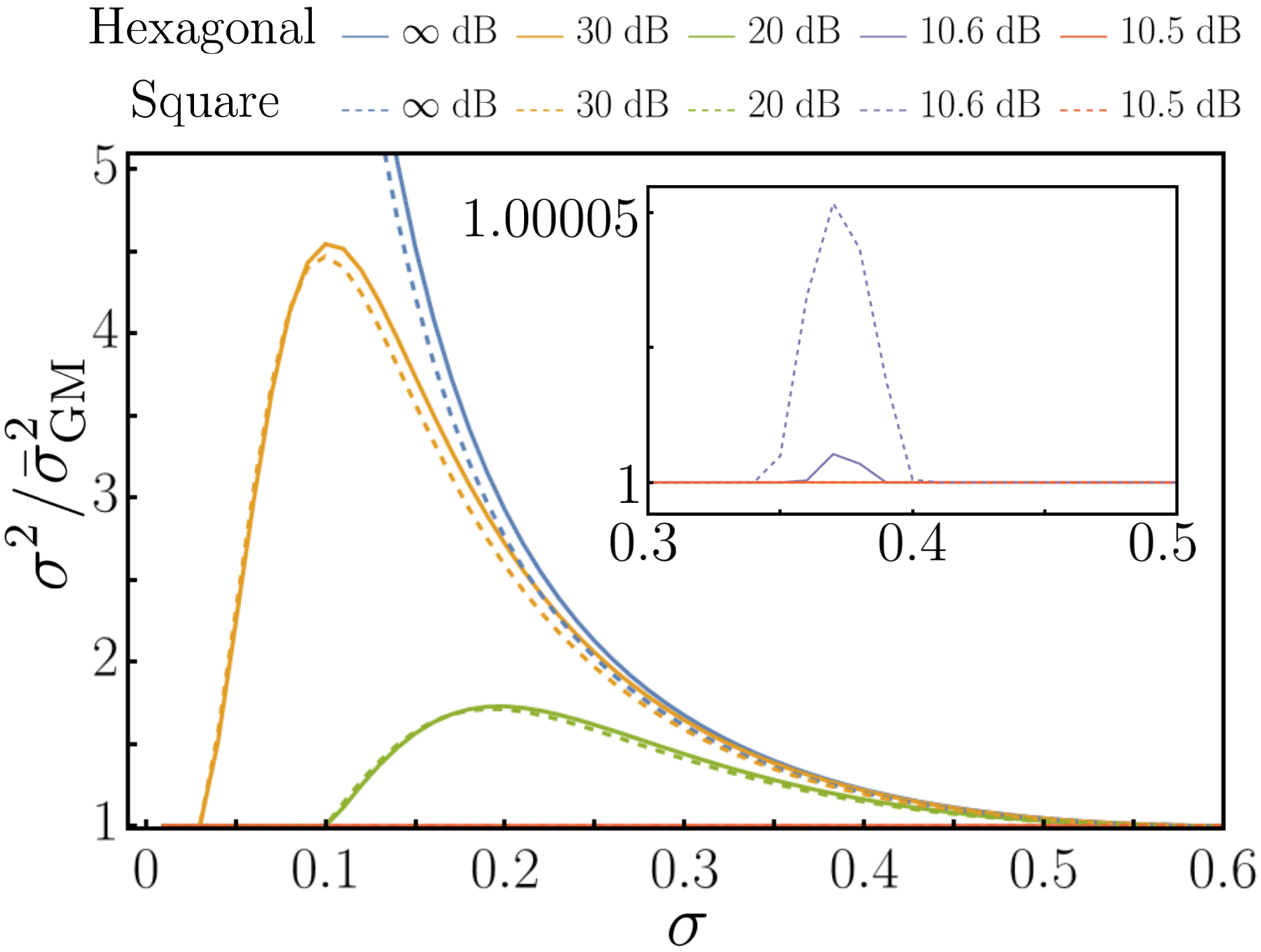}
    \caption{Quantum error correction (QEC) gain $\sigma^2/\sigma_{\rm{GM}}^2 $ versus input noise $\sigma$ for finite GKP squeezing in dB $s_{\rm GKP}=(10.5, 10.6,20,30,\infty)$. Solid lines correspond to hexagonal lattice. Dashed lines correspond to square lattice. }\label{fig:finiteGKP}
\end{figure}

\QZ{
With the finite-squeezed GKP states as encoding ancilla and measurement ancilla, the final covariance matrix defined in Eq.~\eqref{eq:MMSE_submatrices} changes to
\begin{align}
    \begin{pmatrix}
    \bm V_d & \bm V_{da}\\
    \bm V_{da}^T & \bm V_a
    \end{pmatrix}^{-1}\equiv &
    (\bI_{2N} \oplus {\bm M^\top \bm \Omega}) \bm V_{\bx} (\bI_{2N} \oplus {(\bm M^\top \bm \Omega)} ^{\top}) \nonumber
    \\
    & + \bm{0}_{2N}\oplus \sigma_{\rm{GKP}}^2(\bm M^\top \bm M+\bm I_{2M}).
    \label{eq:MMSE_finiteGKP_submatrices}
\end{align}
The GKP noise proportional to $\bm M^\top\bm M$ originates from the GKP lattice ancilla whereas the GKP noise proportional to the identity comes from noisy measurements; see Appendix \ref{app_analysis_approximatedGKP} for a derivation. We thus see that all computations from previous sections carry over (e.g., the MMSE formula Eq.~\eqref{eq:fMMSE_main} applies) with the updated covariance matrices to incorporate GKP noise. 
}

\QZ{
In Fig.~\ref{fig:finiteGKP}, we plot the QEC gain $\sigma^2/\bar{\sigma}_{\rm GM}^2$ versus the input AGN $\sigma$ for hexagonal and square GKP with finite GKP squeezing. We assume MMSE estimation. The functional behaviors are similar to that presented in Ref.~\cite{noh2020o2o} for a square GKP and linear estimation. Though, an important distinction here is that we find a \QZ{break-even} squeezing of around $s_{\rm dB}=10.5$dB, below which there is no QEC gain, whereas linear estimation leads to a higher value of 11 dB~\cite{noh2020o2o}. Note that at each squeezing level, when $\sigma\approx \sigma_{\rm GKP}$, the GKP finite-squeezing noise significantly contributes to the noise budget, and the square lattice is slightly better than the hexagonal lattice due to a smaller ${\rm det}(\bm M^\top \bm M+\bm I_{2M})$, with values $2$ for square and $2+4/\sqrt{3}$ for hexagonal. 
}

\section{Discussions and conclusions}
\label{sec:discussion}

In this paper, we derived the optimal form of GKP-stabilizer codes---TMS encoding on GKP ancilla with general lattices---for homogenous input noise. In the case of single-mode data and ancilla, we identified the GKP-TMS code with a two-dimensional hexagonal GKP ancilla to be optimal. For higher dimensions, we found D4 lattices to be superior to lower dimensional lattices. \QZ{We were also able to prove a universal no-threshold theorem for \emph{all} oscillators-to-oscillators codes based on Gaussian encodings and showed that the continuous errors on the data cannot be made arbitrarily small without an infinite amount of squeezing. This is not too surprising because these errors are intrinsically analog and may only be suppressed with continuous variable resources (e.g., squeezing).}

Before closing, a few open questions are worth mentioning. We expect the D4 lattice to perform well and thus picked it for benchmarking, however there might be better lattices in four or higher dimensions. \QZ{Searching for good lattices in higher dimension can in general be challenging, as the number of free parameters grow with the number of modes quadratically.} Although we derived the minimize mean square estimator (MMSE), the optimal estimator for minimizing the geometric mean error is unknown (to our knowledge). We narrowed the range of optimal break-even point of additional noise level down to the range $.607\leq\sigma^\star\leq.707$ from quantum capacity bounds and found MMSE reaching the lower end of this range; the actual optimal break-even point is an open problem, although we expect it to be closer to the lower end. Finally, regarding heterogeneous noise sources, we have conjectured that two- and four-mode interactions are sufficient for general (Gaussian) encodings, however we do not have definitive nor constructive proof of this claim. Ideal code design for heterogeneous noise sources is thus open; dedicated numerical studies may be required to address such.

\begin{acknowledgements}
This project is supported by the Defense Advanced Research Projects Agency (DARPA) under Young Faculty Award (YFA) Grant No. N660012014029. QZ also acknowledges support from NSF CAREER Award CCF-2142882 and National Science Foundation (NSF) Engineering Research Center for Quantum Networks Grant No. 1941583.
\end{acknowledgements}

\bibliographystyle{quantum}

\clearpage

\appendix

\noindent{\huge \bf Appendix}

\section{Notation}

\begin{figure}[b]
    \centering
    \includegraphics[width=\linewidth]{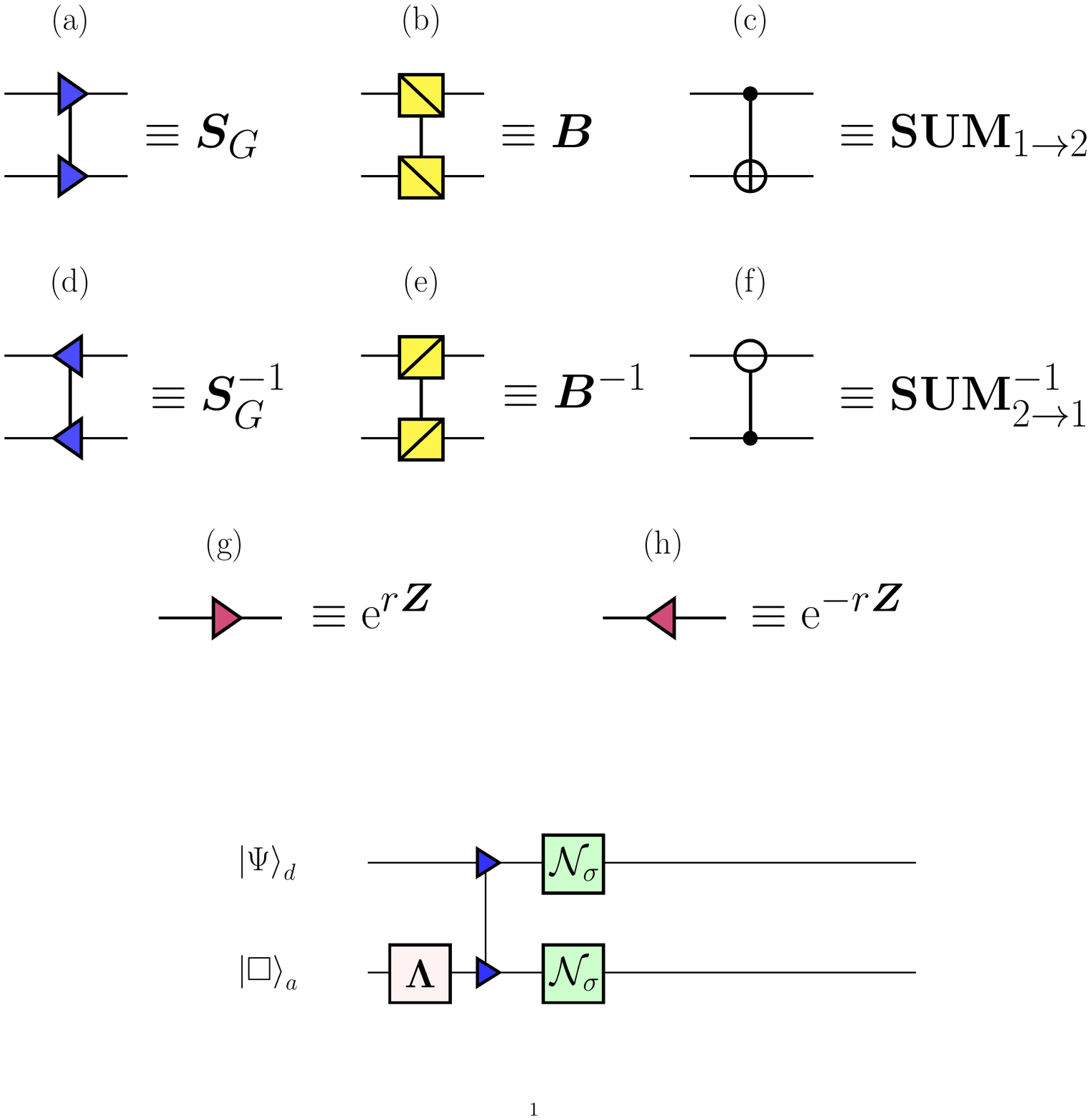}
    \caption{Dictionary of circuit elements for (a and d) two-mode squeezers, (b and e) beamsplitters, (c and f) SUM gates, and (g and h) single-mode squeezers.}
    \label{fig:circuit_dictionary}
\end{figure}

Consider an $N$-mode bosonic Hilbert space $\mathscr{H}^{\otimes N}$, where $\mathscr{H}$ is the Hilbert space of a single bosonic mode. Define the vector of canonical operators for the $N$ modes as,
\begin{equation}
    \hat{\bm r}^\top\equiv\left(\hat{q}_1,\hat{p}_1,\dots,\hat{q}_N,\hat{p}_N\right),
\end{equation}
such that,
\begin{equation}
    [\hat{\bm r}_k,\hat{\bm r}_j]={\rm i} \bm{\Omega}_{kj},
\end{equation}
where $\bm\Omega$ is the $N$-mode symplectic form,
\begin{equation}
    \bm\Omega = \bigoplus_{i=1}^N\bm\Omega_1 \qq{with} 
    \bm\Omega_1=
    \begin{pmatrix}
        0 & 1 \\
        -1 & 0
    \end{pmatrix}.
\end{equation}
The canonical operators $\hat{q}$ and $\hat{p}$ of a single mode are the real and imaginary parts, respectively, of the annihilation operator, $\hat{a}$. Throughout the appendix, we write `hat' on operators only in places where operators can potentially be confused with numbers. For example, in quadrature operators; While we omit `hat' for unitary operators and others that are easily recognized as operators. 

In several figures, we draw quantum circuits with the circuit elements representing symplectic transformations $\bm S$ corresponding to a unitary $U_{\bm S}$. A dictionary of commonly used circuit elements is shown in Fig.~\ref{fig:circuit_dictionary}.

\section{Gaussian Evolution}\label{app:gauss_evol}

\subsection{Gaussian unitaries}\label{app:gauss_unitaries}
Given an $N$-mode symplectic transformation $\bm S\in{\rm Sp}(2N,\mathbb{R})$, where ${\rm Sp}(2N,\mathbb{R})$ is the set of $2N\times2N$ real symplectic matrices (of dimension $\abs{{\rm Sp}(2N,\mathbb{R})}=2N^2+N$), such that $\bm S\bm\Omega\bm S^\top=\bm\Omega$ and $\det\bm S=1$, one can find a unitary representation $U_{\bm S}$, which encodes the symplectic transformation $\bm S$ and acts on the canonical operators as,
\begin{equation}
    U_{\bm S}^\dagger\hat{\bm r}U_{\bm S} = \bm S \hat{\bm r}.\label{eq:r_S}
\end{equation}

A useful fact that holds for an arbitrary symplectic matrix is the so-called \textit{Bloch-Messiah (or Euler) decomposition}~\cite{braunstein05}. The statement is that a symplectic matrix $\bm S\in{\rm Sp}(2N,\mathbb{R})$ decomposes to
\begin{equation}\label{eq:bloch_messiah}
    \bm S = \bm B^\prime\cdot\left(\bigoplus_{i=1}^N{\rm e}^{r_i\bm Z}\right)\cdot\bm B,
\end{equation}
where $\bm Z$ is the Pauli-Z matrix and ${\rm e}^{r_i\bm Z}$ is a single mode squeezing transformation on the $i$th mode with squeezing strength $r_i$. Here $\bm B^\prime,\bm B\in{\rm Sp}(2N,\mathbb{R})\cap{\rm SO}(2N)\simeq U(N)$, where $U(N)$ is the unitary group of dimension $\abs{U(N)}=N^2$. In other words, $\bm B^\prime$ and $\bm B$ are passive (e.g., linear optical) transformations, which admit an even further decomposition in terms of two-mode beamsplitters and phase-shifts~\cite{reck94}. Since the squeezing transformation is a diagonal matrix with $N$ free parameters, it follows that $\abs{{\rm Sp}(2N,\mathbb{R})}=2\abs{U(N)}+N=2N^2+N$, as mentioned previously.

Consider vectors $\bm\mu,\bm\nu\in\mathbb{R}^{2N}$ and define the Weyl (displacement) operator,
\begin{equation}
    {D}_{\bm\mu}\equiv\exp\left({\rm i}{\bm\mu}^\top\bm\Omega\hat{\bm r}\right),\label{eq:weyl_op}
\end{equation}
which form an operator basis for the space of bounded operators $\mathcal{B}(\mathscr{H}^{\otimes N})$ via $\Tr({D}_{\bm\mu}{D}_{-\bm\nu})=(2\pi)^N\delta^{2N}(\bm\mu-\bm\nu)$~\cite{serafini2017book}. Displacement operators satisfy a composition rule, 
\begin{equation}
    {D}_{\bm\mu}{D}_{\bm\nu}={\rm e}^{-{\rm i}\omega(\bm\mu,\bm\nu)}{D}_{\bm\nu}{D}_{\bm\mu},\label{eq:weyl_comp}
\end{equation}
where $\omega(\bm\mu,\bm\nu)\equiv\bm\mu^\top\bm\Omega\bm\nu$. The anti-symmetric bilinear form $\omega:\mathbb{R}^{2N}\times\mathbb{R}^{2N}\rightarrow\mathbb{R}$ takes as input the real vectors $\bm\mu$ and $\bm\nu$ and computes $\omega(\bm\mu,\bm\nu)\in\mathbb{R}$, which is called the symplectic inner product between $\bm\mu$ and $\bm\nu$. The symplectic inner product obeys $\omega(\bm\nu,\bm\mu)=-\omega(\bm\mu,\bm\nu)$ and is invariant under symplectic transformations, i.e. $\omega(\bm S\bm\mu,\bm S\bm\nu)=\omega(\bm\mu,\bm\nu)$. Finally, from Eq.~\eqref{eq:r_S} and the general conjugation formula $U^\dagger f(\hat{g})U=f(U^\dagger\hat{g}U)$, it can be shown that the displacement operators transform under symplectic transformations via
\begin{equation}\label{eq:weyl_S}
    U_{\bm S}{D}_{\bm\mu}U_{\bm S}^\dagger={D}_{\bm S\bm\mu},
\end{equation}
where $\bm S\bm\mu\in\mathbb{R}^{2N}$.

We now provide explicit symplectic matrices for some commonly used Gaussian unitaries in the single- and two-mode cases. For a single-mode, we only have two possible transformations, which are a single-mode squeezer and phase rotation. For two-mode operations, some common elements used in this paper and in the literature are a beamsplitter, TMS, and a SUM gate. 

A single-mode squeezer with squeezing strength $r$ has a symplectic matrix representation,
\begin{equation}
    \textbf{Sq}({\rm e}^r)\equiv {\rm e}^{r\bm Z},
\end{equation}
and a phase rotation at angle $\phi$ has a representation,
\begin{equation}
    \textbf{R}(\phi)\equiv
    \begin{pmatrix}
    \cos\phi & \sin\phi \\
    -\sin\phi & \cos\phi
    \end{pmatrix}.
\end{equation}

The two-mode beamsplitter has a symplectic representation
\begin{equation}\label{eq:bs_matrix}
    \bm B = 
    \begin{pmatrix}
        \cos\theta\bm I & \sin\theta\bm I\\
        -\sin\theta\bm I & \cos\theta\bm I
    \end{pmatrix},
\end{equation}
where, e.g., $\cos^2\theta$ is the transmittance of the beamsplitter. For a 50:50 beamsplitter ($\theta=\pi/4$), we express the symplectic matrix as $\bm B_{1/2}$. [For a beamsplitter with transmittance $T$, we may write $\bm B_{T}$.]

\begin{figure}
    \centering
    \includegraphics[width=.6\linewidth]{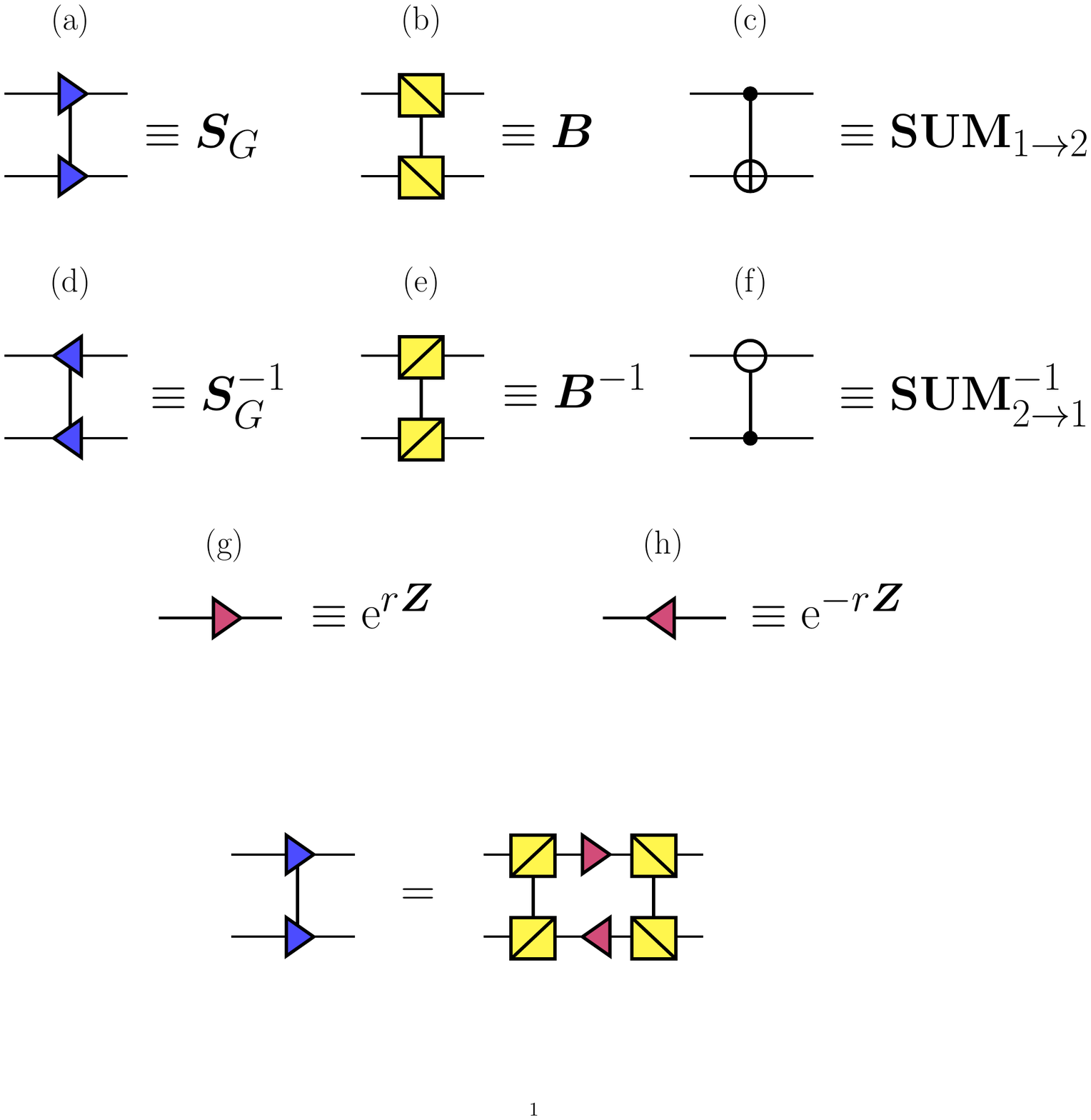}
    \caption{Bloch-Messiah decomposition for a TMS operation in terms of 50:50 beamsplitters and single mode squeezing.}
    \label{fig:tms_blochmessiah}
\end{figure}

The SUM-gate ${\rm SUM}_\delta\equiv{\rm e}^{-{\rm i}\delta \hat{Q}_1\otimes\hat{P}_2}$ operates on the canonical operators as,
\begin{align}
\begin{aligned}
     \hat{Q}_1&\rightarrow\hat{Q}_1,& \qq{} \hat{P}_1&\rightarrow \hat{P}_1-\delta\hat{P}_2,&\\
    \hat{Q}_{2}&\rightarrow\hat{Q}_{2}+\delta\hat{Q}_1,&\qq{}\hat{P}_{2}&\rightarrow\hat{P}_{2}.&
\end{aligned}
\end{align}
where $\delta\in\mathbb{R}$ and has symplectic representation,
\begin{align}
    \textbf{SUM}_\delta&=
    \begin{pmatrix}
    \bm{I}_2 & -\delta\bm\Pi_P\\
    \delta\bm\Pi_Q & \bm{I}_2
    \end{pmatrix}, \label{eq:sum_matrix}\\ \textbf{SUM}^{-1}_\delta&=
    \begin{pmatrix}
    \bm{I}_2 & \delta\bm\Pi_P\\
    -\delta\bm\Pi_Q & \bm{I}_2
    \end{pmatrix},\label{eq:inversesum_matrix}
\end{align}
where $\bm\Pi_Q={\rm diag}(1, 0)$ and $\bm\Pi_P={\rm diag}(0, 1)$ represent projections along the $Q$ quadrature and $P$ quadrature of the respective modes. For $\delta=1$, we use the notation $\textbf{SUM}\equiv\textbf{SUM}_{\delta=1}$, which is the conventional definition of the SUM-gate in the literature. Observe that $\textbf{SUM}_{\delta_1}\cdot\textbf{SUM}_{\delta_2}= \textbf{SUM}_{\delta_1+\delta_2}$. The SUM-gate is the CV analog to the CNOT gate for qubit-into-oscillator codes~\cite{gkp2000} and thus useful for ancilla-assisted stabilizer measurements (see Section~\ref{app:measurements}).

A TMS transformation with gain $G$ has symplectic representation,
\begin{equation}\label{eq:tms_matrix}
    \bm S_{G}=
    \begin{pmatrix}
        \sqrt{G}\bm I & \sqrt{G-1}\bm Z\\
        \sqrt{G-1}\bm Z & \sqrt{G}\bm I
    \end{pmatrix}.
\end{equation}
Parameterizing the gain $G$ in terms of the squeezing strength $r$ via $G=\cosh^2r$, one can provide a Bloch-Messiah decomposition for TMS in terms of two-mode beamsplitters and single-mode squeezers,
\begin{equation}\label{eq:tms_decomp}
    \bm S_{G}= \bm B_{1/2}\cdot\Big(\textbf{Sq}({\rm e}^{r})\oplus\textbf{Sq}({\rm e}^{-r})\Big)\cdot\bm B_{1/2}^\top.
\end{equation}
See Fig.~\ref{fig:tms_blochmessiah} for an illustration.


\subsection{Gaussian channels}\label{app:gauss_channels}
Consider an initial separable quantum state ${\Psi}\otimes{\rho}_E$, where $\Psi\in\mathscr{H}^{\otimes N}$ is a $N$ mode quantum state of the system (not necessarily Gaussian) and ${\rho}_E\in\mathscr{H}^{\otimes M}$ is a $M$ mode Gaussian quantum state of the environment. Given a symplectic transformation $\bm S$ with unitary representation $U_{\bm S}$, we define a Gaussian quantum channel $\mathcal{G}:\mathscr{H}^{\otimes N}\rightarrow\mathscr{H}^{\otimes N}$ which acts on the system modes via
\begin{equation}
    \mathcal{G}({\Psi})\equiv\Tr_E\left[U_{\bm S}\left({\Psi}\otimes{\rho}_E\right)U_{\bm S}^\dagger\right].
\end{equation}
Let the symplectic matrix $\bm S$ be written in the following form, 
\begin{equation}\label{eq:S_partition}
    \bm S =
\begin{pmatrix}
    \bm A & \bm B \\
    \bm C & \bm D
\end{pmatrix},
\end{equation}
where $\bm A$ is a $2N\times 2N$ matrix that dictates internal evolution of the system and $\bm B$ is a $2N\times 2M$ rectangular matrix which encodes the interaction between environment and system. [For a system of $N$ modes evolving under the Gaussian channel $\mathcal{G}$, it is sufficient to choose $M\leq2N$ to fully characterize the channel~\cite{weedbrook2012rmp}.] From thus, one can prove the following theorem, 
\begin{theorem}[Gaussian channel characterization]\label{thm:Gaussian_ch_descript}
Given a Gaussian environment state ${\rho}_E$---characterized by the first and second moments $\bm \mu_E$ and $\bm\sigma_E$---a Gaussian quantum channel $\mathcal{G}:\mathscr{H}^{\otimes N}\rightarrow\mathscr{H}^{\otimes N}$ is completely determined by a displacement-noise vector $\bm d$, a scaling matrix $\bm X$, and a noise matrix $\bm Y$ such that
\begin{align}
    \bm d &= \bm B\bm\mu_E,\\
    \bm X &= \bm A,\\
    \bm Y &= \bm B\bm\sigma_E\bm B^\top,
\end{align}
where $\bm A$ and $\bm B$ are sub-matrices of the symplectic matrix $\bm S$ [Eq.~\eqref{eq:S_partition}] that couples the system to the environment. 
\end{theorem}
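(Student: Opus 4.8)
The plan is to work entirely at the level of characteristic (Weyl) functions, since the system state $\Psi$ is permitted to be non-Gaussian and therefore cannot be captured by first and second moments alone. Recall that a Gaussian channel is completely specified by its action on characteristic functions, $\chi_{\mathcal{G}(\Psi)}(\bm\mu)=\chi_\Psi(\bm X^\top\bm\mu)\exp(i\bm d^\top\bm\Omega\bm\mu-\tfrac12\bm\mu^\top\bm\Omega^\top\bm Y\bm\Omega\bm\mu)$, where $\chi_\rho(\bm\mu)\equiv\Tr(\rho\,D_{\bm\mu})$. Thus it suffices to compute the left-hand side from the definition of $\mathcal{G}$ and read off the triple $(\bm X,\bm Y,\bm d)$ by comparison.

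First I would write $\chi_{\mathcal{G}(\Psi)}(\bm\mu)=\Tr[U_{\bm S}(\Psi\otimes\rho_E)U_{\bm S}^\dagger\,(D_{\bm\mu}\otimes I_E)]$, identify $D_{\bm\mu}\otimes I_E$ with the full $(N+M)$-mode displacement $D_{(\bm\mu,\bm 0)}$, and use cyclicity of the trace to move $U_{\bm S}$ onto the displacement. Applying the transformation rule Eq.~\eqref{eq:weyl_S} gives $U_{\bm S}^\dagger D_{(\bm\mu,\bm 0)}U_{\bm S}=D_{\bm S^{-1}(\bm\mu,\bm 0)}$. The next step is to split $\bm S^{-1}(\bm\mu,\bm 0)$ into its system and environment blocks using the partition of Eq.~\eqref{eq:S_partition}; here one uses $\bm S^{-1}=\bm\Omega\bm S^\top\bm\Omega^{-1}$ so that the system component is governed by $\bm A$ and the environment component by $\bm B$. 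Because the input is the product $\Psi\otimes\rho_E$, the resulting displacement factorizes as $D^{(S)}\otimes D^{(E)}$ and the trace splits into $\chi_\Psi$ evaluated at the (transformed) system argument times $\chi_{\rho_E}$ evaluated at the (transformed) environment argument.

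It then remains to substitute the explicit Gaussian characteristic function of the environment, $\chi_{\rho_E}(\bm\nu)=\exp(i\bm\nu^\top\bm\Omega\bm\mu_E-\tfrac12\bm\nu^\top\bm\Omega^\top\bm\sigma_E\bm\Omega\bm\nu)$. Since the environment argument is linear in $\bm\mu$ through $\bm B$, the linear phase collapses to a term proportional to $\bm\mu^\top\bm\Omega(\bm B\bm\mu_E)$, which I would identify with the displacement $\bm d=\bm B\bm\mu_E$, while the quadratic envelope collapses to a term proportional to $\bm\mu^\top\bm\Omega^\top(\bm B\bm\sigma_E\bm B^\top)\bm\Omega\bm\mu$, giving the noise $\bm Y=\bm B\bm\sigma_E\bm B^\top$. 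Matching the surviving $\chi_\Psi$ factor against $\chi_\Psi(\bm X^\top\bm\mu)$ yields $\bm X=\bm A$. As a consistency check one can run the same computation in the Heisenberg picture: the output system quadratures are $\hat{\bm r}_S^{\rm out}=\bm A\hat{\bm r}_S+\bm B\hat{\bm r}_E$ by Eq.~\eqref{eq:r_S}, whose first moment immediately gives $\bm d=\bm B\bm\mu_E$ and whose covariance gives $\bm A\bm\sigma_S\bm A^\top+\bm B\bm\sigma_E\bm B^\top$ (the cross terms vanishing because $\Psi\otimes\rho_E$ is uncorrelated), confirming $\bm X=\bm A$ and $\bm Y=\bm B\bm\sigma_E\bm B^\top$.

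I expect the main obstacle to be bookkeeping of the symplectic-form factors: the distinction between $\bm S$, $\bm S^\top$, and $\bm S^{-1}=\bm\Omega\bm S^\top\bm\Omega^{-1}$ matters for pinning down whether the scaling block is $\bm A$ rather than $\bm A^\top$ or a conjugate $\bm\Omega\bm A^\top\bm\Omega^{-1}$, and similarly for routing $\bm B$ into both $\bm d$ and $\bm Y$. This is purely convention-dependent algebra, but it is where sign and transpose errors creep in, so I would fix the characteristic-function convention at the outset (consistent with Eqs.~\eqref{eq:weyl_op}--\eqref{eq:weyl_S}) and carry the $\bm\Omega$ conjugations explicitly. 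Notably, the non-Gaussianity of $\Psi$ never enters, since the channel parameters depend only on how $\bm A$ and $\bm B$ act together with the environment moments---which is precisely why the characterization holds for an arbitrary system state.
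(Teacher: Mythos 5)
Your derivation is correct in substance, but note that the paper itself never proves Theorem~\ref{thm:Gaussian_ch_descript}: in Appendix~\ref{app:gauss_channels} it is stated as a known dilation-to-parameters dictionary for Gaussian channels (standard in the literature the paper draws on, e.g.\ Refs.~\cite{weedbrook2012rmp,serafini2017book,holevo2007classification}), so your argument supplies a proof rather than paralleling one. Your choice of working at the level of characteristic functions is also the right one for the claim ``completely determined'': the Heisenberg computation $\hat{\bm r}_S\mapsto\bm A\hat{\bm r}_S+\bm B\hat{\bm r}_E$ from Eq.~\eqref{eq:r_S} by itself only fixes the action on first and second moments, whereas the factorization $\chi_{\mathcal{G}(\Psi)}(\bm\mu)=\chi_\Psi(\cdot)\,\chi_{\rho_E}(\cdot)$ together with Fourier--Weyl inversion fixes the channel on arbitrary (non-Gaussian) inputs; you correctly lead with the characteristic-function computation and use the Heisenberg picture only as a cross-check.

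The one step that needs care is ``matching the surviving $\chi_\Psi$ factor against $\chi_\Psi(\bm X^\top\bm\mu)$ yields $\bm X=\bm A$.'' With the paper's conventions, Eqs.~\eqref{eq:weyl_op} and~\eqref{eq:weyl_S}, the block decomposition of $\bm S^{-1}=\bm\Omega\bm S^\top\bm\Omega^{-1}$ sends the system argument to $\bm\Omega\bm A^\top\bm\Omega^{-1}\bm\mu$ (and the environment argument to $\bm\Omega\bm B^\top\bm\Omega^{-1}\bm\mu$), so a \emph{literal} match to $\chi_\Psi(\bm X^\top\bm\mu)$ would return $\bm X=\bm\Omega\bm A\bm\Omega^{-1}$ rather than $\bm A$. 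This is precisely the convention bookkeeping you flag, and it dissolves once $(\bm d,\bm X,\bm Y)$ is defined—as the theorem intends—by the moment parametrization (mean $\mapsto\bm X\,\text{mean}+\bm d$, covariance $\mapsto\bm X\bm\sigma\bm X^\top+\bm Y$), for which your Heisenberg check gives $\bm X=\bm A$, $\bm d=\bm B\bm\mu_E$, $\bm Y=\bm B\bm\sigma_E\bm B^\top$ unambiguously; that parametrization is also the one the paper uses downstream, e.g.\ $\bm Y_1=\sigma^2\bm S\bm S^\top$ in the proof of Lemma~\ref{lemma:lemma_redux} in Appendix~\ref{app:proof_lemma_redux}. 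The $\bm\Omega$ factors in your Gaussian envelope do come out correctly: the linear phase is $i\bm\mu^\top\bm\Omega(\bm B\bm\mu_E)$ and the quadratic term is $-\tfrac12\bm\mu^\top\bm\Omega(\bm B\bm\sigma_E\bm B^\top)\bm\Omega^\top\bm\mu$, confirming $\bm d$ and $\bm Y$ as claimed.
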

The displacement-noise vector $\bm d$ can always be subsumed into a unitary displacement occurring on the system state; thus of primary interest to the dynamics are the scaling and noise matrices, $\bm X$ and $\bm Y$, respectively. 

Another useful fact to know about Gaussian channels is how their noise vectors, scaling matrices, and noise matrices combine when we concatenate Gaussian channels. 
\begin{theorem}[Gaussian channel synthesis]\label{thm:synthesis}
Consider two Gaussian channels $\mathcal{G}_1$ and $\mathcal{G}_2$ with characterizations $(\bm d_1,\bm X_1,\bm Y_1)$ and $(\bm d_2,\bm X_2,\bm Y_2)$, respectively. Then the composite channel ${\mathcal{G}_{12}=\mathcal{G}_2\circ\mathcal{G}_1}$, which is also a Gaussian channel, has a characterization $(\bm d_{12},\bm X_{12},\bm Y_{12})$ that take the form~\cite{holevo2007classification,albert2018GKPcapacity}
\begin{align}
    \bm d_{12}&=\bm d_2 + \bm X_2\bm d_1,\\
    \bm X_{12}&=\bm X_2\bm X_1,\\
    \bm Y_{12}&= \bm X_2\bm Y_1\bm X_2^\top +\bm Y_2.
\end{align}
\end{theorem}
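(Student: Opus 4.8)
The plan is to prove Theorem~\ref{thm:synthesis} by tracking how a state is transformed as it passes through $\mathcal{G}_1$ and then $\mathcal{G}_2$, and reading off the resulting composition laws. The transparent way to build intuition is to work with the first moment $\bm{m}$ and covariance matrix $\bm{\sigma}$ of an arbitrary input state. First I would record the rule, implicit in Theorem~\ref{thm:Gaussian_ch_descript}, that a characterization $(\bm{d},\bm{X},\bm{Y})$ acts on moments via
\begin{align}
    \bm{m} &\mapsto \bm{X}\bm{m} + \bm{d}, \\
    \bm{\sigma} &\mapsto \bm{X}\bm{\sigma}\bm{X}^\top + \bm{Y}.
\end{align}
This follows directly from the dilation: the uncorrelated joint state $\Psi\otimes\rho_E$ has first moment $(\bm{m};\bm{\mu}_E)$ and block-diagonal covariance $\bm{\sigma}\oplus\bm{\sigma}_E$; conjugation by $U_{\bm S}$ transforms these by $\bm S$, and the partial trace retains the system block. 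With $\bm S$ partitioned as in Eq.~\eqref{eq:S_partition}, the system first moment becomes $\bm A\bm{m} + \bm B\bm{\mu}_E = \bm{X}\bm{m} + \bm{d}$, and the top-left covariance block becomes $\bm A\bm{\sigma}\bm A^\top + \bm B\bm{\sigma}_E\bm B^\top = \bm{X}\bm{\sigma}\bm{X}^\top + \bm{Y}$.

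With this rule, the composition is a one-line calculation. Applying $\mathcal{G}_1$ sends $(\bm{m},\bm{\sigma})$ to $(\bm{X}_1\bm{m}+\bm{d}_1,\ \bm{X}_1\bm{\sigma}\bm{X}_1^\top+\bm{Y}_1)$, and then applying $\mathcal{G}_2$ yields
\begin{align}
    \bm{m} &\mapsto \bm{X}_2\bm{X}_1\bm{m} + \bm{X}_2\bm{d}_1 + \bm{d}_2, \\
    \bm{\sigma} &\mapsto \bm{X}_2\bm{X}_1\bm{\sigma}\bm{X}_1^\top\bm{X}_2^\top + \bm{X}_2\bm{Y}_1\bm{X}_2^\top + \bm{Y}_2.
\end{align}
Matching against the generic form $\bm{m}\mapsto\bm{X}_{12}\bm{m}+\bm{d}_{12}$ and $\bm{\sigma}\mapsto\bm{X}_{12}\bm{\sigma}\bm{X}_{12}^\top+\bm{Y}_{12}$ immediately gives $\bm{X}_{12}=\bm{X}_2\bm{X}_1$, $\bm{d}_{12}=\bm{d}_2+\bm{X}_2\bm{d}_1$, and $\bm{Y}_{12}=\bm{X}_2\bm{Y}_1\bm{X}_2^\top+\bm{Y}_2$, which is exactly the claim.

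The main obstacle I anticipate is not the algebra but the logical gap that matching the first two moments is by itself insufficient: I must confirm that the composite $\mathcal{G}_{12}$ is genuinely a Gaussian channel \emph{fully} determined by $(\bm{d}_{12},\bm{X}_{12},\bm{Y}_{12})$, rather than a map that merely happens to reproduce these two moments. I would close this gap by repeating the argument at the level of the characteristic function $\chi_\rho(\bm{\mu})=\Tr(\rho\, D_{\bm{\mu}})$, with $D_{\bm{\mu}}$ as in Eq.~\eqref{eq:weyl_op}. Using Eq.~\eqref{eq:weyl_S} for $U_{\bm S}^\dagger D_{(\bm{\mu};\bm 0)}U_{\bm S}=D_{\bm S^{-1}(\bm{\mu};\bm 0)}$ and factorizing the displacement across system and (Gaussian) environment, one finds that a Gaussian channel acts on \emph{any} input state as
\[
    \chi_{\mathcal{G}(\rho)}(\bm{\mu}) = \chi_\rho\!\left(\bm\Omega\bm{X}^\top\bm\Omega^\top\bm{\mu}\right)\exp\!\left(i\,\bm{\mu}^\top\bm\Omega\bm{d} - \tfrac12\bm{\mu}^\top\bm\Omega\bm{Y}\bm\Omega^\top\bm{\mu}\right).
\]
Because this captures the channel's complete action on the state, composing two such transformations reproduces the multiplicative law $\bm\Omega(\bm{X}_2\bm{X}_1)^\top\bm\Omega^\top$ for the rescaled argument and the additive, $\bm{X}_2$-conjugated laws for $\bm{d}$ and $\bm{Y}$ found above, thereby confirming both that $\mathcal{G}_{12}$ is Gaussian and that its characterization is the stated one. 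Alternatively, one can construct an explicit dilation of $\mathcal{G}_{12}$ on the enlarged environment $\rho_{E_1}\otimes\rho_{E_2}$ and apply Theorem~\ref{thm:Gaussian_ch_descript}, but this requires bookkeeping of the larger symplectic blocks, so I would favor the characteristic-function route for brevity.
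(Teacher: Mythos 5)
Your proof is correct. Note that the paper itself does not prove Theorem~\ref{thm:synthesis}: it is imported as a known result, cited to Refs.~\cite{holevo2007classification,albert2018GKPcapacity}, so there is no in-paper proof to compare against. Your argument is essentially the standard one found in those references, and it is complete. The moment-level computation correctly extracts the composition laws from the dilation of Theorem~\ref{thm:Gaussian_ch_descript} (the block structure of Eq.~\eqref{eq:S_partition} giving $\bm m\mapsto\bm A\bm m+\bm B\bm\mu_E$ and $\bm\sigma\mapsto\bm A\bm\sigma\bm A^\top+\bm B\bm\sigma_E\bm B^\top$), and you correctly identified the one genuine logical gap in a moments-only argument --- that matching first and second moments does not by itself certify that the composite is a Gaussian channel with that characterization --- and closed it with the characteristic-function transformation law. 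That law is right in the paper's conventions: writing $\bm S^{-1}=\bm\Omega\bm S^\top\bm\Omega^\top$ and using Eqs.~\eqref{eq:weyl_op} and~\eqref{eq:weyl_S}, one gets $\chi_{\mathcal{G}(\rho)}(\bm\mu)=\chi_\rho\big(\bm\Omega\bm X^\top\bm\Omega^\top\bm\mu\big)\exp\big({\rm i}\,\bm\mu^\top\bm\Omega\bm d-\tfrac12\bm\mu^\top\bm\Omega\bm Y\bm\Omega^\top\bm\mu\big)$, and iterating it reproduces $\bm X_{12}=\bm X_2\bm X_1$, $\bm d_{12}=\bm d_2+\bm X_2\bm d_1$, $\bm Y_{12}=\bm X_2\bm Y_1\bm X_2^\top+\bm Y_2$ on \emph{all} inputs, which pins down the channel completely. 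One remark: the Gaussianity of $\mathcal{G}_{12}$ is also immediate from the dilation route you mention (tensor the two Gaussian environments and compose the two Gaussian unitaries), so that alternative requires no bookkeeping beyond the observation itself; but your characteristic-function route simultaneously delivers Gaussianity and the characterization, which is why it is the textbook choice.
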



\subsection{Condensing correlations}

We discuss the condensation of correlations---from multimode correlations to elementary pair-like correlations---for general Gaussian transformations.


\subsubsection{Phase-space Schmidt decomposition}
As highlighted in the main text (Theorem~\ref{thm:code_redux}), for the special case of iid AGN, we can significantly reduce the freedom in a generic GKP-stabilizer code by effectively reducing the code to a direct product of TMS codes and a general ancillary GKP lattice $\mathcal{L}$. The workhorse in proving Theorem~\ref{thm:code_redux} is the \textit{modewise entanglement theorem} just below. 

\begin{theorem}[Modewise entanglement~\cite{reznik2003modewise}]\label{thm:modewise}
Consider a subsystem $A$ of $N$ modes and a subsystem $B$ of $M$ modes, such that the joint system $AB$ consists of $K=M+N$ modes, and define the positive definite matrix $\bm{\varrho}_{AB}=\varrho\bm S\bm S^\top$, where $\varrho\in(0,\infty)$ and $\bm S\in{\rm Sp}(2K,\mathbb{R})$. Then, there exists local symplectic matrices $\bm\Lambda_A\in{\rm Sp}(2N,\mathbb{R})$ and $\bm\Lambda_B\in{\rm Sp}(2M,\mathbb{R})$ such that,
\begin{multline}
    \left(\bm\Lambda_A\oplus\bm\Lambda_B\right)\bm{\varrho}_{AB}\left(\bm\Lambda_A^\top\oplus\bm\Lambda_B^\top\right)=\\\varrho\left(\bigoplus_{i=1}^{N}\bm S_{G_i}\bm S_{G_i}^{\top}\right)\oplus\bm I_{K-2N},
\end{multline}
where $\bm S_{G_i}$ is a TMS squeezing operation of gain $G_i$ [see, e.g., Eq.~\eqref{eq:tms_matrix}] between the $i$th mode in $A$ and the $(N+i)$th mode in $B$.
\end{theorem}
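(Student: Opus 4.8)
The plan is to treat $\bm\varrho_{AB}$ as (a scalar multiple of) the covariance matrix of a pure $K$-mode Gaussian state and to construct its phase-space Schmidt decomposition. Since the overall factor $\varrho$ commutes with congruence by $\bm\Lambda_A\oplus\bm\Lambda_B$ and appears identically on the right-hand side, I would first set $\varrho=1$ and restore it at the very end; it then suffices to bring $\bm\sigma\equiv\bm S\bm S^\top$ to the claimed normal form (assuming $N\le M$, as the index range $K-2N=M-N\ge0$ requires). The two ingredients I would rely on are Williamson's theorem (symplectic diagonalization of a positive-definite matrix) and the purity identity $(\bm\sigma\bm\Omega)^2=-\bm I$, which follows directly from $\bm S^\top\bm\Omega\bm S=\bm\Omega$.

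First I would partition $\bm\sigma$ into blocks $\bm\sigma_A$ ($2N\times2N$), $\bm\sigma_B$ ($2M\times2M$), and the cross block $\bm\gamma$, then apply Williamson's theorem on each subsystem to obtain local symplectics $\bm\Lambda_A,\bm\Lambda_B$ with $\bm\Lambda_A\bm\sigma_A\bm\Lambda_A^\top=\bigoplus_{i=1}^N\nu_i\bm I_2$ and $\bm\Lambda_B\bm\sigma_B\bm\Lambda_B^\top=\bigoplus_{j=1}^M\mu_j\bm I_2$, with symplectic eigenvalues $\nu_i,\mu_j\ge1$. The next step is to invoke purity: since $(\bm\sigma\bm\Omega)^2=-\bm I$ is preserved by the local transformations, I would use it to show that the nontrivial parts of the two symplectic spectra coincide—after ordering, $\mu_i=\nu_i$ for $i\le N$ while the remaining $M-N$ symplectic eigenvalues of $B$ equal $1$, i.e.\ the surplus modes of the larger subsystem are forced into vacuum and decouple from $A$.

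The remaining work is to reduce the cross block to standard TMS form. After the two Williamson steps the only residual freedom preserving the diagonal normal forms consists of passive (orthogonal-symplectic) transformations commuting with them—rotations within each degenerate symplectic eigenspace on either side. I would treat this as a phase-space singular-value decomposition of $\bm\gamma'=\bm\Lambda_A\bm\gamma\bm\Lambda_B^\top$: using the purity constraint to tie $\bm\gamma'$ to the diagonal blocks, a suitable pair of passive rotations pairs the $i$th mode of $A$ with the $(N+i)$th mode of $B$, makes each paired cross block equal to $\sqrt{\nu_i^2-1}\,\bm Z$, and leaves zero correlation with the surplus $B$ modes. Identifying $\nu_i=2G_i-1$ (equivalently $G_i=(\nu_i+1)/2\ge1$) reproduces $\bm S_{G_i}\bm S_{G_i}^\top$ on each pair and $\bm I_2$ on each of the $M-N$ leftover modes; restoring $\varrho$ then gives the stated identity.

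I expect the cross-block reduction to be the main obstacle. The delicate points are correctly handling degeneracies in the symplectic spectra—where the residual passive freedom includes nontrivial mixing within each eigenspace—and verifying that the $qq$/$pp$ structure encoded in $\bm Z$ (rather than $qp$ correlations) emerges automatically from the purity identity instead of being imposed by hand. A cleaner but less constructive alternative that sidesteps the explicit SVD is to observe that after Williamson-diagonalizing $\bm\sigma_A$, the global state is a Gaussian purification of the thermal state $\bigoplus_i\nu_i\bm I_2$; since Gaussian purifications are unique up to a symplectic transformation on the purifying system, a single $\bm\Lambda_B$ maps it onto the canonical TMS purification padded with $M-N$ vacuum modes, which is exactly the claimed form.
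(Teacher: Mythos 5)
The paper itself offers no proof of this theorem: it is imported from Botero and Reznik~\cite{reznik2003modewise}, and the surrounding text explicitly defers to that reference and to Serafini's book~\cite{serafini2017book} for proofs, since the theorem is only used here as a tool (the workhorse behind Lemma~\ref{lemma:lemma_redux}). Your sketch essentially reconstructs the standard proof found in those references: factor out the scalar $\varrho$, view $\bm S\bm S^\top$ as a pure-state covariance matrix, Williamson-diagonalize the two reduced blocks, use the purity identity $(\bm\sigma\bm\Omega)^2=-\bm I$ (which you correctly derive from $\bm S^\top\bm\Omega\bm S=\bm\Omega$ and $\bm S\bm\Omega\bm S^\top=\bm\Omega$) to match the nontrivial symplectic spectra and force the surplus modes of $B$ into vacuum, then spend the residual passive freedom to reduce the cross block to $\sqrt{\nu_i^2-1}\,\bm Z$ on data--ancilla pairs. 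Your dictionary is also exactly right: $\bm S_{G}\bm S_{G}^\top$ has diagonal blocks $(2G-1)\bm I$ and off-diagonal block $2\sqrt{G(G-1)}\,\bm Z=\sqrt{\nu^2-1}\,\bm Z$ under $\nu=2G-1$, so the identification with the claimed normal form goes through. One caveat worth flagging: your ``cleaner alternative'' via uniqueness of Gaussian purifications is close to circular, because the statement that two Gaussian purifications of the same reduced state are related by a \emph{symplectic} (rather than merely unitary) transformation on the purifying system is itself usually established via the phase-space Schmidt decomposition; general purification uniqueness only hands you a unitary with no Gaussianity guarantee. So the Williamson-plus-purity route should be regarded as the actual proof, and the degeneracy bookkeeping in the cross-block reduction that you identify is indeed where the remaining technical work lies---precisely the work carried out in Ref.~\cite{reznik2003modewise}.
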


Theorem~\ref{thm:modewise} is sometimes referred to as the \textit{phase-space Schmidt decomposition} for Gaussian states~\cite{adesso2007thesis}; see Refs.~\cite{reznik2003modewise} and~\cite{serafini2017book} for proofs and Ref.~\cite{adesso2005localizable,adesso2007thesis} for more discussion and a generalization to `bisymmetric' noise. 

As an illustrative example, let $\bm\varrho_{AB}$ be the covariance matrix for an isotropically mixed $K$ mode Gaussian state $\Psi_{AB}$ ($\varrho\geq1$). Then, by Lemma~\ref{thm:modewise}, the state $\Psi_{AB}$ is locally equivalent (up to a Gaussian unitary $U_{\bm\Lambda_A}^\dagger\otimes U_{\bm\Lambda_B}^\dagger$) to a direct product of $N$ (noisy) TMS vacuum states and $K-2N$ uncorrelated thermal states. In other words, $\Psi_{AB}$ is modewise entangled---i.e., each mode in $A$ is entangled with only one corresponding mode in $B$.

\subsubsection{Not-so-normal mode decomposition}

We briefly discuss a generalization of the normal mode decomposition (see, e.g., Refs.~\cite{weedbrook2012rmp,serafini2017book}) to a ``not-so-normal'' mode decomposition~\cite{wolf2008not,caruso2008multi}. We utilize this decomposition in our argument for the reduction of GKP codes against heterogeneous AGN. It is easier to state the results of Ref.~\cite{wolf2008not} by using the following order of canonical operators $\hat{\bm r}=(\hat{q}_1,\hat{q}_2,\dots,\hat{p}_1,\hat{p_2},\dots)^\top$.
\begin{theorem}[Not-so-normal mode decomposition~\cite{wolf2008not}]\label{thm:not_so_norm}
Consider a $2N\times2N$ invertible, non-defective matrix $\bm X$. Then, there exists symplectic transformations $\bm S_A,\bm S_B\in{\rm Sp}(2N,\mathbb{R})$ such that
\begin{equation}
    \bm S_A\bm X\bm S_B=
    \begin{pmatrix}
    \bm I_N & \bm 0\\
    \bm 0 & \bm J_{\mathbb{R}}
    \end{pmatrix},\label{eq:not_so_norm}
\end{equation}
where $\bm J_{\mathbb{R}}=\bigoplus_{k=1}^N\bm J(\lambda_k)$ is a $N\times N$ block-diagonal matrix consisting of the (two-fold degenerate) eigenvalues $\{\lambda_k\}_{k=1}^N$ of the matrix $\bm X\bm\Omega^\top\bm X^\top\bm \Omega$. For complex $\lambda_k=a_k+ib_k$, the diagonal blocks are $2\times 2$ matrices $\bm J_k(\lambda_k)=\big(\begin{smallmatrix}
a_k & b_k\\
-b_k & a_k
\end{smallmatrix}\big)$. For real $\lambda_k$, $\bm J_k(\lambda_k)=\lambda_k$.
\end{theorem}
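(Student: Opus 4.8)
The plan is to read Eq.~\eqref{eq:not_so_norm} as the search for a canonical form of $\bm X$ under the two-sided symplectic action $\bm X\mapsto\bm S_A\bm X\bm S_B$, the symplectic analogue of the singular-value decomposition (which instead uses orthogonal matrices on both sides). First I would identify the invariants of this action. Using $\bm S_A^\top\bm\Omega=\bm\Omega\bm S_A^{-1}$ and $\bm S_B\bm\Omega^\top\bm S_B^\top=\bm\Omega^\top$, a direct computation shows that $\bm M\equiv\bm X\bm\Omega^\top\bm X^\top\bm\Omega$ transforms by similarity, $\bm M\mapsto\bm S_A\bm M\bm S_A^{-1}$, so its spectrum is an invariant; equivalently, the skew-symmetric form $\bm G\equiv\bm X^\top\bm\Omega\bm X$ transforms by symplectic congruence $\bm G\mapsto\bm S_B^\top\bm G\bm S_B$. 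Since $\bm M=\bm X(\bm\Omega^{-1}\bm G)\bm X^{-1}$ is similar to $\bm\Omega^{-1}\bm G$, the eigenvalues $\{\lambda_k\}$ in the statement are exactly the relative eigenvalues of the skew pencil $\bm G-\lambda\bm\Omega$. The two-fold degeneracy I would obtain from the classical theory of pencils of two skew-symmetric forms, whose finite elementary divisors always occur in pairs; the non-defective hypothesis on $\bm X$ guarantees the pencil is regular and diagonalizable, so no nontrivial Jordan blocks survive and the $2N$ eigenvalues collapse to $N$ doubly degenerate values.

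The decisive step is to reduce the whole problem to a single statement about $\bm G$. Writing $\bm K$ for the target matrix and $\bm Y\equiv\bm X\bm S_B$, the demand that $\bm S_A\equiv\bm K\bm Y^{-1}$ be symplectic becomes, after applying $\bm S_A\bm\Omega\bm S_A^\top=\bm\Omega$ and inverting, precisely the congruence condition
\begin{equation}
    \bm S_B^\top\bm G\bm S_B=\bm K^\top\bm\Omega\bm K=\begin{pmatrix} \bm{0} & \bm J_{\mathbb{R}}\\ -\bm J_{\mathbb{R}}^\top & \bm{0}\end{pmatrix}\equiv\bm G_0 .
\end{equation}
Thus once a symplectic $\bm S_B$ is found that congruence-transforms $\bm G$ into the canonical skew form $\bm G_0$, the matrix $\bm S_A=\bm K(\bm X\bm S_B)^{-1}$ is automatically symplectic and delivers $\bm S_A\bm X\bm S_B=\bm K$. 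A short consistency check confirms the bookkeeping: $\bm\Omega^{-1}\bm G_0={\rm diag}(\bm J_{\mathbb{R}}^\top,\bm J_{\mathbb{R}})$ carries exactly the doubly degenerate spectrum $\{\lambda_k\}$, matching $\bm\Omega^{-1}\bm G\sim\bm M$, so the invariants agree as they must.

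It then remains to construct $\bm S_B$, which I would do by a symplectic Gram--Schmidt induction on the pencil $(\bm\Omega,\bm G)$. I would pick an eigenvector of $\bm\Omega^{-1}\bm G$; its eigenvalue together with the nondegeneracy of $\bm\Omega$ supplies a conjugate partner so that the pair spans a two-dimensional $\bm\Omega$-symplectic subspace on which $\bm G$ assumes the desired $2\times2$ block, after which one restricts to the $\bm\Omega$-orthogonal complement (which is $\bm G$-invariant because both forms are skew) and iterates. Real eigenvalues yield $1\times1$ blocks $\lambda_k$, while complex-conjugate pairs are bundled, via real and imaginary parts of the eigenvectors, into the real rotation--scaling blocks $\big(\begin{smallmatrix} a_k & b_k\\ -b_k & a_k\end{smallmatrix}\big)$, keeping $\bm S_B$ real. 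I expect the main obstacle to be exactly this constructive step: organizing the complex eigenvectors into real blocks while guaranteeing that each peeled-off subspace is $\bm\Omega$-nondegenerate, so the induction never stalls on an isotropic subspace, and verifying that non-defectiveness is precisely what excludes the degenerate strata of the pencil. The reality constraint and the eigenvalue doubling are where the care lies; everything else follows from the two invariance computations above.
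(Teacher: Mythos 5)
The paper never actually proves this theorem: it is imported as a known result from Ref.~\cite{wolf2008not} and used as a black box in the heterogeneous-noise discussion, so your proposal has to be judged against the literature rather than an in-paper argument. On that footing it is correct, and it is essentially the standard argument underlying Wolf's decomposition. Your two invariance computations check out ($\bm M\equiv\bm X\bm\Omega^\top\bm X^\top\bm\Omega\mapsto\bm S_A\bm M\bm S_A^{-1}$, $\bm G\equiv\bm X^\top\bm\Omega\bm X\mapsto\bm S_B^\top\bm G\bm S_B$, and $\bm M=\bm X(\bm\Omega^{-1}\bm G)\bm X^{-1}$), and your reduction is indeed the decisive step and is valid: if $\bm S_B^\top\bm G\bm S_B=\bm K^\top\bm\Omega\bm K$ with $\bm K$ the target matrix, then $\bm S_A\equiv\bm K(\bm X\bm S_B)^{-1}$ satisfies $\bm S_A^\top\bm\Omega\bm S_A=(\bm X\bm S_B)^{-\top}\,\bm K^\top\bm\Omega\bm K\,(\bm X\bm S_B)^{-1}=(\bm X\bm S_B)^{-\top}(\bm X\bm S_B)^\top\bm\Omega(\bm X\bm S_B)(\bm X\bm S_B)^{-1}=\bm\Omega$, so it is automatically symplectic; and with the $(q_1,\dots,q_N,p_1,\dots,p_N)$ ordering used in this appendix one indeed has $\bm K^\top\bm\Omega\bm K=\big(\begin{smallmatrix}\bm 0 & \bm J_{\mathbb{R}}\\ -\bm J_{\mathbb{R}}^\top & \bm 0\end{smallmatrix}\big)$. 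The induction you defer is also sound, for a reason worth making explicit: $\bm A\equiv\bm\Omega^{-1}\bm G$ obeys $\bm A^\top\bm\Omega=\bm\Omega\bm A$, i.e.\ it is self-adjoint with respect to the symplectic form, so eigenspaces for distinct eigenvalues are $\bm\Omega$-orthogonal, and in the non-defective case $\bm\Omega$ therefore restricts nondegenerately to each eigenspace. This single fact delivers both the two-fold degeneracy (a nondegenerate alternating form forces even-dimensional eigenspaces) and the guarantee that your symplectic Gram--Schmidt never stalls on an isotropic subspace---with the caveat that the conjugate partner must be chosen \emph{inside} the same eigenspace; nondegeneracy of $\bm\Omega$ on the full space, which is what your wording literally invokes, is not by itself sufficient. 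One further imprecision, which the paper's own statement shares: the hypothesis that actually matters is diagonalizability of $\bm M$ (equivalently, of the pencil $\bm G-\lambda\bm\Omega$), not of $\bm X$ itself, since diagonalizability of $\bm X$ does not obviously transfer to $\bm M$; your proof should be stated with that as the standing assumption.
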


As a technical aside, for defective $\bm X$, the above theorem still holds, however in that case, $\bm J_{\mathbb{R}}$ is an $N\times N$ matrix in real Jordan form (hence the subscript), with Jordan blocks containing the degenerate eigenvalues of $\bm X$.

As an example, if $\bm X$ is a positive definite real matrix, then the above is effectively equal to the normal mode decomposition. Recall from the normal mode decomposition that there exists $\bm S_A^\prime,\bm S_B^\prime\in{\rm Sp}(2N,\mathbb{R})$ such that $\bm S_A^\prime\bm X\bm S_B^\prime=\bm{\nu}\oplus\bm{\nu}$  where $\bm\nu={\rm diag}(\nu_1,\nu_2,\dots,\nu_N)$ and $\nu_k$ are the symplectic eigenvalues of $\bm X$~\cite{weedbrook2012rmp,serafini2017book}. The eigenvalues $\lambda_k$ of $\bm X\bm\Omega^\top\bm X^\top\bm \Omega$ are real in this case and are related to the symplectic eigenvalues $\nu_k$ of $\bm X$ via $\lambda_k=\nu_k^2$. Indeed, one can show that the decomposition in Eq.~\eqref{eq:not_so_norm} and the normal mode decomposition just above are equivalent up to local squeezing transformations $\bigoplus_k\textbf{Sq}({\rm e}^{r_k})$, with squeezing strengths ${\rm e}^{r_k}=\nu_k$. 

\begin{figure}
    \centering
    \includegraphics[width=.7\linewidth]{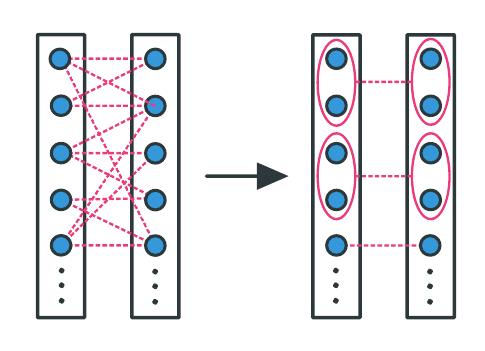}
    \caption{Multimode correlations condense to elementary two- and four-mode units (see Theorem~\ref{thm:not_so_norm} below and cf. Ref.~\cite{wolf2008not}).}
    \label{fig:not_so_norm}
\end{figure}

As another example, consider a $4N\times4N$ covariance matrix $\bm\sigma$ for an $2N$ mode Gaussian quantum state. Let us go back to $(q_1,p_1,q_2,p_2,\dots)$ ordering; this will allow us to write the covariance matrix in block form as below. Then, consider a bipartite cut which partitions a subsystems $A$ of $N$ modes from a subsystem $B$ of $N$ modes. We write the covariance matrix in blocks as
\begin{equation}
    \bm\sigma=\begin{pmatrix}
    \bm\sigma_A & \bm\sigma_{AB} \\
    \bm\sigma_{AB}^\top & \bm \sigma_B
    \end{pmatrix},
\end{equation}
where $\bm\sigma_A,\bm\sigma_B$ are local covariance matrices and $\bm\sigma_{AB}$ is the correlation matrix. Consider local symplectic transformations $\bm S_A,\bm S_B\in\rm{Sp}(2N,\mathbb{R})$ such that $\bm\sigma\rightarrow(\bm S_A\oplus\bm S_B)\bm\sigma(\bm S_A^\top\oplus\bm S_B^\top)$. The correlation matrix transforms as $\bm\sigma_{AB}\rightarrow\bm S_A\bm\sigma_{AB}\bm S_B^\top$. By Theorem~\ref{thm:not_so_norm}, we can choose $\bm S_A$ and $\bm S_B$ to reduce $\bm\sigma_{AB}$ into ``not-so-normal'' form, completely determined by Jordan blocks $\bm J(\lambda_k)$. This transformation condenses the correlations into two- and four-mode units. In particular, each $\lambda_k\in\mathbb{R}$ implies two-mode correlations, whereas $\lambda_k\in\mathbb{C}$ implies four-mode correlations; see Fig.~\ref{fig:not_so_norm} for an illustration.


\section{GKP lattice states}\label{app:gkp_lattice}
We present some details regarding lattice GKP states for $N$ modes and, for concreteness, provide explicit examples for one and two modes; see also Refs.~\cite{baptiste2022multiGKP, eisert2022lattice}.

\subsection{Modular quadratures}
Consider a GKP lattice state $\ket{\mathcal{L}}$ and the stabilizer group $\mathcal{S}_{\mathcal{L}}=\langle S^{\mathcal{L}}_1,\dots,S^{\mathcal{L}}_{2N}\rangle$, such that $\mathcal{S}_{\mathcal{L}}\ket{\mathcal{L}}=\ket{\mathcal{L}}$. Since ${S}^{\mathcal{L}}_J=\exp({\rm i}\hat{g}_J^\mathcal{L})$, where $\hat{g}_J^\mathcal{L}\equiv \bm\lambda_J^{\mathcal{L}\,\top}\bm\Omega\bm{\hat{r}}$, we can equivalently say that $\ket{\mathcal{L}}$ is an eigenstate of the modular quadratures $\{\hat{g}_J^{\mathcal{L}}\}$ (which are the generators of translations in the lattice) with eigenvalue $0\mod{2\pi}$. Given a generator matrix $\bm M$ of $\mathcal{L}$, we can package all the modular quadratures quite nicely into a vector operator,
\begin{equation}\label{eq:g_Mrelation}
    \hat{\bm g}^{\mathcal{L}}=\bm M^\top\bm\Omega\hat{\bm r}=
    \begin{pmatrix}
        \hat{g}_1^\mathcal{L} \\
        \hat{g}_2^\mathcal{L} \\
        \vdots
    \end{pmatrix}.
\end{equation}
If the lattice is a symplectic self-dual lattice~\cite{harrington2001rates} (see below), such that $\bm M^\top\bm \Omega\bm M=2\pi\bm\Omega$, then ${\hat{\bm g}^{\mathcal{L}}=\bm \Omega\bm M^{-1}\hat{\bm r}}$ and $\hat{\bm g}^{\mathcal{L}}$ defines a new set of quadrature operators, up to a factor $\ell=\sqrt{2\pi}$.

\subsection{Canonical GKP state}
The prototypical example of a lattice state is the canonical (square) GKP state $\ket{\square}$, which one can define via the stabilizers,
\begin{align}
    {S}^{\,\square}_{1}&\equiv\exp({\rm i}{\bm\lambda}^{\square\,\top}_1\bm\Omega_1\hat{{\bm r}})={\rm e}^{{\rm i}\ell\hat{q}},\label{eq:square1_stabilizer}\\
    {S}^{\,\square}_{2}&\equiv\exp({\rm i}{\bm\lambda}^{\square\,\top}_2\bm\Omega_1\hat{{\bm r}})={\rm e}^{{\rm i}\ell\hat{p}},\label{eq:square2_stabilizer}
\end{align}
where $\ell\equiv\sqrt{2\pi}$, such that ${S}^{\,\square}_{K}\ket{\square}=\ket{\square}\,\forall\,k\in\{1,2\}$. The vectors $\bm\lambda^{\square}_K$ are given explicitly as
\begin{align}
    {\bm\lambda}^{\square}_1&\equiv\ell\left(0,-1\right)^\top,\\
    {\bm\lambda}^{\square}_2&\equiv\ell\left(1,0\right)^\top.
\end{align}
Observe that $[{S}^{\,\square}_{1},{S}^{\,\square}_{2}]=0$ since $\omega\left({\bm\lambda}^{\square}_1,{\bm\lambda}^{\square}_2\right)=\ell^2=2\pi$.
For reference, a generator matrix built from these vectors is
\begin{equation}
    \bm M^{\square}=\left({\bm\lambda}^{\square}_1\hspace{.6em}{\bm\lambda}^{\square}_2\right)
    =\ell\bm\Omega_1,
\end{equation}
which is just the symplectic form in $\mathbb{R}^2$. The modular quadratures $\hat{\bm g}^{\square}$ are thus equivalent (up to a factor $\ell$) to the canonical operators $\hat{q}$ and $\hat{p}$, as explicitly seen above for the stabilizers~\eqref{eq:square1_stabilizer}-\eqref{eq:square2_stabilizer}. One may therefore refer to the quadrature basis for the square GKP state as the `canonical' or `standard' basis. In terms of the basis states for $\hat{q}$ and $\hat{p}$, the canonical GKP state can be written as
\begin{equation}
    \ket{\square}\propto\sum_{n\in\mathbb{Z}}\ket{q=n\ell}=\sum_{n\in\mathbb{Z}}\ket{p=n\ell},
\end{equation}
which is a translation invariant square lattice in the single-mode phase space with period $\ell=\sqrt{2\pi}$. 

We denote $N$ canonical GKP states as $\ket{\square^N}\equiv\ket{\square}^{\otimes N}$, corresponding to a $2N$ dimensional hypercube. The multimode lattice can be described by the vectors $\bm\lambda^\square_{J}$, where $J=\{1,\dots,2N\}$, such that, e.g.,
\begin{align*}
    \bm\lambda_{1}^\square&=(0,-\ell)^\top\oplus\bm 0_{2(N-1)}^\top,\\ \bm\lambda_{2}^\square&=(\ell,0)^\top\oplus\bm 0_{2(N-1)}^\top,\\
    & \vdotswithin{=} \\
    \bm\lambda_{2N-1}^\square&=\bm 0_{2(N-1)}^\top\oplus(0,-\ell)^\top,\\ 
    \bm\lambda_{2N}^\square&=\bm 0_{2(N-1)}^\top\oplus(\ell,0)^\top,
\end{align*}
where $\bm 0_{2(N-1)}$ is a $2(N-1)$ dimensional zero vector. The generator matrix is the symplectic form on $N$ modes, which is simply a direct sum of the generator matrices for the individual modes,
\begin{equation}
    \bm{M}^{\square^N}=
    \begin{pmatrix}
        \bm\lambda_{1}^\square & \dots & \bm\lambda_{2N}^\square
    \end{pmatrix}=\ell\bm\Omega,
\end{equation}
such that $\mathcal{L}_{\square^N}\simeq\ell\mathbb{Z}^{2N}$. This represents a canonical hypercube in $2N$ dimensions, which is a (symplectic) self-dual lattice.

\subsection{Self-dual GKP states}

One can generate other all other (symplectic) self-dual lattices via symplectic transformations on the canonical hypercube (a consequence of Corollary 1 in Ref.~\cite{eisert2022lattice}). The corresponding lattice states are `symplectically equivalent' to $N$ canonical GKP states but can nonetheless be useful resources for practical tasks.

Consider a symplectic transformation $\bm\Lambda\in{\rm Sp}(2N,\mathbb{R})$ with unitary representation $U_{\bm\Lambda}$ and define new lattice vectors $\bm\lambda^{\Lambda}_J\equiv\bm\Lambda\bm\lambda^\square_{J}$. Since the symplectic inner product is invariant under symplectic transformations, it follows that $\omega(\bm\lambda^{\Lambda}_J,\bm\lambda^{\Lambda}_K)=\omega(\bm\lambda^\square_{J},\bm\lambda^\square_{K})$, and thus one can naturally define a symplectically integral lattice $\mathcal{L}_\Lambda$ via $\{\bm\lambda^{\Lambda}_J\}$. Furthermore, we can associate these vectors with the lattice stabilizers ${S}_J^{\Lambda}\equiv{D}_{\bm\lambda^{\Lambda}_J}$, which generate the stabilizer group $\mathcal{S}_{\mathcal{L}_\Lambda}=\langle{S}_1^{\Lambda}, {S}_2^{\Lambda},\dots,{S}_{2N}^{\Lambda}\rangle$. By relation~\eqref{eq:weyl_S}, the stabilizers for $\mathcal{L}_\Lambda$ are related to canonical stabilizers via conjugation,
\begin{equation}
    {S}_J^{\Lambda} = U_{\bm\Lambda}  {S}_J^{\square}U_{\bm\Lambda}^\dagger.
\end{equation}
It follows immediately that the state $\ket{\mathcal{L}_\Lambda}\equiv U_{\bm\Lambda}\ket{\square^N}$ is a $+1$ eigenstate of any element in $\mathcal{S}_{\mathcal{L}_\Lambda}$, i.e. $\mathcal{S}_{\mathcal{L}_\Lambda}\ket{\mathcal{L}_\Lambda}=\ket{\mathcal{L}_\Lambda}$. 

\begin{figure}
    \centering
    \includegraphics[width=.6\linewidth]{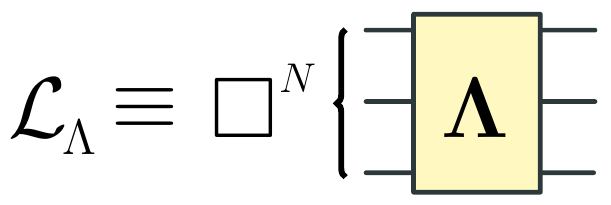}
    \caption{Self-dual GKP lattice state $\mathcal{L}_{\Lambda}$ generated from $N$ canonical square GKP by a symplectic transformation $\bm\Lambda$.}
    \label{fig:selfdual_lattice}
\end{figure}

Consider a generator matrix, $\bm M^\Lambda$, built from the lattice vectors, 
\begin{equation}
\bm M^\Lambda\equiv
    \begin{pmatrix}
        \bm\lambda_1^{\Lambda} & \bm\lambda_2^{\Lambda} & \dots & \bm\lambda_{2N}^{\Lambda}
    \end{pmatrix}.
\end{equation}
A generator matrix and the symplectic transformation $\bm \Lambda$ are related via
\begin{equation}\label{eq:M_lattice}
    \bm M^\Lambda=\ell\bm\Lambda\bm\Omega=\bm\Lambda\bm M^{\square^N},
\end{equation}
i.e., a generator matrix for the new lattice $\bm M^\Lambda$ can be found by transforming a generator matrix of the canonical lattice $\bm{M}^{\square^N}$ by a symplectic transformation $\bm\Lambda$. It immediately follows that $\bm M^{\Lambda\,\top}\bm\Omega\bm M^\Lambda=2\pi\bm\Omega$.

Using Eq.~\eqref{eq:g_Mrelation}, one can show that the modular quadratures $\hat{\bm g}^{\Lambda}$ which generate discrete translations on the lattice are explicitly given as
\begin{equation}
    \hat{\bm g}^{\Lambda}=\ell\bm\Lambda^{-1}\hat{\bm r}.
\end{equation}
Since $\bm\Lambda\in{\rm Sp}(2N,\mathbb{R})$, $\hat{\bm g}^{\Lambda}$ defines a good set of canonical operators which obey the canonical commutation relations up to a factor $\ell^2=2\pi$.

We provide some examples of self-dual lattices below.

\paragraph*{Example: Hexagonal GKP.}  An example of a single-mode ($N=1$) symplectic lattice built from the canonical lattice is the hexagonal lattice $\mathcal{L}_{\hexagon}$, which admits the densest sphere-packing in $\mathbb{R}^2$. Consider the following symplectic transformation,
\begin{equation}\label{eq:lambda_hex}
    \bm\Lambda_{\hexagon}\equiv
    \frac{\ell_{\hexagon}}{\ell}\begin{pmatrix}
        1 & -\frac{1}{2} \\
        0 & \frac{\sqrt{3}}{2}
    \end{pmatrix},
\end{equation}
where $\ell_{\hexagon}/\ell=\sqrt{2}/3^{1/4}\approx1.07$. The hexagonal lattice vectors are defined from the square lattice vectors via $\bm\lambda^{\hexagon}_K=\bm\Lambda_{\hexagon}\bm\lambda^{\square}_K$ and written explicitly as
\begin{align}
    \bm\lambda^{\hexagon}_1&\equiv\ell_{\hexagon}\left(1/2,-\sqrt{3}/{2}\right)^\top,\\
    \bm\lambda^{\hexagon}_2&\equiv\ell_{\hexagon}\left(1,0\right)^\top.
\end{align}
Note that $\norm{\bm\lambda^{\hexagon}_k}=\norm{\bm\lambda^{\hexagon}_2-\bm\lambda^{\hexagon}_1}=\ell_{\hexagon}\approx1.07$, where $\norm{\cdot}$ is the Euclidean norm; this is also the minimal distance between lattice points in $\mathcal{L}_{\hexagon}$. [As an aside, the Euclidean norm of a vector $\bm\mu$ can be defined in symplectic geometry via the symplectic inner product $\omega(\cdot,\cdot)$ without needing to refer to Euclidean geometry. Indeed, consider the dual vector $\tilde{\bm\mu}\equiv\bm\Omega\bm\mu$, then $\omega(\tilde{\bm\mu},\bm\mu)=\norm{\bm\mu}$.]

Furthermore, the hexagonal stabilizer group $\mathcal{S}_{\hexagon}$ can be generated from the following stabilizers,
\begin{align}
    {S}^{\hexagon}_{1}&\equiv\exp({\rm i}{\bm\lambda}^{\hexagon\,\top}_1\bm\Omega_1\hat{{\bm r}})=\exp[{\rm i}\ell_{\hexagon}\left(\frac{\sqrt{3}}{2}\hat{q}+\frac{1}{2}\hat{p}\right)],\\
    {S}^{\hexagon}_{2}&\equiv\exp({\rm i}{\bm\lambda}^{\hexagon\,\top}_2\bm\Omega_1\hat{{\bm r}})=\exp[{\rm i}\ell_{\hexagon}\hat{p}],
\end{align}
which can also be found via conjugation of the square stabilizers via ${S}^{\hexagon}_{K}=U_{{\hexagon}}{S}^{\square}_{K}U_{{\hexagon}}^\dagger$, where $U_{{\hexagon}}$ is a unitary representation of $\bm\Lambda_{\hexagon}$. As in the general case, the hexagonal GKP state is formally given by $\ket{\hexagon}=U_{{\hexagon}}\ket{\square}$, which is necessarily a $+1$ eigenstate of the stabilizer group elements, i.e. $\mathcal{S}_{\hexagon}\ket{\hexagon}=\ket{\hexagon}$. 

\paragraph*{Example: Rectangular GKP.}
Consider the symplectic squeezing matrix with squeezing strength $\eta\equiv{\rm e}^r$,
\begin{equation}
    \textbf{Sq}(\eta)=
    \begin{pmatrix}
        \eta & 0 \\
        0 & 1/\eta
    \end{pmatrix}.
\end{equation}
We define the new lattice vectors $\bm\lambda^{\eta}_K\equiv\textbf{Sq}(\eta)\bm\lambda^{\square}_K$ from which we construct the stabilizers,
\begin{align}
    {S}^{\,\eta}_{1}&\equiv\exp({\rm i}{\bm\lambda}^{\eta\,\top}_1\bm\Omega_1\hat{{\bm r}})={\rm e}^{{\rm i}\hat{q}\ell/\eta},\\
    {S}^{\,\eta}_{2}&\equiv\exp({\rm i}{\bm\lambda}^{\eta\,\top}_2\bm\Omega_1\hat{{\bm r}})={\rm e}^{{\rm i}\hat{p}\ell\eta}.
\end{align}
The state $\ket{\eta}\equiv U_{\textbf{Sq}(\eta)}\ket{\square}$ is a $+1$ eigenstate of these stabilizers, and defines a rectangular GKP state---i.e., squeezed along one quadrature and stretched along the other. This state can be written in the position and momentum bases as
\begin{equation}
    \ket{\eta}\propto\sum_{k\in\mathbb{Z}}\ket{q=k\ell\eta}=\sum_{k\in\mathbb{Z}}\ket{p=k\ell/\eta}.
\end{equation}
The rectangular GKP state has proven useful for bias-enhanced QEC with the GKP surface code~\cite{hanggli2020pra}.

\paragraph*{Example: GKP Bell state.}
An interesting class of two-mode lattice states are entangled GKP states generated via two-mode interactions. A prominent example is a GKP Bell state $\Phi$ often used in DV quantum information processing with bosonic qubits~\cite{walshe2020CVteleport,vanLoock2022gkp}. A GKP Bell state can be formed by interacting two canonical GKP states on a 50:50 beamsplitter~\cite{walshe2020CVteleport}, i.e.
\begin{equation}
    \ket{\Phi}\equiv U_{\bm B_{1/2}}\ket{\square^2}.
\end{equation}
The lattice vectors $\{\bm\lambda_J^{\Phi}\}$ are given by the column vectors of the generator matrix $\bm M^{\Phi}$, which can be directly found via Eq.~\eqref{eq:M_lattice}; explicitly,
\begin{equation}
    \bm M^{\Phi}=\ell\bm B_{1/2}\bm\Omega=\frac{\ell}{\sqrt{2}}
    \begin{pmatrix}
        \bm\Omega_1 & \bm\Omega_1 \\
        -\bm\Omega_1 & \bm\Omega_1
    \end{pmatrix},
\end{equation}
where $\bm\Omega_1$ is the symplectic form on $\mathbb{R}^2$. The GKP Bell state has the same lattice spacing as the canonical GKP state $\square^2$, since the former is simply a rotated version of the latter in $\mathbb{R}^4$ and are thus not useful for QEC. On the other hand, GKP Bell states are useful in other quantum information processing tasks~\cite{walshe2020CVteleport,vanLoock2022gkp}, such as quantum teleportation.

\paragraph*{Example: $D_4$ GKP.}
The $D_4$ lattice admits a densest sphere packing in $\mathbb{R}^4$~\cite{baptiste2022multiGKP}. One can generate a $D_4$ lattice from a direct product of two canonical GKP states by the following symplectic transformation,
\begin{equation}\label{eq:Sd4}
    \bm\Lambda_{D_4}^{\top}=\sqrt[4]{2}
    \begin{pmatrix}
        \frac{1}{2} & -\frac{1}{\sqrt{2}} & \frac{1}{2} & 0\\
        0 & \frac{1}{\sqrt{2}} & 0 & \frac{1}{\sqrt{2}}\\
        0 & \frac{1}{\sqrt{2}} & 0 & -\frac{1}{\sqrt{2}} \\
        -\frac{1}{2} & 0 & \frac{1}{2} & \frac{1}{\sqrt{2}}
    \end{pmatrix},
\end{equation}
such that $\ket{D_4}\equiv U_{\bm\Lambda_{D_4}}\ket{\square^2}$. Since $\bm\Lambda_{D_4}\bm\Lambda_{D_4}^{\top}\neq\bm I$, there is squeezing involved.


\section{Syndrome measurements}\label{app:measurements}

We discuss how to extract error syndromes from a general GKP lattice.

\subsection{Error syndrome}
Consider a lattice state $\mathcal{L}$ and a displacement error $\bm e$ such that
\begin{equation}
    \ket{\mathcal{L}; \bm e}\equiv{D}_{\bm e}\ket{\mathcal{L}}.
\end{equation}
Then observe the following set of equalities,
\begin{equation*}
\begin{split}
    {S}^{\mathcal{L}}_{K}\ket{\mathcal{L};\bm e}&=
    {S}^{\mathcal{L}}_{K}{D}_{\bm e}\ket{\mathcal{L}}\\
    &={\rm e}^{-{\rm i}\omega\left(\bm\lambda_K^{\mathcal{L}},\bm e\right)}{D}_{\bm e} {S}^{\,\mathcal{L}}_{K}\ket{\mathcal{L}}\\
    &={\rm e}^{-{\rm i}\omega\left(\bm\lambda_K^{\mathcal{L}},\bm e\right)}{D}_{\bm e}\ket{\mathcal{L}}\\
    &={\rm e}^{-{\rm i}\omega\left(\bm\lambda_K^{\mathcal{L}},\bm e\right)}\ket{\mathcal{L};\bm e},
\end{split}
\end{equation*}
where $\bm\lambda_K^{\mathcal{L}}$ are the lattice vectors that satisfy $\omega(\bm\lambda_K^{\mathcal{L}},\bm\lambda_J^{\mathcal{L}})=2\pi n_{KJ}$ with $n_{KJ}\in\mathbb{Z}$, and we have used the composition rule~\eqref{eq:weyl_comp} to go from the first equality to the second equality. In other words, the error state $\ket{\mathcal{L};\bm e}$ is an eigenstate of the stabilizer ${S}^{\mathcal{L}}_{K}$ with eigenvalue $\exp[-{\rm i}\omega\left(\bm\lambda_K^{\mathcal{L}},\bm e\right)]$. 

The set of quantities $\{\omega(\bm\lambda_K^{\mathcal{L}},\bm e)\mod{2\pi}\}_{K=1}^{2N}$ are the error syndromes extracted from stabilizer measurements on the GKP lattice $\mathcal{L}$. We can package them nicely into an \textit{error syndrome} vector $\bm s$ with the help of a generator matrix $\bm M$,
\begin{equation}\label{eq:syndrome}
    \bm s \equiv \bm s(\bm e)=\ell^{-1}\bm M^{\top}\bm\Omega\bm e\mod\sqrt{2\pi},
\end{equation}
where the modulo operation acts elementwise and we have factored out the canonical spacing $\ell$. If $\mathcal{L}$ is a symplectic self-dual lattice such that $\bm M=\ell\bm\Lambda\bm \Omega$, where $\bm \Lambda\in{\rm Sp}(2N,\mathbb{R})$ [see Eq.~\eqref{eq:M_lattice}], then we may write the syndrome vector as
\begin{equation}
    \bm s = \bm\Lambda^{-1}\bm e \mod\sqrt{2\pi}.
\end{equation}

\begin{figure}
    \centering
    \includegraphics[width=\linewidth]{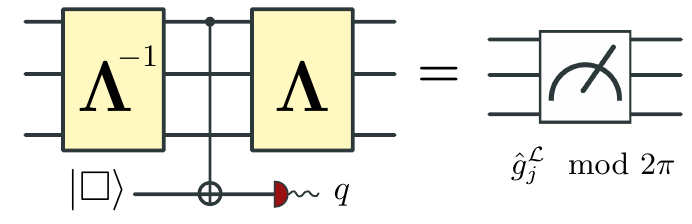}
    \caption{GKP-assisted stabilizer measurement circuit for ${S}_J^{\mathcal{L}}$. Equivalent to measuring the modular quadrature ${\hat{g}^{\mathcal{L}}_J\mod2\pi}$.}
    \label{fig:measurement}
\end{figure}

\subsection{Homodyne and ancilla-assisted measurements}
One way to measure displacements on a lattice state is to perform homodyne measurements along the modular quadratures ($\hat{\bm g}=\bm M^\top\bm\Omega\bm{\hat{r}}\mod{2\pi}$). However, homodyne measurements are destructive, hence an ancilla-assisted measurement scheme is warranted. 

For lattices that can be generated from the canonical lattice via symplectic transformations (i.e., self-dual lattices), we introduce a measurement circuit that is related to the canonical measurement circuit (consisting of standard SUM gates and square GKP~\cite{gkp2000}) via unitary conjugation; see Fig.~\ref{fig:measurement} for an illustration of the stabilizer measure circuit. Our measurement scheme relies on the SUM gate, however we remark that SUM-gate measurement strategies necessarily require inline squeezing, as the SUM-gate is not an orthogonal transformation. An ancilla-assisted measurement strategy which moves squeezing offline has been proposed in Ref.~\cite{vanLoock2022gkp}.

There are $2N$ stabilizers that need to be measured, two for each mode of the GKP lattice $\mathcal{L}$. However, if the GKP lattice $\mathcal{L}$ is an ancillary state in a QEC circuit and can thus be discarded after use, then only $N$ additional GKP ancillae (on top of the $N$ mode GKP lattice $\mathcal{L}$) are required for measurements. This reduces the GKP resources by half. In more detail, we can first perform $N$ nondestructive stabilizer (e.g., $q$ quadrature) measurements via ancilla-assisted measurements, which consumes $N$ GKP measurement ancillae. Following these nondestructive measurements, we subsequently perform $N$ \textit{destructive} homodyne (e.g., $p$ quadrature) measurements on the GKP lattice $\mathcal{L}$. Moreover, this measurement strategy can be performed in parallel on the individual modes.



\section{Symmetry of single-mode lattices}
\label{app:Symmetry_single_mode}
We can have the same lattice $\mathcal{L}$ with different choices of bases. In particular, the generator matrices $\bm M_1$ and $\bm M_2$ generate the same lattice if there exist a unimodular matrix $\bm N$ (i.e. a matrix with integer entries and $\det\bm N=1$) such that
\begin{equation}
    \bm M_1^\top = \bm N \bm M_2^\top.
    \label{eq:lattice-basis-changes}
\end{equation}
Suppose that a single-mode symplectic self-dual lattice is given by (we drop the $\ell=\sqrt{2\pi}$ factor in front of $\bm M$ here for brevity)
\begin{equation}
     \bm M =\bm R(\phi)\text{\bf Sq}(r)\bm R(\theta),
\end{equation}
where $\bm R(\theta)$ and $\bm R(\phi)$ are $2\times2$ rotation matrices and $\text{\bf Sq}(r)$ is the single-mode squeezing. Fixing $r$, it is easy to see that two bases 
\begin{align*}
    &\bm M_1=\text{\bf Sq}(r)\bm R(\theta),\\
    &\bm M_2=\text{\bf Sq}(r)\bm R(\theta+\pi/2),
\end{align*}
are the same lattice since $\bm M_2^\top=\bm R\left(\frac{\pi}{2}\right)\bm M_1^\top$ and $\bm R\left(\frac{\pi}{2}\right)$ is a unimodular matrix.
Furthermore, it is easy to show that 
\begin{align*}
    &\bm M_1=\text{\bf Sq}(r)\bm R(\theta),\\
    &\bm M_2=\text{\bf Sq}(r)\bm R(-\theta),
\end{align*}
are the same lattice
under a reflection about the $x$-axis. Combining the above symmetries, we conclude that two bases
\begin{align*}
    &\bm M_1=\text{\bf Sq}(r)\bm R(\theta),\\
    &\bm M_2=\text{\bf Sq}(r)\bm R(\pi/2-\theta),
\end{align*}
correspond to the same lattice.

\subsection{Hexagonal lattice}

A generator matrix for the hexagonal lattice can be written as
\begin{equation}
    \bm M_{\hexagon}=\frac{\ell_{\hexagon}}{\ell}
    \begin{pmatrix}
    1 & -\frac{1}{2}\\
    0 & \frac{\sqrt{3}}{2}
    \end{pmatrix},
\end{equation}
where $\ell_{\hexagon}/\ell=\sqrt{2}/3^{1/4}\approx1.07$; see Eq.~\eqref{eq:lambda_hex}. By the Bloch-Messiah decomposition,
\begin{equation}
    \bm M_{\hexagon}=\bm R\left(-\frac{\pi}{6}\right)\text{\bf Sq}(3^{1/4})\bm R\left(\frac{\pi}{4}\right),
\end{equation}
where $\text{\bf Sq}(3^{1/4})=\text{diag}(3^{1/4},3^{-1/4})$. Due to rotation symmetry, we have the same lattice if
\begin{align}\label{eq:mhex_unimod}
    \bm M_{\hexagon}^\prime & \equiv \text{\bf Sq}(r)\bm R(\theta)\nonumber\\
    & =  \bm R(\phi) \bm M_{\hexagon}\bm N^\top,
\end{align}
where $\bm N$ is a unimodular matrix. For $\bm N\neq\bm I$, the squeezing value $r\geq3^{1/4}$; see Table~\ref{tab:hexagonal} of the main text.

\section{Proof of Lemma~\ref{lemma:lemma_redux}}
\label{app:proof_lemma_redux}

We prove Lemma~\ref{lemma:lemma_redux} of the main text. The primary techniques that we use are Gaussian channel synthesis and the modewise entanglement theorem~\cite{reznik2003modewise} (Theorem~\ref{thm:synthesis} and Theorem~\ref{thm:modewise} of Appendix~\ref{app:gauss_evol}, respectively). Below, we sometimes write $\mathcal{N}_\sigma$ for the $K$ mode iid AGN channel, as opposed to $\bigotimes_{i=1}^K\mathcal{N}_{\sigma}$, for the sake of brevity.

Denoting $\bm S=\bm S_{\mathsf{enc}}^{-1}$ for simplicity, it is sufficient to show that,
\begin{multline}\label{eq:StoTMS}
    \mathcal{U}_{\bm S}\circ\mathcal{N}_\sigma\circ\mathcal{U}_{\bm S}^{-1}\simeq \\ \left(\mathcal{U}_{\bigoplus\bm S_{G_i}}\otimes{\rm id}_{M-N}\right)\circ\mathcal{N}_\sigma\circ\left(\mathcal{U}_{\bigoplus\bm S_{G_i}}^{-1}\otimes{\rm id}_{M-N}\right),
\end{multline}
where $\bm S\in{\rm Sp}(2K,\mathbb{R})$ is an arbitrary symplectic matrix for $K=M+N$ modes, `$\simeq$' means `equivalent up to local unitaries', ${\rm id}_{M-N}$ is an identity superoperator on $M-N$ modes, and $\bigoplus\bm S_{G_i}$ is shorthand for $\bigoplus_{i=1}^N\bm S_{G_i}$. The channel $\mathcal{G}_1\equiv\mathcal{U}_{\bm S}\circ\mathcal{N}_\sigma\circ\mathcal{U}_{\bm S}^{-1}$ is a Gaussian channel characterized by a displacement-noise vector $\bm d_1=\bm 0$, a scaling matrix $\bm X_1=\bm I$, and a noise matrix $\bm Y_1=\sigma^2\bm S\bm S^\top$ (see Appendix~\ref{app:gauss_evol} for details about characterizing general Gaussian channels). By the modewise entanglement theorem~\cite{reznik2003modewise}, there exists local symplectic matrices $\bm\Lambda_d\in{\rm Sp}(2N,\mathbb{R})$ and $\bm\Lambda_a\in{\rm Sp}(2M,\mathbb{R})$ such that,
\begin{multline}\label{eq:YtoTMS}
    \left(\bm\Lambda_d\oplus\bm\Lambda_a\right)\bm Y_1\left(\bm\Lambda_d^\top\oplus\bm\Lambda_a^\top\right)=\\\sigma^2\left(\bigoplus_{i=1}^{N}\bm S_{G_i}\bm S_{G_i}^{\top}\right)\oplus\bm I_{2(M-N)},
\end{multline}
where $\bm S_{G_i}$ is a TMS operation between the $i$th data mode and the $(N+i)$th ancilla mode. The transformation $\bm\Lambda_d\oplus\bm\Lambda_a$ corresponds to a direct product of local, unitary Gaussian channels $\mathcal{U}_{\bm\Lambda_d\oplus\bm\Lambda_a}=\mathcal{U}_{\bm\Lambda_d}\otimes\mathcal{U}_{\bm\Lambda_a}$. We can pre- and post-process with $\mathcal{U}_{\bm\Lambda_d\oplus\bm\Lambda_a}$ and $\mathcal{U}_{\bm\Lambda_d\oplus\bm\Lambda_a}^{-1}$ to generate a new Gaussian channel,
\begin{align}
    \mathcal{G}_2&=\mathcal{U}_{\bm\Lambda_d\oplus\bm\Lambda_a}\circ\mathcal{G}_1\circ\mathcal{U}_{\bm\Lambda_d\oplus\bm\Lambda_a}^{-1}\nonumber\\
    &=\mathcal{U}_{\bm\Lambda_d\oplus\bm\Lambda_a}\circ\left(\mathcal{U}_{\bm S}\circ\mathcal{N}_\sigma\circ\mathcal{U}_{\bm S}^{-1}\right)\circ\mathcal{U}_{\bm\Lambda_d\oplus\bm\Lambda_a}^{-1}.
\end{align}
Then, by Gaussian channel synthesis, starting from $\bm d_1$, $\bm X_1$ and $\bm Y_1$, $\mathcal{G}_2$ has a characterization $\bm d_2=\bm 0$, $\bm X_2=\bm I$, and
\begin{align}
    \bm Y_2&=\left(\bm\Lambda_d\oplus\bm\Lambda_a\right)\bm Y_1\left(\bm\Lambda_d^\top\oplus\bm\Lambda_a^\top\right)
    \nonumber\\
    &=\sigma^2\left(\bigoplus_{i=1}^{N}\bm S_{G_i}\bm S_{G_i}^{\top}\right)\oplus\bm I_{2(M-N)},
\end{align}
where Eq.~\eqref{eq:YtoTMS} was used for the second equality. Defining the unitary Gaussian channel for the (modewise) TMS operations $\mathcal{U}_{\bigoplus\bm S_{G_i}}\equiv\bigotimes_{i=1}^N\mathcal{U}_{\bm S_{G_i}}$, it follows that,
\begin{equation}
    \mathcal{G}_2=\left(\mathcal{U}_{\bigoplus\bm S_{G_i}}\otimes{\rm id}_{M-N}\right)\circ\mathcal{N}_{\sigma}\circ\left(\mathcal{U}_{\bigoplus\bm S_{G_i}}^{-1}\otimes{\rm id}_{M-N}\right).
\end{equation}
Therefore, up to pre- and post-processing with local Gaussian unitaries, we have that $\mathcal{G}_2\simeq\mathcal{G}_1$, which was to be proved.

\begin{widetext}

\section{Error estimation for general GKP lattices}
\label{app:error_estimation}

Here we derive properties of error estimation---such as the joint PDF $P(\bm x_d,\bm s)$ after a QEC protocol and the minimum mean square estimation (MMSE) $\tilde{\bx}_d = \bm f({\bm s})$ for error syndrome ${\bm s}$---given an encoding $\bm S_{\mathsf{enc}}$ and ancilla GKP state with generator matrix $\bm M$.  
To simplify our derivations, we denote the probability density function (PDF) of the $2n$-dimensional multivariate Gaussian distribution as
\begin{equation}
    g(\bm{\Sigma},\bx - \bm \mu)=\frac{1}{(2\pi)^n\sqrt{\det\bm\Sigma}}\exp{-\frac{1}{2}(\bx-\bm \mu)^\top \bm \Sigma^{-1}(\bx - \bm \mu)},
    \label{g_definition}
\end{equation}
where $\bx=(x_1,..,x_{2n})^\top$, $\bm{\Sigma}$ is the covariance matrix and $\bm \mu$ is the mean. The dimension of the distribution is implicitly given by the dimensions of $\bm{\Sigma}$ and $\bm \mu$.

\subsection{Proof of Theorem~\ref{thm:fMMSE}}
In the QEC protocol, the original noise covariance matrix $\bm V_\xi=\bigoplus_{i=1}^{N+M}\sigma^2_i\bm I_2$ is transformed into  
\begin{align}
    & \bm V_{\bx}^{-1}=\bm S_{\mathsf{enc}}^{\top}\bm V_\xi^{-1} \bm S_{\mathsf{enc}}, 
    \label{eq:AppF-noise-CM-transform}
\end{align}
via the encoding symplectic transform $\bm S_{\mathsf{enc}}$ and decoding symplectic transform $\bm S_{\mathsf{enc}}^{-1}$. Let the additive noise on the data and ancilla be
\begin{align}
    &\bx_d = (x^{(q)}_1,x^{(p)}_1,...,x^{(q)}_N,x^{(p)}_N)^{\top},\\
    & \bx_a = (x^{(q)}_{N+1},x^{(p)}_{N+1},...,x^{(q)}_{N+M},x^{(p)}_{N+M})^{\top},
\end{align}
which are random variables following the joint distribution
\begin{align}
    P(\bm x_d, \bm x_a) = g\left[\bm V_{\bx}, \left(\bx_d,\bx_a\right)\right].\label{eq:joint_dist}
\end{align}
We define the interval $\mathfrak{I}\equiv[-\sqrt{\pi/2},\sqrt{\pi/2}]$, and the error syndrome 
\begin{align} 
{\bm s} = R_{\sqrt{2\pi}} (\bm M^\top \bm \Omega \bx_a)\in \mathfrak{I}^{2M}.
\end{align}
To get the joint distribution of $\bx_d$ and $\bm s$,
we first rewrite Eq.~\eqref{eq:joint_dist} as
\begin{equation}
    P(\bx_d,\bm M^\top \bm \Omega \bx_a)=
    g\left[
    \begin{pmatrix}
    \bm V_d & \bm V_{da}\\
    \bm V_{da}^T & \bm V_a
    \end{pmatrix}^{-1},
    \left(\bx_d,\bm M^\top \bm \Omega \bx_a\right)\right],
\end{equation}
where we have defined the covariance matrix in the block form
\begin{equation}
    \begin{pmatrix}
    \bm V_d & \bm V_{da}\\
    \bm V_{da}^T & \bm V_a
    \end{pmatrix}^{-1}\equiv 
    (\bI_{2N} \oplus {\bm M^\top \bm \Omega}) \bm V_{\bx} (\bI_{2N} \oplus {(\bm M^\top \bm \Omega)} ^{\top})
    .
    \label{eq:AppF-noise-CM-block-form}    
\end{equation}
and used the property $g(\bm\Sigma, \bm x)= g(\bm L\bm\Sigma\bm L^{\top},\bm L\bm x)$ for some invertible matrix $\bm L$. From here, the distribution of the error and syndrome can be solved as
\begin{align}
    P(\bx_d,\bm s) & = \int_{\mathbb{R}^{2M}} \dd\left(\bm M^\top \bm \Omega \bx_a\right) \;\; P(\bx_d,\bm M^\top \bm \Omega \bx_a) \sum_{\bm k}\delta\left(\bm s - \bm M^\top \bm \Omega \bx_a -\bm k \sqrt{2\pi} \right) 
    \label{eq:xd-s-joint-distribution–step1}
    \\
     & = \sum_{\bm k}  g\left((\bI_{2N} \oplus {\bm M^\top \bm \Omega}) \bm V_{\bx} (\bI_{2N} \oplus {(\bm M^\top \bm \Omega)} ^{\top}),(\bx_d,\bm s - \bm k \sqrt{2\pi})\right)
     \label{eq:xd-s-joint-distribution-step2}
     \\
    & =  \sum_{\bm k} g(\bm V_d^{-1},\bm x_d+ \bm V^{-1}_d \bm V_{da}(\bm s-\bm k\sqrt{2\pi})) g(\bm V_{d|a}^{-1}, \bm s -\bm k\sqrt{2\pi}).
    \label{eq:xd-s-joint-distribution}
\end{align}
where we sum over all vector of integers $\bm k\in \mathbb{Z}^{2M}$, $\delta(\cdot)$ is Dirac delta distribution and $\bm V_{d|a}=\bm V_a-\bm V_{da}^T \bm V_d^{-1}\bm V_{da}$. From \eqref{eq:xd-s-joint-distribution–step1} to \eqref{eq:xd-s-joint-distribution-step2}, we  integrate over ${\bm x}_a$. From \eqref{eq:xd-s-joint-distribution-step2} to \eqref{eq:xd-s-joint-distribution}, we adopt the block form of Eq.~\eqref{eq:AppF-noise-CM-block-form} and separate the joint distribution into two parts.

The PDF of the syndrome measurement result can be obtained by integrating over ${\bm x}_d$,
\begin{align}
    & P(\bm s) =\int_{\mathbb{R}^{2N}}\dd{\bm x_d}P(\bm x_d,\bm s)= \sum_{\bm k} g(\bm V_{d|a}^{-1}, \bm s -\bm k\sqrt{2\pi}).
    \label{eq:PDF-s}
\end{align}
The PDF of the conditional distribution is therefore
\begin{align}
    P(\bm x_d|\bm s)  & = P(\bm x_d, \bm s)/ P(\bm s)\nonumber
    \\
    & =  \sum_{\bm n} g\left(\bm V_d^{-1},\bm x_d+ \bm V^{-1}_d \bm V_{da}\left(\bm s-\bm n\sqrt{2\pi}\right)\right) g\left(\bm V_{d|a}^{-1}, \bm s -\bm n\sqrt{2\pi}\right)/ \left(\sum_{\bm m}g(\bm V_{d|a}^{-1}, \bm s -\bm k\sqrt{2\pi}) \right).
\end{align}
The MMSE estimator is obtained by evaluating the mean of the conditional distribution
\begin{align} 
    \tilde{\bx}_d\left(\bm s\right)\equiv \bm f_{\rm MMSE}(\bm s) & = \int_{\mathbb{R}^{2N}}\dd{\bm x_d}\;\; \bm x_d P(\bm x_d|\bm s) \nonumber\\
    & = -\sum_{\bm n} \bm V_d^{-1}\bm V_{da}\left(\bm s-\bm n\sqrt{2\pi}\right) g\left(\bm V_{d|a}^{-1},\bm s -\bm n \sqrt{2\pi}\right) / \left(\sum_{\bm m} g(\bm V_{d|a}^{-1},\bm s -\bm m \sqrt{2\pi}) \right) \nonumber\\
    & = \frac{-\sum_{\bm n} \bm V_d^{-1}\bm V_{da}\left(\bm s-\bm n\sqrt{2\pi}\right)\exp{\sqrt{2\pi} \bm n^{\top} \bm V_{d|a}\bm s-\pi \bm n^{\top} \bm  V_{d|a}\bm n} }{  \sum_{\bm m} \exp{\sqrt{2\pi} {\bm m}^{\top} \bm V_{d|a}\bm s-\pi {\bm m}^{\top} \bm V_{d|a}\bm m}
    },
    \label{eq:App-MMSE-estimator}
\end{align}
where $\bm n,\bm m\in\mathbb{Z}^{2M}$. We have thus proven Eq.~\eqref{eq:fMMSE_main} of Theorem~\ref{thm:fMMSE}.

\subsection{Output covariance matrix}
Given an error syndrome $\bm s$ extracted from a general GKP ancilla in a QEC protocol, the element at the $i$th row and $j$th column of the output covariance matrix $\bm V_{\rm{out}}$ for $N$ data modes is given as
\begin{align}
    [\bm V_{\rm out}]_{ij} &= \int_{\mathbb{R}^{2N}}\dd{\bm x_d} \int_{\mathfrak{I}^{2M}}\dd{\bm s}\;\; (\bx_d-\tilde{\bx}_d)_{i \times j} P(\bm x_d,\bm  s) \nonumber\\
    & = \sum_{\bm k} \int_{\mathfrak{I}^{2M}}\dd{\bm s} \int_{\mathbb{R}^{2N}}\dd{\bx_d}\; {(\bx_d-\tilde{\bx}_d)}_{i \times j}g(\bm V_d^{-1},\bm x_d+ \bm V^{-1}_d \bm V_{da}(\bm s-\bm k\sqrt{2\pi})) g(\bm V_{d|a}^{-1}, \bm s -\bm k\sqrt{2\pi}) \nonumber\\
    &= \sum_{\bm k} \int_{\mathfrak{I}^{2M}}\dd{\bm s}\;\left( [\bm V_d^{-1}]_{ij} g(\bm V_{d|a}^{-1}, \bm s -\bm k\sqrt{2\pi}) +  g(\bm V_{d|a}^{-1}, \bm s -\bm k\sqrt{2\pi}) \left\{\tilde{\bx}_d + \left[\bm V_d^{-1}\bm V_{da}(\bm s-\bm k\sqrt{2\pi})\right]\right\}_{i\times j}\right) \nonumber\\
    & = [\bm V_d^{-1}]_{ij} + \sum_{\bm k}\int_{\mathfrak{I}^{2M}}\diff{\bm  s}\;\;g(\bm V_{d|a}^{-1},\bm s - \bm k \sqrt{2\pi}) \left\{\tilde{\bx}_d + \left[\bm V_d^{-1}\bm V_{da}(\bm s-\bm k\sqrt{2\pi})\right]\right\}_{i\times j} \nonumber\\
    & =  [\bm V_d^{-1}]_{ij} + \sum_{\bm k}2\pi \int_{\mathfrak{I}^{2M}}\diff{\bm  s}\;\;g(\bm V_{d|a}^{-1},\bm s - \bm k \sqrt{2\pi})\left[\frac{ \sum_{\bm n} \bm V_d^{-1} \bm V_{da} (\bm n -\bm k)g(\bm V_{d|a}^{-1},\bm s - \bm n \sqrt{2\pi})}{\sum_{\bm m} g(\bm V_{d|a}^{-1},\bm s - \bm m \sqrt{2\pi})}\right]_{i \times j}.
    \label{eq:Vout_ij}
\end{align}
where we use the notation $\bm x _{i\times j} = \bm x_i \bm x_j$, as the product vector components $i$ and $j$ and have expanded and simplified by using the fact that 
\begin{equation}
    \int_{\mathbb{R}^{2M}}\dd{\bx_d} \; (\bm x_i-\bm\mu_i)(\bm x_j-\bm\mu_j)\frac{\det(\bm V_{d})^\frac{1}{2}}{(2\pi)^N }\exp{-\frac{1}{2}(\bm x-\bm \tau)^T \bm V_{d}(\bm x-\bm \tau)} = {[\bm V_d^{-1}]}_{ij}+(\bm\mu_i-\bm\tau_i)(\bm\mu_j-\bm\tau_j).
    \label{eq:app_integral_simplification}
\end{equation}

\subsection{\QZ{Explicit analyses of the decoding strategies for GKP-TMS code}}
\label{app:explicit_applying_theorems_estimator}

\QZ{
Here we show explicit calculation to obtain the estimators for GKP-TMS codes with GKP square lattice.
According to Eq.~\eqref{eq:Noise-CM-transform} and the encoding operation of the GKP-TMS code in Eq.~\eqref{eq:tms_matrix_main}, the random displacement $\boldsymbol{z} = (z_q^{(1)},z_p^{(1)},z_q^{(2)},z_p^{(2)})$ of the output AGN channel now has the covariance matrix
\begin{align}
    \label{covMatrix2}
    \begin{split}
    &\bm V_x = \bm{S}_{\rm{enc}}^{-1}\bm{V}_\xi\bm{S}_{\rm{enc}}^{-\top}=
    \begin{pmatrix}
        [G\sigma_1^2+(G-1)\sigma_2^2]\boldsymbol{I} && -\sqrt{G(G-1)}(\sigma_1^2+\sigma_2^2)\boldsymbol{Z}\\
        -\sqrt{G(G-1)}(\sigma_1^2+\sigma_2^2)\boldsymbol{Z} && [G\sigma_2^2+(G-1)\sigma_1^2]\boldsymbol{I}
    \end{pmatrix}.
    \end{split}
\end{align}
The GKP square lattice has generator matrix $\bm M = \bm I$. From eq.~\eqref{eq:MMSE_submatrices}, we have 
\begin{align}
    &\bm V_d=\frac{(G-1)\sigma_1^2+G \sigma_2^2}{\sigma_1^2 \sigma_2^2} \bI, \quad \bm V_{da}=\frac{\sqrt{G(G-1)}(\sigma_1^2 + \sigma_2^2)}{\sigma_1^2 \sigma_2^2} \boldsymbol{\Omega} \boldsymbol{Z}, \quad \bm V_{d|a}=\frac{1}{(G-1)\sigma_1^2+G \sigma_2^2} \bI.
\end{align}
The linear estimation becomes
\begin{align}
    \bm f_{\rm{Linear}}(\bs) = -\bm V_d^{-1} \bm V_{da} \bs =  \tilde{\mu}\begin{pmatrix}
        0 & 1\\
        1 & 0
    \end{pmatrix} \bs,
\end{align}
where
\begin{align}
    & \tilde{\mu} = \frac{\sqrt{G(G-1)}(\sigma_1^2+\sigma_2^2)}{(G-1)\sigma_1^2+G\sigma_2^2}.
\end{align}
Further calculations on the output variance can be found in Refs.~\cite{noh2020o2o,wu2021continuous}.
}

\QZ{
The MMSE estimator of Eq.~\eqref{eq:fMMSE_main} becomes
\begin{align}
    \bm f_{\rm MMSE}(\bm s)  = \frac{\sum_{\bm n} \tilde{\mu} \begin{pmatrix}
        0 & 1\\
        1 & 0
    \end{pmatrix} (\bm s-\bm n\sqrt{2\pi})g(\sigma_G^2\bI,\bs-\bm n\sqrt{2\pi}) }
    {\sum_{\bm m} g(\sigma_G^2\bI,\bs-\bm m\sqrt{2\pi}) }.
\end{align}
The output covariance matrix of Eq.~\eqref{eq:Vout_ij} leads to
\begin{align}
    [\bm V_{\rm out}]_{ij} = \frac{\sigma_1^2 \sigma_2^2}{(G-1)\sigma_1^2+G \sigma_2^2} [\bI]_{ij}+ \sum_{\bm k}2\pi \tilde{\mu}^2 \int_{\mathfrak{I}^{2}}\diff{\bm  s}\;\;g(\sigma_G^2\bI,\bm s - \bm k \sqrt{2\pi})\left[\frac{ \sum_{\bm n} (\bm n -\bm k)g(\sigma_G^2\bI,\bm s - \bm n \sqrt{2\pi})}{\sum_{\bm m} g(\sigma_G^2\bI,\bm s - \bm m \sqrt{2\pi})}\right]_{i \times j}.
\end{align}
}

\subsection{Lattice basis transformations and estimators}
\label{app:lattice_transform} 

We show that the output joint PDF of the QEC protocol is invariant under a change of lattice basis. We then show that the MMSE estimator of Theorem~\ref{thm:fMMSE} is likewise invariant but that linear estimation depends upon the choice of the lattice basis.

\begin{theorem}[Invariance of the joint PDF]\label{thm:invariant_pdf}
    Consider a generator matrix $\bm M$ and a change of lattice basis by a unimodular matrix $\bm N$, which defines another generator matrix $\bm M^\prime=\bm M\bm N^\top$. Let $\bm s=R_{\sqrt{2\pi}} (\bm M^\top \bm \Omega \bx_a)$ be the error syndrome and consider the joint PDF $P(\bm x_d,\bm s)$. Likewise, let ${\bm s}^\prime = R_{\sqrt{2\pi}} ( \bm M^{\prime\,\top}\bm \Omega \bx_a)=R_{\sqrt{2\pi}} (\bm N \bm M^\top \bm \Omega \bx_a)$ be the error syndrome in the new basis and the corresponding joint PDF $P^\prime(\bx_d,\bm s^\prime)$. Then, $P^\prime(\bx_d,\bm s^\prime)=P(\bm x_d,\bm s)$.
\end{theorem}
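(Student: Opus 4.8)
The plan is to realize $P(\bm x_d,\bm s)$ as an aliasing (periodization) sum of a single fixed Gaussian density, and then show that the change of basis acts on the summation index as a volume-preserving bijection of the period lattice $\sqrt{2\pi}\,\mathbb{Z}^{2M}$. Concretely, I would set $\bm T\equiv\bm M^\top\bm\Omega$ and introduce the \emph{unwrapped} joint density of the data noise $\bm x_d$ and the continuous syndrome $\bm u=\bm T\bm x_a\in\mathbb{R}^{2M}$,
\begin{equation}
    \tilde P(\bm x_d,\bm u)\equiv g\Big((\bm I_{2N}\oplus\bm T)\,\bm V_{\bx}\,(\bm I_{2N}\oplus\bm T)^\top,\,(\bm x_d,\bm u)\Big).
\end{equation}
The derivation of Theorem~\ref{thm:fMMSE} already shows [cf. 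Eq.~\eqref{eq:xd-s-joint-distribution–step1}--\eqref{eq:xd-s-joint-distribution}] that the wrapped PDF is the periodization $P(\bm x_d,\bm s)=\sum_{\bm k\in\mathbb{Z}^{2M}}\tilde P(\bm x_d,\bm s-\bm k\sqrt{2\pi})$ on $\bm s\in\mathfrak{I}^{2M}$.

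Since $\bm M^\prime=\bm M\bm N^\top$ gives $\bm M^{\prime\,\top}\bm\Omega=\bm N\bm T$, the primed unwrapped density is built from $\bm N\bm T$. Applying the identity $g(\bm\Sigma,\bm x)=g(\bm L\bm\Sigma\bm L^\top,\bm L\bm x)$ already used in the appendix, with $\bm L=\bm I_{2N}\oplus\bm N$, yields $\tilde P^\prime(\bm x_d,\bm u^\prime)=\tilde P(\bm x_d,\bm N^{-1}\bm u^\prime)$; here unimodularity is essential, since $\det\bm N=1$ removes any Jacobian factor and the integrality of both $\bm N$ and $\bm N^{-1}$ makes $\bm k\mapsto\bm N^{-1}\bm k$ a bijection of $\mathbb{Z}^{2M}$. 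Periodizing in the primed basis and relabeling $\bm j=\bm N^{-1}\bm k$ then gives
\begin{align}
P^\prime(\bm x_d,\bm s^\prime)
&=\sum_{\bm k}\tilde P^\prime\big(\bm x_d,\bm s^\prime-\bm k\sqrt{2\pi}\big) \nonumber\\
&=\sum_{\bm k}\tilde P\big(\bm x_d,\bm N^{-1}\bm s^\prime-(\bm N^{-1}\bm k)\sqrt{2\pi}\big) \nonumber\\
&=\sum_{\bm j}\tilde P\big(\bm x_d,\bm N^{-1}\bm s^\prime-\bm j\sqrt{2\pi}\big).
\end{align}

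To close the argument I would track the modulo reduction. Writing $\bm u=\bm s+\bm m\sqrt{2\pi}$ with $\bm m\in\mathbb{Z}^{2M}$, we get $\bm u^\prime=\bm N\bm u=\bm N\bm s+(\bm N\bm m)\sqrt{2\pi}$, so that $\bm s^\prime=R_{\sqrt{2\pi}}(\bm N\bm s)$ and hence $\bm N^{-1}\bm s^\prime\equiv\bm s\pmod{\sqrt{2\pi}\,\mathbb{Z}^{2M}}$. Because the full-lattice periodization is invariant under shifting its argument by any element of $\sqrt{2\pi}\,\mathbb{Z}^{2M}$, the last display equals $\sum_{\bm j}\tilde P(\bm x_d,\bm s-\bm j\sqrt{2\pi})=P(\bm x_d,\bm s)$, which is the claim.

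The main obstacle I anticipate is the careful bookkeeping of $R_{\sqrt{2\pi}}$: one must confirm that unimodularity guarantees both $\bm s^\prime=R_{\sqrt{2\pi}}(\bm N\bm s)$ (so the basis change loses no syndrome information) and that $\bm N^{-1}\bm s^\prime$ lands in the \emph{same} coset of $\sqrt{2\pi}\,\mathbb{Z}^{2M}$ as $\bm s$; the remaining ingredients—Gaussian-density invariance and reindexing of the sum—follow immediately from $\det\bm N=1$ and the integrality of $\bm N^{\pm1}$. As a useful byproduct, the same conjugation by $\bm I_{2N}\oplus\bm N$ in Eq.~\eqref{eq:MMSE_submatrices} shows $\bm V_d^\prime=\bm V_d$, $\bm V_{da}^\prime=\bm V_{da}\bm N^{-1}$ and $\bm V_{d|a}^\prime=\bm N^{-\top}\bm V_{d|a}\bm N^{-1}$, from which the basis-invariance of the MMSE estimator (and the basis-dependence of the linear estimator) can be read off directly.
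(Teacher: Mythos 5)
Your proposal is correct and follows essentially the same route as the paper's proof: both rest on the identity $g(\bm\Sigma,\bm x)=g(\bm L\bm\Sigma\bm L^\top,\bm L\bm x)$ to pull $\bm N^{-1}$ into the Gaussian, the reindexing of the integer sum via unimodularity of $\bm N^{\pm1}$, and the observation that $\bm N^{-1}\bm s^\prime$ and $\bm s$ differ by an element of $\sqrt{2\pi}\,\mathbb{Z}^{2M}$, which a final index shift absorbs. Your ``periodization of an unwrapped density'' framing is just a cleaner packaging of the paper's Eqs.~\eqref{eq:app_PDF_xd_s_prime-step1}--\eqref{eq:app_PDF_xd_s_prime-step5}, and the modulo bookkeeping you flag as the main obstacle is handled there exactly as you propose.
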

\begin{proof}
From properties of modulo operations, we have that $R_{\sqrt{2\pi}} (\bm N^{-1} {\bm s}^\prime)=R_{\sqrt{2\pi}} (\bm N^{-1}\bm N \bm M^\top \bm \Omega \bx_a)=\bm s$, as both $\bm N$ and $\bm N^{-1}$ are a matrices of integers. Therefore $ \bm N^{-1} {\bm s}^\prime=\bm s+\bm \ell_{\bm N^{-1} {\bm s}^\prime} \sqrt{2\pi}$ for some vector of integers $\bm \ell_{\bm N^{-1} {\bm s}^\prime}$ determined by $\bm N^{-1} {\bm s}^\prime$. The joint PDF for the data $\bm x_d$ and the error syndrome $\bm s^\prime$ in the new basis is then
\begin{align}
    P^\prime(\bx_d,\bm s^\prime) & = \sum_{\bm k}  g\left((\bI_{2N} \oplus {(\bm N \bm M^\top \bm \Omega)}) \bm V_{\bx} (\bI_{2N} \oplus {(\bm N \bm M^\top \bm \Omega)} ^{\top}),(\bx_d,\bm s^\prime - \bm k \sqrt{2\pi})\right) 
    \label{eq:app_PDF_xd_s_prime-step1}
    \\
    & = \sum_{\bm k}  g\left((\bI_{2N} \oplus {(\bm M^\top \bm \Omega)}) \bm V_{\bx} (\bI_{2N} \oplus {(\bm M^\top \bm \Omega)} ^{\top}),(\bx_d,\bm N^{-1}(\bm s^\prime - \bm k \sqrt{2\pi}))\right)
    \label{eq:app_PDF_xd_s_prime-step2}
    \\
    & = \sum_{\bm k^{\prime}}  g\left((\bI_{2N} \oplus {(\bm M^\top \bm \Omega)}) \bm V_{\bx} (\bI_{2N} \oplus {(\bm M^\top \bm \Omega)} ^{\top}),(\bx_d,\bm N^{-1} \bm s^\prime - \bm k^{\prime} \sqrt{2\pi})\right)
    \label{eq:app_PDF_xd_s_prime-step3}
    \\
     & = \sum_{\bm k^{\prime}}  g\left((\bI_{2N} \oplus {(\bm M^\top \bm \Omega)}) \bm V_{\bx} (\bI_{2N} \oplus {(\bm M^\top \bm \Omega)} ^{\top}),(\bx_d, \bm s - (\bm k^{\prime}-\bm \ell_{\bm N^{-1} {\bm s}^\prime}) \sqrt{2\pi})\right)
     \label{eq:app_PDF_xd_s_prime-step4}
     \\
     &=\sum_{\bm k^{\prime\prime}} g\left((\bI_{2N} \oplus {(\bm M^\top \bm \Omega)}) \bm V_{\bx} (\bI_{2N} \oplus {(\bm M^\top \bm \Omega)} ^{\top}),(\bx_d, \bm s - \bm k^{\prime\prime}\sqrt{2\pi})\right)
     \label{eq:app_PDF_xd_s_prime-step5}
     \\
     &=P\left(\bx_d,{\bm s}\right).
\end{align}
\eqref{eq:app_PDF_xd_s_prime-step1} follows directly from the Eq.~\eqref{eq:xd-s-joint-distribution}. 
To move from \eqref{eq:app_PDF_xd_s_prime-step2} to \eqref{eq:app_PDF_xd_s_prime-step3}, we use the fact that $\bm N^{-1}$ is a unimodular matrix and acting with $\bm N^{-1}$ on an integer vector simply changes the summation index, $\bm N^{-1}\bm k \rightarrow \bm k^{\prime}$. In \eqref{eq:app_PDF_xd_s_prime-step4}, we use $ \bm N^{-1} {\bm s}^\prime=\bm s+\bm \ell_{\bm N^{-1} {\bm s}^\prime} \sqrt{2\pi}$. And in \eqref{eq:app_PDF_xd_s_prime-step5}, we change the summation index, which is over all integer vectors in $\mathbb{Z}^{2M}$. 
\end{proof}

We next show that the MMSE estimator $\tilde{\bx}_d\left(\bm s\right)= \bm f_{\rm MMSE}(\bm s)$ of Eq.~\eqref{eq:App-MMSE-estimator} is invariant under a basis transformation, but before doing so, we prove a useful lemma.
\begin{lemma}\label{lemma:invariant_integral}
Given a vector $\bm s\in\mathbb{R}^{2M}$, a unimodular matrix $\bm N$, and a function $J(\bm s)$, it follows that
\begin{equation}
\int_{\mathfrak{I}^{2M}}J(\bs)\dd{\left(R_{\sqrt{2\pi}}(\bm N \bs)\right)}=\int_{\mathfrak{I}^{2M}}J(\bs)\dd{\bs}.
\end{equation}
\end{lemma}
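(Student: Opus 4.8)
The plan is to recognize that this identity is, at its core, the statement that the elementwise modular reduction composed with a unimodular linear map preserves Lebesgue measure on the fundamental cell $\mathfrak{I}^{2M}$. Concretely, I would set $\Phi(\bs)\equiv R_{\sqrt{2\pi}}(\bm N\bs)$ and argue that $\dd\big(R_{\sqrt{2\pi}}(\bm N\bs)\big)=\dd\bs$ as measures on $\mathfrak{I}^{2M}$; once the two measures coincide (up to a null set), integrating the \emph{same} integrand $J(\bs)$ against either one yields the same value, which is precisely the assertion.

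First I would analyze $\Phi$ away from a measure-zero set. Since $R_{\sqrt{2\pi}}$ acts componentwise, reducing each coordinate modulo $\sqrt{2\pi}$ into $\mathfrak{I}=[-\sqrt{\pi/2},\sqrt{\pi/2}]$, it is smooth everywhere except on the preimage of the cube faces, i.e. where some component of $\bm N\bs$ lands in $\sqrt{2\pi}(\mathbb{Z}+\tfrac12)$. This exceptional locus is a finite union of affine hyperplanes within $\mathfrak{I}^{2M}$ and hence has Lebesgue measure zero. On its complement, $R_{\sqrt{2\pi}}$ is locally a constant integer shift, so $\Phi(\bs)=\bm N\bs-\sqrt{2\pi}\,\bm k_{\bs}$ with $\bm k_{\bs}\in\mathbb{Z}^{2M}$ locally constant.

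The key step is then the Jacobian computation: on each such region $\Phi$ is affine with linear part $\bm N$, so its Jacobian matrix equals $\bm N$ almost everywhere, and because $\bm N$ is unimodular, $\abs{\det\bm N}=1$. Therefore the pullback of Lebesgue measure is $\dd\big(R_{\sqrt{2\pi}}(\bm N\bs)\big)=\abs{\det\bm N}\,\dd\bs=\dd\bs$ almost everywhere on $\mathfrak{I}^{2M}$, which gives the result upon integrating $J$.

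I expect the only genuine subtlety to be the non-smoothness of $R_{\sqrt{2\pi}}$ at the cube faces; I would dispatch it by confirming that the singular locus is null and therefore does not contribute to the integral. As a sanity check and an alternative route, I would note that since $\bm N$ has integer entries with integer inverse (as $\det\bm N=1$), it induces a measure-preserving bijective automorphism of the torus $\mathbb{R}^{2M}/\sqrt{2\pi}\mathbb{Z}^{2M}$, of which $\mathfrak{I}^{2M}$ is a fundamental domain; this offers a clean bijective change-of-variables argument should one wish to avoid the piecewise reasoning. I would keep the Jacobian argument as the primary one, however, since it matches the $\dd(\cdots)$ notation in the statement most directly.
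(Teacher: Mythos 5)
Your proposal is correct and takes essentially the same approach as the paper's proof: both arguments split $\mathfrak{I}^{2M}$ into the finitely many regions on which $R_{\sqrt{2\pi}}(\bm N\bs)$ coincides with the affine map $\bm N\bs-\sqrt{2\pi}\,\bm k$ for a fixed integer vector $\bm k$, and then use $\abs{\det\bm N}=1$ to conclude that the volume element is preserved. The paper additionally makes explicit that $\bs\mapsto R_{\sqrt{2\pi}}(\bm N\bs)$ is a bijection of $\mathfrak{I}^{2M}$ onto itself (with inverse $\bs=R_{\sqrt{2\pi}}\big(\bm N^{-1}R_{\sqrt{2\pi}}(\bm N\bs)\big)$), which is exactly the content of your torus-automorphism side remark.
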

\begin{proof}
For brevity, we use the shorthand $R()$ for the modulo function $R_{\sqrt{2\pi}}()$. Given a unimodular matrix $\bm N$, the function $f:\bs \rightarrow R(\bm N \bs)$ is a bijection from $\mathfrak{I}^{2M}$ to $\mathfrak{I}^{2M}$, as we can find the inverse function $f^{-1}:R(\bm N \bs) \rightarrow \bs$ by $\bs = R\left(\bm N^{-1} R(\bm N \bs)\right)$.
We can then divide the region $\mathfrak{I}^{2M}$ into sub regions $A_{\bk}= \left\{\bx \in \mathfrak{I}^{2M}| \bm N \bx -\bk\sqrt{2\pi} \in \mathfrak{I}^{2M}\right\}$. There are finite number of these regions since $\bm N$ and $\bx$ are finite. Let $B_{\bk} = \left\{ f(\bx)| \bx \in A_{\bk} \right\}$ be the image of $A_{\bk}$. Then $\bx \in A_{\bk}$ is equivalent to $f(\bx)=\bm N \bx -\bk \sqrt{2\pi} \in B_{\bk}$ by the map $f$ and thus
\begin{align}
    \int_{\mathfrak{I}^{2M}}J(\bs)\dd{\left(f( \bs)\right)}&=\sum_{\bk} \int_{B_{\bk}}J(\bs)\dd{\left(f( \bs)\right)} \label{eq:lemma10_step1}
    \\
    &= \sum_{\bk} \int_{B_{\bk}}J(\bs)\dd{\left(\bm N \bs - \bk \sqrt{2\pi} \right)} \label{eq:lemma10_step2} \\
    & = \sum_{\bk} \int_{A_{\bk}}J(\bs)\dd{\bs} \label{eq:lemma10_step3}
    \\ 
    &= \int_{\mathfrak{I}^{2M}}J(\bs)\dd{\bs}, \label{eq:lemma10_step4}
\end{align}
where we use $|\bm N| = 1$ from \eqref{eq:lemma10_step2} to \eqref{eq:lemma10_step3}.
\end{proof}

\begin{theorem}[Invariance of MMSE]\label{thm:invariant_mmse}
    Consider a generator matrix $\bm M$ and change of basis by a unimodular matrix $\bm N$, which defines another generator matrix $\bm M^\prime=\bm M\bm N^\top$. Let $\bm s=R_{\sqrt{2\pi}} (\bm M^\top \bm \Omega \bx_a)$ be the error syndrome and consider the MMSE estimator $\tilde{\bx}_d\left(\bm s\right)= \bm f_{\rm MMSE}(\bm s)$. Likewise, let ${\bm s}^\prime = R_{\sqrt{2\pi}} (\bm N \bm M^\top \bm \Omega \bx_a)$ be the error syndrome in the new basis and the corresponding MMSE estimator $\tilde{\bx}_d^\prime\left(\bm s^\prime\right)= \bm f_{\rm MMSE}(\bm s^\prime)$. Then, $\tilde{\bx}_d^\prime\left(\bm s^\prime\right)=\tilde{\bx}_d\left(\bm s\right)$.
\end{theorem}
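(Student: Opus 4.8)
The plan is to reduce everything to the Invariance of the joint PDF (Theorem~\ref{thm:invariant_pdf}) already in hand, since the MMSE estimator is by definition the conditional mean extracted from that same joint distribution. First I would write the estimator in the new basis as a pointwise-in-$\bm s^\prime$ ratio,
\[
    \tilde{\bm x}_d^\prime(\bm s^\prime)=\frac{\int_{\mathbb{R}^{2N}} d\bm x_d\, \bm x_d\, P^\prime(\bm x_d, \bm s^\prime)}{P^\prime(\bm s^\prime)},
\]
and emphasize the structural point that makes the argument short: the only integration here runs over the continuous data variable $\bm x_d\in\mathbb{R}^{2N}$ at a \emph{fixed} syndrome value, so no integral over the syndrome domain $\mathfrak{I}^{2M}$ appears at this stage.

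The key step is to invoke the syndrome correspondence $\bm s=R_{\sqrt{2\pi}}(\bm N^{-1}\bm s^\prime)$ established inside the proof of Theorem~\ref{thm:invariant_pdf}, under which that theorem supplies the pointwise identity $P^\prime(\bm x_d, \bm s^\prime)=P(\bm x_d, \bm s)$. Integrating this identity over $\bm x_d$ gives the marginal identity $P^\prime(\bm s^\prime)=P(\bm s)$ (matching the denominators), and weighting by $\bm x_d$ before integrating matches the numerators. Dividing the two then yields $\tilde{\bm x}_d^\prime(\bm s^\prime)=\tilde{\bm x}_d(\bm s)$, which is exactly the asserted invariance.

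I do not expect a genuine obstacle, since the substantive work lives in Theorem~\ref{thm:invariant_pdf}; the only point demanding care is the bookkeeping of the syndrome pair. The vectors $\bm s$ and $\bm s^\prime$ are not equal but are related by the integer matrix $\bm N^{-1}$ modulo $\sqrt{2\pi}$, and it is for this matched pair---both originating from a common error $\bm x_a$---that equality is claimed; once this correspondence is stated, the substitution is unambiguous and the remaining steps are purely formal. Finally, I would note the contrast with the linear estimator $\bm f_{\rm{Linear}}(\bm s)=-\bm V_d^{-1}\bm V_{da}\bm s$, which retains only the $\bm n=0$ term of the MMSE sum and hence singles out a basis-dependent representative of the syndrome, breaking the modulo-$\sqrt{2\pi}$ symmetry that protects the full estimator and thereby spoiling invariance. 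The Invariant-Integral Lemma~\ref{lemma:invariant_integral} is not needed for the estimator itself; it is reserved for the separate task of integrating over $\mathfrak{I}^{2M}$ when establishing invariance of the output covariance $\bm V_{\rm out}$.
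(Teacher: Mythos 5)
Your proposal is correct and follows essentially the same route as the paper: both reduce the claim to the invariance of the joint PDF (Theorem~\ref{thm:invariant_pdf}) under the syndrome correspondence $\bm s=R_{\sqrt{2\pi}}(\bm N^{-1}\bm s^\prime)$, and then identify the MMSE estimators as the conditional means of identical conditional distributions (your explicit joint-over-marginal ratio is just the definition of the conditional the paper uses directly). Your closing remarks---that Lemma~\ref{lemma:invariant_integral} is only needed later for $\bm V_{\rm out}$, and that linear estimation breaks the modulo-$\sqrt{2\pi}$ symmetry---also match the paper's organization.
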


\begin{proof}
From Theorem~\ref{thm:invariant_pdf}, we have that $P\left(\bx_d,\bm s\right)=P^\prime(\bx_d,\bm s^\prime)$ and thus $P\left(\bx_d|\bm s\right) =P^\prime(\bx_d|\bm s^\prime)$ for the conditional distribution, such that
\begin{align*}
\tilde{\bx}_d^\prime(\bm s^\prime)&=\int_{\mathbb{R}^{2N}}\dd{\bm x_d}\; \bm x_d P^\prime(\bm x_d|\bm s^\prime)
\\ &= \int_{\mathbb{R}^{2N}}\dd{\bm x_d}\; \bm x_d P\left(\bm x_d|R_{\sqrt{2\pi}} \bm s\right)
\\&=\tilde{\bx}_d \left(\bm s\right),
\end{align*}
which was to be proved.
\end{proof}

\begin{corollary}
    The output covariance matrix $\bm V_{\rm out}$ describing the noise on the data modes (i.e., the second moments of data error) is invariant under a basis transformation on the ancillary lattice.
\end{corollary}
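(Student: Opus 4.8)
The plan is to reduce the corollary directly to the two preceding invariance results together with the integral identity of Lemma~\ref{lemma:invariant_integral}. First I would write the output covariance matrix in the transformed basis, starting from the defining expression of Eq.~\eqref{eq:Vout_ij}:
\[
[\bm V_{\rm out}^\prime]_{ij} = \int_{\mathbb{R}^{2N}}\dd{\bx_d}\int_{\mathfrak{I}^{2M}}\dd{\bm s^\prime}\;(\bx_d-\tilde{\bx}_d^{\prime})_{i\times j}\,P^\prime(\bx_d,\bm s^\prime),
\]
where $\bm s^\prime = R_{\sqrt{2\pi}}(\bm N\bm M^\top\bm\Omega\bx_a)=R_{\sqrt{2\pi}}(\bm N\bm s)$ is the syndrome in the new basis $\bm M^\prime=\bm M\bm N^\top$, and $\tilde{\bx}_d^{\prime}=\bm f_{\rm MMSE}(\bm s^\prime)$ is the corresponding MMSE estimator.

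Next I would replace the two basis-dependent objects in the integrand by their untransformed counterparts. By Theorem~\ref{thm:invariant_pdf}, the joint PDF obeys $P^\prime(\bx_d,\bm s^\prime)=P(\bx_d,\bm s)$, and by Theorem~\ref{thm:invariant_mmse} the estimator obeys $\tilde{\bx}_d^{\prime}(\bm s^\prime)=\tilde{\bx}_d(\bm s)$, with $\bm s$ recovered from $\bm s^\prime$ via $\bm s=R_{\sqrt{2\pi}}(\bm N^{-1}\bm s^\prime)$ (using that $\bm N^{-1}$ is integral). Hence the entire integrand factor $(\bx_d-\tilde{\bx}_d^{\prime})_{i\times j}\,P^\prime(\bx_d,\bm s^\prime)$ equals $(\bx_d-\tilde{\bx}_d)_{i\times j}\,P(\bx_d,\bm s)$ pointwise, and the only remaining discrepancy between $[\bm V_{\rm out}^\prime]_{ij}$ and $[\bm V_{\rm out}]_{ij}$ is the integration measure in the ancilla syndrome variable.

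The remaining task is therefore the change of integration variable from $\bm s^\prime$ to $\bm s$ over the fundamental domain $\mathfrak{I}^{2M}$, which is precisely the content of Lemma~\ref{lemma:invariant_integral}. Since $\bm N$ is unimodular, the wrapped map $\bm s\mapsto R_{\sqrt{2\pi}}(\bm N\bm s)$ is a bijection of $\mathfrak{I}^{2M}$ that preserves the Lebesgue measure, so $\int_{\mathfrak{I}^{2M}}J(\bm s)\,\dd{(R_{\sqrt{2\pi}}(\bm N\bm s))}=\int_{\mathfrak{I}^{2M}}J(\bm s)\,\dd{\bm s}$ for any integrand $J$. Applying this identity to the rewritten integrand then yields $[\bm V_{\rm out}^\prime]_{ij}=[\bm V_{\rm out}]_{ij}$ for every $i,j$, which establishes the claim.

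The main obstacle---and the reason Lemma~\ref{lemma:invariant_integral} is required rather than an ordinary Jacobian argument---is that the relation between $\bm s$ and $\bm s^\prime$ is not a global linear map but a \emph{modular} (phase-space-wrapped) one, so the naive substitution $\dd{\bm s^\prime}=\abs{\det\bm N}\dd{\bm s}$ does not literally apply across $\mathfrak{I}^{2M}$. The care lies in partitioning $\mathfrak{I}^{2M}$ into the preimage cells $A_{\bm k}$ of the wrapped map, on each of which the map is a genuine affine shift of unit Jacobian, and then reassembling the images $B_{\bm k}$ to recover the full domain---exactly the bookkeeping carried out in the proof of Lemma~\ref{lemma:invariant_integral}. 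Once that measure invariance is secured, the corollary follows immediately from the pointwise invariance of both the joint PDF and the MMSE estimator.
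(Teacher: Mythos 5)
Your proposal is correct and follows essentially the same route as the paper's own proof: writing $[\bm V_{\rm out}^\prime]_{ij}$ via Eq.~\eqref{eq:Vout_ij}, invoking Theorem~\ref{thm:invariant_pdf} and Theorem~\ref{thm:invariant_mmse} to replace the primed PDF and estimator pointwise, and then using Lemma~\ref{lemma:invariant_integral} to handle the modular change of syndrome variable over $\mathfrak{I}^{2M}$. Your closing remark on why the wrapped (rather than globally linear) nature of $\bm s\mapsto R_{\sqrt{2\pi}}(\bm N\bm s)$ forces the cell-decomposition argument of the lemma accurately reflects the paper's reasoning as well.
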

This corollary follows immediately from Theorem~\ref{thm:invariant_mmse}, but we explicitly derive such for clarity.
\begin{proof}
Consider the output covariance matrix in the new basis $\bm V_{\rm out}^\prime$ [defined in a similar fashion as Eq.~\eqref{eq:Vout_ij}] and the following set of equalities,
\begin{align}
[\bm V_{\rm out}^\prime]_{ij}& =\int_{\mathbb{R}^{2N}}\dd{\bm x_d} \int_{\mathfrak{I}^{2M}}\dd{\bm s^\prime}\; \left[\bx_d-\tilde{\bx}_d^\prime\left(\bm s^\prime\right)\right]_{i \times j} P^\prime(\bm x_d,\bm  s^\prime)\nonumber
\\
&=\int_{\mathbb{R}^{2N}}\dd{\bm x_d} \int_{\mathfrak{I}^{2M}}\dd{\bm s^\prime}\; \left[\bx_d-\tilde{\bx}_d \left(\bm s\right)\right]_{i \times j} P(\bm x_d,\bm s)\nonumber\\
&  = \int_{\mathbb{R}^{2N}}\dd{\bm x_d} \int_{\mathfrak{I}^{2M}}\dd{\left(R_{\sqrt{2\pi}} (\bm N \bm s)\right)}\; \left(\bx_d-\tilde{\bx}_d\left(\bm s\right)\right)_{i \times j} P(\bm x_d,\bm  s) \nonumber\\
& = \int_{\mathbb{R}^{2N}}\dd{\bm x_d} \int_{\mathfrak{I}^{2M}}\dd{\bm s}\; \left(\bx_d-\tilde{\bx}_d\left(\bm s\right)\right)_{i \times j} P(\bm x_d,\bm  s)\nonumber\\
&= [\bm V_{\rm out}]_{ij}. \nonumber
\end{align}
We have used Theorems~\ref{thm:invariant_pdf} and \ref{thm:invariant_mmse} to go from the first equality to the second equality and used the relation $\bm s^\prime = R_{\sqrt{2\pi}} \left(\bm N{\bm s}\right)$ and Lemma~\ref{lemma:invariant_integral} to go from the third equality to the fourth equality. The final equality follows by definition of the covariance matrix.
\end{proof}

We now discuss linear estimation and show that such is not invariant under a lattice basis transformation.
\begin{theorem}
    Linear estimation is not invariant under a lattice basis transformation.
\end{theorem}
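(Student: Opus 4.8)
The plan is to mirror the invariance proofs for the MMSE estimator (Theorems~\ref{thm:invariant_pdf} and~\ref{thm:invariant_mmse}) and pinpoint exactly where linear estimation fails to inherit the same invariance. Recall that linear estimation keeps only the leading ($\bm n=\bm 0$, $\bm m=\bm 0$) term of the MMSE sum, giving $\bm f_{\rm Linear}(\bs)=-\bm V_d^{-1}\bm V_{da}\bs$. First I would rewrite the prefactor $-\bm V_d^{-1}\bm V_{da}$ so as to expose its dependence on the generator matrix $\bm M$. Using the block definition in Eq.~\eqref{eq:MMSE_submatrices} together with the invertibility of $\bm M^\top\bm\Omega$ (valid since $\bm M$ is a full-rank generator matrix), Gaussian conditioning applied to the covariance of $(\bx_d,\bm M^\top\bm\Omega\bx_a)$ yields
\begin{equation}
    -\bm V_d^{-1}\bm V_{da}=\bm C\,(\bm M^\top\bm\Omega)^{-1},\qquad \bm C\equiv[\bm V_{\bx}]_{da}\,[\bm V_{\bx}]_{aa}^{-1},
\end{equation}
where $[\bm V_{\bx}]_{da}$ and $[\bm V_{\bx}]_{aa}$ are the data--ancilla and ancilla--ancilla blocks of $\bm V_{\bx}$ and are manifestly independent of the choice of lattice basis.

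Next I would track both factors under the basis change $\bm M\rightarrow\bm M^\prime=\bm M\bm N^\top$ with $\bm N$ unimodular. The syndrome becomes $\bm s^\prime=R_{\sqrt{2\pi}}(\bm N\bm M^\top\bm\Omega\bx_a)$, while the prefactor becomes $\bm C\,(\bm N\bm M^\top\bm\Omega)^{-1}=\bm C\,(\bm M^\top\bm\Omega)^{-1}\bm N^{-1}$. Writing $\bm u\equiv\bm M^\top\bm\Omega\bx_a$, the estimates produced at the \emph{same} physical error $\bx_a$ in the two bases are $\bm f_{\rm Linear}(\bs)=\bm C\,(\bm M^\top\bm\Omega)^{-1}R_{\sqrt{2\pi}}(\bm u)$ and $\bm f_{\rm Linear}^\prime(\bs^\prime)=\bm C\,(\bm M^\top\bm\Omega)^{-1}\bm N^{-1}R_{\sqrt{2\pi}}(\bm N\bm u)$. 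Invariance would therefore require $\bm N^{-1}R_{\sqrt{2\pi}}(\bm N\bm u)=R_{\sqrt{2\pi}}(\bm u)$ for (almost) every $\bm u$. This is precisely the identity that the full MMSE summation over all integer shifts restores (cf.\ Theorem~\ref{thm:invariant_pdf}), but which fails once only the single leading term is retained.

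To finish, I would show the modulo identity is generically false. Both sides are congruent to $\bm u$ modulo $\sqrt{2\pi}\,\mathbb{Z}^{2M}$---indeed $\bm N^{-1}R_{\sqrt{2\pi}}(\bm N\bm u)=\bm u-\sqrt{2\pi}\,\bm N^{-1}\bm k$ with $\bm N^{-1}\bm k\in\mathbb{Z}^{2M}$---but they lie in different fundamental domains, namely $\bm N^{-1}\mathfrak{I}^{2M}$ versus $\mathfrak{I}^{2M}$, and hence select different representatives on a set of positive measure. A single-mode witness ($N=M=1$) suffices: take the canonical square lattice so that $\bm M^\top\bm\Omega=\ell\,\bm I$, the shear $\bm N=\big(\begin{smallmatrix}1&1\\0&1\end{smallmatrix}\big)$, and choose $\bm u=(u_1,u_2)$ with $u_1\in\mathfrak{I}$ and $u_2\in\mathfrak{I}$ but $u_1+u_2\notin\mathfrak{I}$ (e.g.\ $u_1=1$, $u_2=\tfrac12$, since $\sqrt{\pi/2}\approx1.25$). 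Then $R_{\sqrt{2\pi}}(\bm N\bm u)$ wraps the first slot whereas $R_{\sqrt{2\pi}}(\bm u)$ does not, so $\bm N^{-1}R_{\sqrt{2\pi}}(\bm N\bm u)\neq R_{\sqrt{2\pi}}(\bm u)$, and since $\bm C\,(\bm M^\top\bm\Omega)^{-1}$ has trivial kernel the two estimates $\tilde{\bx}_d$ genuinely differ.

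Consequently, at the same physical error the corrective displacements differ, and the residual covariance $\bm V_{\rm out}$ computed as in Eq.~\eqref{eq:Vout_ij} is basis-dependent; equivalently, because the joint PDF is itself basis-invariant (Theorem~\ref{thm:invariant_pdf}) while the basis-independent MMSE is the unique minimizer, at most one basis choice can reproduce it. The main obstacle is this last step: the modulo operation does not commute with the unimodular change of variables, and one must check that the mismatch survives integration against $P(\bx_d,\bm s)$ rather than averaging away---which is why an explicit boundary-wrapping witness, rather than a purely formal manipulation, is the cleanest route to the conclusion.
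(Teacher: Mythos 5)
Your proof is correct, and while it rests on the same core fact as the paper's---the elementwise reduction $R_{\sqrt{2\pi}}$ does not commute with a unimodular map $\bm N$---it follows a genuinely more careful route that closes a loophole the paper's own argument leaves open. The paper's proof keeps the \emph{same} coefficient matrix $-\bm V_d^{-1}\bm V_{da}$ and merely substitutes the transformed syndrome, asserting $-\bm V_d^{-1}\bm V_{da}\bm s\neq-\bm V_d^{-1}\bm V_{da}R_{\sqrt{2\pi}}(\bm N\bm s)$ as obvious [Eq.~\eqref{eq:App-linear-estimator-change-bases}]; it never re-derives the linear estimator in the new basis. That is not the sharpest comparison, because for the full MMSE the re-derived (transformed) estimator exactly compensates the transformed syndrome (Theorem~\ref{thm:invariant_mmse}), so a priori the same compensation could have occurred for its linear truncation. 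You perform the correct comparison: your factorization $-\bm V_d^{-1}\bm V_{da}=\bm C(\bm M^\top\bm\Omega)^{-1}$ with $\bm C=[\bm V_{\bx}]_{da}[\bm V_{\bx}]_{aa}^{-1}$ manifestly basis-independent is a correct consequence of block inversion (equivalently, $\bm V_d^\prime=\bm V_d$ and $\bm V_{da}^\prime=\bm V_{da}\bm N^{-1}$ under $\bm M\rightarrow\bm M\bm N^\top$), and it shows that the transformed matrices undo only the linear factor $\bm N$, never the wrapping. Invariance thus reduces exactly to $\bm N^{-1}R_{\sqrt{2\pi}}(\bm N\bm u)=R_{\sqrt{2\pi}}(\bm u)$, which your shear witness ($u_1=1$, $u_2=\tfrac{1}{2}$, so $u_1+u_2>\sqrt{\pi/2}\approx1.25$ forces a wrap) explicitly violates, and invertibility of $\bm C$ (valid for any TMS code with $G>1$) converts the mismatch of arguments into a mismatch of estimates. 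Two minor caveats: your closing appeal to uniqueness of the MMSE minimizer only shows that linear estimation can match MMSE performance in at most one basis, not by itself that $\bm V_{\rm out}$ differs between two arbitrary bases---this is harmless, since your explicit witness already proves the theorem, and the basis dependence of the covariance is what the paper checks separately in Eq.~\eqref{eq:linear-Vout_ij-change-base}; and strictly the theorem needs only one code and one basis change where invariance fails, which your example supplies, so the positive-measure qualifier is stronger than required.
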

\begin{proof} Considering only the $\bm{n}=\bm{0}$ element in the sum of Eq.~\eqref{eq:App-MMSE-estimator}, we get the linear estimator 
\begin{equation}
    \bm{f}_{\rm{Linear}}(\bm s) = -\bm V_d^{-1}\bm V_{da}\bm s,
    \label{eq:App-linear-estimator}
\end{equation}
which is obviously not invariant under a basis transformation. Indeed, consider a basis transformation such that $ {\bm s}^\prime= R_{\sqrt{2\pi}}(\bm N \bm s)$. The estimator in Eq.~\eqref{eq:App-linear-estimator} then changes to
\begin{align}
    \bm{f}_{\rm{Linear}}(\bm s^\prime) & = -\bm V_d^{-1}\bm V_{da}\bm s^\prime \nonumber\\
    & = -\bm V_d^{-1}\bm V_{da}R_{\sqrt{2\pi}}(\bm N \bm s).
    \label{eq:App-linear-estimator-change-bases}
\end{align}
and $\bm{f}_{\rm{Linear}}(\bm s)\neq\bm{f}_{\rm{Linear}}(\bm s^\prime)$. 
\end{proof}
As a consequence, the output covariance matrix for linear estimation also changes under a change of lattice basis. To show this explicitly, let us first write the output covariance matrix $\bm V_{\rm out}$ in full,
\begin{align}
    [\bm V_{\rm out}]_{ij} &= \int_{\mathbb{R}^{2N}}\dd{\bm x_d} \int_{\mathfrak{I}^{2M}}\dd{\bm s}\; {(\bx_d-\bm{f}_{\rm{Linear}}(\bm s))}_{i \times j} P(\bm x_d,\bm  s) \nonumber\\
    &= \sum_{\bm k} \int_{\mathfrak{I}^{2M}}\dd{\bm s} \int_{\mathbb{R}^{2N}}\dd{\bm x_d}\; {(\bx_d-\bm{f}_{\rm{Linear}}(\bm s))}_{i \times j}g(\bm V_d^{-1},\bm x_d+ \bm V^{-1}_d \bm V_{da}(\bm s-\bm k\sqrt{2\pi})) g(\bm V_{d|a}^{-1}, \bm s -\bm k\sqrt{2\pi})\nonumber\\
    &= \sum_{\bm k} \int_{\mathfrak{I}^{2M}}\dd{\bm s}\;  {[\bm V_d^{-1}]}_{ij}g(\bm V_{d|a}^{-1}, \bm s -\bm k\sqrt{2\pi})  +\left[-\bm V^{-1}_d \bm V_{da} \bs +\bm V^{-1}_d \bm V_{da}(\bm s-\bm k\sqrt{2\pi}) \right]_{i \times j}g(\bm V_{d|a}^{-1}, \bm s -\bm k\sqrt{2\pi})\nonumber\\
    & =  [\bm V_d^{-1}]_{ij} + \sum_{\bm k} 2\pi \left[\bm V_d^{-1} \bm V_{da} \bm k \right]_{i \times j} \int_{\mathfrak{I}^{2M}}\diff{\bm  s}\;g(\bm V_{d|a}^{-1},\bm s - \bm k \sqrt{2\pi}),
    \label{eq:linear-Vout_ij}
\end{align}
where we have used the expression of $P(\bx_d,\bm s)$ in~\eqref{eq:xd-s-joint-distribution} and~\eqref{eq:app_integral_simplification} in the first and second steps, respectively. Under change of basis, the new elements of the covariance matrix $\bm V_{\rm out}^\prime$ are
\begin{align}
    [\bm V_{\rm out}^\prime]_{ij} 
    &  =\int_{\mathbb{R}^{2N}}\dd{\bm x_d} \int_{\mathfrak{I}^{2M}}\dd{\bm s^\prime}\; \left[\bx_d-\bm{f}_{\rm{Linear}}(\bm s^\prime)\right]_{i \times j} P^\prime(\bm x_d,\bm  s^\prime)
    \label{VV_step1}
    \\
    &= \int_{\mathbb{R}^{2N}}\dd{\bm x_d} \int_{\mathfrak{I}^{2M}}\dd{\left(R_{\sqrt{2\pi}} \left(\bm N \bm s\right) \right)} \;[\bx_d+ \bm{V}_d^{-1}\bm V_{da} R_{\sqrt{2\pi}}(\bm N \bs) ]_{i \times j} P\left(\bx_d, \bm s\right)
    \label{VV_step2}
    \\
    &= \int_{\mathbb{R}^{2N}}\dd{\bm x_d} \int_{\mathfrak{I}^{2M}}\dd{\bm s} \;[\bx_d+ \bm{V}_d^{-1}\bm V_{da} R_{\sqrt{2\pi}}(\bm N \bs) ]_{i \times j} \; P\left(\bx_d, \bm s\right)
    \label{VV_step3}
    \\
    &= \sum_{\bm k } \int_{\mathbb{R}^{2N}}\dd{\bm x_d} \int_{\mathfrak{I}^{2M}}\dd{\bm s} \;[\bx_d+ \bm{V}_d^{-1}\bm V_{da} R_{\sqrt{2\pi}}(\bm N \bs) ]_{i \times j} \;g(\bm V_d^{-1},\bm x_d+ \bm V^{-1}_d \bm V_{da}(\bm s-\bm k\sqrt{2\pi})) g(\bm V_{d|a}^{-1}, \bm s -\bm k\sqrt{2\pi}) 
    \label{VV_step4}
    \\
    & = \sum_{\bm k} \int_{\mathfrak{I}^{2M}}\dd{\bm s}\; g(\bm V_{d|a}^{-1}, \bs -\bm k\sqrt{2\pi})\left\{[\bm V_d^{-1}]_{ij} + \left[\bm V_d^{-1}\bm V_{da}R_{\sqrt{2\pi}}(\bm N \bm s)- \bm V^{-1}_d \bm V_{da}(\bm s-\bk \sqrt{2\pi})\right]_{i \times j }\right\}
    \label{VV_step5}
    \\
    & = [\bm V_d^{-1}]_{ij} + \sum_{\bm k} \int_{\mathfrak{I}^{2M}}\dd{\bm s}\;g(\bm V_{d|a}^{-1}, \bs -\bm k\sqrt{2\pi}) \left[\bm V_d^{-1}\bm V_{da}R_{\sqrt{2\pi}}(\bm N \bm s)- \bm V^{-1}_d \bm V_{da}(\bm s-\bk \sqrt{2\pi})\right]_{i \times j }
    \label{VV_step6}
    \\
    & = [\bm V_d^{-1}]_{ij} + \sum_{\bm k} \int_{\mathfrak{I}^{2M}}\dd{\bm s}\;g(\bm V_{d|a}^{-1}, \bs -\bm k\sqrt{2\pi}) \left[
    \bm V^{-1}_d \bm V_{da}\bk \sqrt{2\pi}+
    \bm V_d^{-1}\bm V_{da}\left(R_{\sqrt{2\pi}}(\bm N \bm s)-\bm s\right)
    \right]_{i \times j }
    \label{VV_step7}
    \\
    & = \underbrace{[\bm V_d^{-1}]_{ij} + \sum_{\bm k} 2\pi \left[\bm V_d^{-1} \bm V_{da} \bm k \right]_{i \times j} \int_{\mathfrak{I}^{2M}}\diff{\bm  s}\;g(\bm V_{d|a}^{-1},\bm s - \bm k \sqrt{2\pi})}_{=[\bm V_{\rm out}]_{ij}} \nonumber\\
    &\qquad \qquad \;\; + \sum_{\bm k} \int_{\mathfrak{I}^{2M}}\diff{\bm  s}\;g(\bm V_{d|a}^{-1},\bm s - \bm k \sqrt{2\pi})\left[\bm V_d^{-1} \bm V_{da} \left(R_{\sqrt{2\pi}}(\bm N \bs)-\bs \right)  \right]_{i \times j} \nonumber\\
    &\qquad \qquad \;\; + \sum_{\bm k} \sqrt{2\pi} \left[\bm V_d^{-1} \bm V_{da} \bm k \right]_i \int_{\mathfrak{I}^{2M}}\diff{\bm  s}\;g(\bm V_{d|a}^{-1},\bm s - \bm k \sqrt{2\pi})\left[\bm V_d^{-1} \bm V_{da} \left(R_{\sqrt{2\pi}}(\bm N \bs)-\bs \right)  \right]_{j} \nonumber \\
    &\qquad \qquad \;\; + \sum_{\bm k} \sqrt{2\pi} \left[\bm V_d^{-1} \bm V_{da} \bm k \right]_j \int_{\mathfrak{I}^{2M}}\diff{\bm  s}\;g(\bm V_{d|a}^{-1},\bm s - \bm k \sqrt{2\pi})\left[\bm V_d^{-1} \bm V_{da} \left(R_{\sqrt{2\pi}}(\bm N \bs)-\bs \right)  \right]_{i}.
    \label{eq:linear-Vout_ij-change-base}
\end{align}
From \eqref{VV_step1} to \eqref{VV_step2}, we use $\bm s^\prime =R_{\sqrt{2\pi}} \left(\bm N \bm s\right)$, $P^\prime(\bx_d,\bs^\prime)=P(\bx_d,\bs)$ and the expression in~\eqref{eq:App-linear-estimator-change-bases}. From \eqref{VV_step2} to \eqref{VV_step3}, we apply Lemma~\ref{lemma:invariant_integral}. To obtain \eqref{VV_step4}, we input the explicit equation of \eqref{eq:xd-s-joint-distribution}. The integral of $\bx_d$ is simplified with~\eqref{eq:app_integral_simplification} from \eqref{VV_step4} to \eqref{VV_step5}. From \eqref{VV_step5} to \eqref{VV_step7}, we first recognize that the $\bm V_d^{-1}$ part does not depend on $\bm s$ and then reorganized the integrand. To obtain \eqref{VV_step7}, we make use of the definition $\bm x _{i\times j} = \bm x_i \bm x_j$, where $\bm x_i$ is the $i$th component of $\bm x$, so that $(\bm{a}+\bm{b})_{i\times j}=(\bm{a}+\bm{b})_i (\bm{a}+\bm{b})_j = \bm{a}_{i\times j}+ \bm{b}_{i\times j} + \bm a_i \bm b_j +\bm a_j \bm b_i$. Comparing equations~\eqref{eq:linear-Vout_ij} and~\eqref{eq:linear-Vout_ij-change-base}, we see that the linear estimation is not invariant under a lattice basis transformation.

\end{widetext}


\section{Lower bound on output variance for multimode codes}\label{app:bounds}
We find a lower bound on the output variance for multimode GKP codes (with $N$ data modes and $M\geq N$ GKP ancilla modes), in a similar vein to the single-mode lower bound found in Ref.~\cite{noh2020o2o}. We first provide an effective reduction from multimode additive noise to independent single-mode additive noises which we utilize to prove the result.

Given a $N$ mode input data state $\rho$, the output data state $\rho^\prime$ after a round of error correction with a multimode GKP stabilizer code (which protects against AGN) is related to the input by an additive non-Gaussian noise channel $\widetilde{\mathcal{N}}$ via
\begin{equation}\label{eq:ANGN_channel}
\rho^\prime=\widetilde{\mathcal{N}}\left(\rho\right)=\int_{\mathbb{R}^{2N}}\dd{\bm e_d}P(\bm e_d)D_{\bm e_d}\rho D_{\bm e_d}^\dagger,
\end{equation}
where $P(\bm e_d)$ is a multivariate non-Gaussian pdf that describes the residual data noise $\bm e_d$---e.g., $\expval{\bm e_d}=0$ and $\expval{\acomm{(\bm e_d)_i}{(\bm e_d)_j}}=(\bm V_{\rm out})_{ij}$; for brevity in what follows, we let $\bm V\equiv\bm V_{\rm out}$. Consider the Gaussian approximations $\rho_G$ and $\rho^\prime_G$ for the input and output states $\rho$ and $\rho^\prime$, which agree with $\rho$ and $\rho^\prime$ at the level of first and second moments. We have that
\begin{equation}
\rho^\prime_G=\mathcal{N}_{\bm V}\left(\rho_G\right)=\int_{\mathbb{R}^{2N}}\dd{\bm e_d}g(\bm V,\bm e_d)D_{\bm e_d}\rho_G D_{\bm e_d}^\dagger,
\end{equation}
where $g(\bm V,\bm e_d)$ is a Gaussian distribution according to Eq.~\eqref{g_definition} with the same first and second moments as the non-Gaussian pdf $P(\bm e_d)$ and $\mathcal{N}_{\bm V}$ is a multimode AGN channel with noise matrix $\bm V$. 

Let $\bm S$ be the symplectic transformation that diagonalizes $\bm V$ such that $\bm S\bm V\bm S^\top=\bigoplus_{i=1}^N\nu_i\bm I_2$, where $\nu_i\geq0$ are the symplectic eigenvalues (single-mode variances) of $\bm V$. We append Gaussian unitaries $U_{\bm S}^\dagger$ and $U_{\bm S}$ before and after error correction, respectively, and define the resulting channel $\widetilde{\mathcal{N}}^{(\bm S)}\equiv\mathcal{U}_{\bm S}\circ\widetilde{\mathcal{N}}\circ\mathcal{U}_{\bm S}^{-1}$, where $\mathcal{U}_{\bm S}$ is the unitary channel of $U_{\bm S}$. Then, 
\begin{align}
    \rho^\prime(\bm S)&=\widetilde{\mathcal{N}}^{(\bm S)}(\rho )
    \nonumber
    \\
    &=\int_{\mathbb{R}^{2N}}\dd{\bm e_d}P(\bm e_d)\left(U_{\bm S}D_{\bm e_d}U_{\bm S}^\dagger\right)\rho \left(U_{\bm S} D_{\bm e_d}^\dagger U_{\bm S}^\dagger\right)\nonumber\\
    &=\int_{\mathbb{R}^{2N}}\dd{\bm e_d}P(\bm e_d)D_{\bm S\bm e_d}\rho D_{\bm S\bm e_d}^\dagger\nonumber\\
    &=\int_{\mathbb{R}^{2N}}\dd{\bm e_d}P(\bm S^{-1}\bm e_d)D_{\bm e_d}\rho D_{\bm e_d}^\dagger,\label{eq:NtildeS_rho}
\end{align}
where Eq.~\eqref{eq:weyl_S} was used in the second line and a change of variables was used in the third line, along with the fact that $\dd(\bm S^{-1}\bm e_d)=\dd{\bm e_d}$. The Gaussian approximation for $\rho^\prime(\bm S)$ is then simply,
\begin{equation}\label{eq:gauss_approx_redux}
    \rho^\prime_G(\bm S)=\mathcal{N}_{\bm S\bm V\bm S^\top}\left(\rho_G\right)=\bigotimes_{i=1}^N\mathcal{N}_{\nu_i}(\rho_G).
\end{equation}
This is due to the fact that displacements are always local, $D_{\bm e_d}=\bigotimes_i D_{\bm e_{d_i}}$, and that $P(\bm S^{-1}\bm e_d)$ has a corresponding Gaussian distribution satisfying
\begin{align}
    g(\bm V,\bm S^{-1}\bm e_d)&=g(\bm S\bm V\bm S^\top,\bm e_d)\nonumber\\
    &=g\Big({\bigoplus}_i\nu_i\bm I_2,\bm e_d\Big)\nonumber\\
    &=\prod_i g(\nu_i\bm I_2, \bm e_{d_i}),
\end{align}
where $\bm e_{d_i}\in\mathbb{R}^2$. In other words, at the level of first and second moments, we have decorrelated the residual noises on the data modes via $\bm S$ and reduced the channel to a direct product of independent AGN channels. We will use this simplification in what follows to prove a lower bound on the quantum capacity of a general non-Gaussian additive noise channel $\widetilde{\mathcal{N}}$, which includes the single-mode case proven in Ref.~\cite{noh2020o2o}.

\begin{lemma}\label{lemma:multimode_lb}
    Consider a $N$ mode additive non-Gaussian noise channel $\widetilde{\mathcal{N}}$ [see Eq.~\eqref{eq:ANGN_channel}]. Let the geometric mean error of the channel be defined as ${\bar{\sigma}_{\rm GM}^2=\sqrt[2N]{\det\bm V}}$, where $\bm V$ is the $2N\times2N$ covariance matrix of $\widetilde{\mathcal{N}}$. Then the quantum capacity $C_{\mathcal{Q}}$ of $\widetilde{\mathcal{N}}$ has the following lower bound,
    \begin{equation}\label{eq:multimode_lb}
        C_{\mathcal{Q}}(\widetilde{\mathcal{N}})\geq\max\left[0,N\log_2\left(\frac{1}{e\bar{\sigma}^{2}_{\rm GM}}\right)\right].
    \end{equation}
\end{lemma}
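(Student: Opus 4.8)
The plan is to lower bound the quantum capacity by the coherent information of a carefully chosen Gaussian input, reduce the non-Gaussian additive channel to a Gaussian one via the extremality of Gaussian states, and then use the symplectic diagonalization already recorded in Eq.~\eqref{eq:NtildeS_rho}--\eqref{eq:gauss_approx_redux} to split the multimode problem into independent single modes whose capacities are known.

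First I would recall that, by the LSD theorem~\cite{schumacher1996qec,lloyd1997capacity,devetak2005capacity}, $C_{\mathcal{Q}}(\widetilde{\mathcal{N}})\geq I_c(\rho,\widetilde{\mathcal{N}})$ for any input $\rho$, so it suffices to exhibit a single input achieving the claimed rate. Since quantum capacity is invariant under pre- and post-composition with the Gaussian unitaries $\mathcal{U}_{\bm S}^{\pm1}$, we have $C_{\mathcal{Q}}(\widetilde{\mathcal{N}})=C_{\mathcal{Q}}(\widetilde{\mathcal{N}}^{(\bm S)})$, where $\bm S$ symplectically diagonalizes the residual covariance matrix, $\bm S\bm V\bm S^\top=\bigoplus_{i=1}^N\nu_i\bm I_2$. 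This is exactly the reduction set up in Eq.~\eqref{eq:NtildeS_rho}; crucially, although the conjugated channel $\widetilde{\mathcal{N}}^{(\bm S)}$ need not factorize (its higher noise moments may remain correlated across modes), its covariance matrix is now block-diagonal.

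Next, I would pick a product Gaussian input $\rho_G=\bigotimes_i\rho_{G,i}$ (for instance, two-mode-squeezed vacua of squeezing $r$ on each data--reference pair) and invoke the extremality of Gaussian states~\cite{wolf2006extremality}: the coherent-information functional $S(A')-S(A'R)$ of the channel output is minimized, at fixed first and second moments, by the Gaussian state. The output states produced by $\widetilde{\mathcal{N}}^{(\bm S)}$ and by the Gaussian additive channel $\mathcal{N}_{\bigoplus_i\nu_i\bm I_2}$ share the same covariance matrix, so this gives $I_c(\rho_G,\widetilde{\mathcal{N}}^{(\bm S)})\geq I_c(\rho_G,\mathcal{N}_{\bigoplus_i\nu_i\bm I_2})$ for every Gaussian input. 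The comparison channel is a genuine tensor product $\bigotimes_i\mathcal{N}_{\nu_i}$ of single-mode AGN channels [cf. Eq.~\eqref{eq:gauss_approx_redux}], so with a product input its coherent information is additive; optimizing each factor and importing the single-mode result~\cite{harrington2001rates,noh2020o2o}, $C_{\mathcal{Q}}(\mathcal{N}_{\nu_i})\geq\max[0,\log_2(1/(e\nu_i))]$, yields $C_{\mathcal{Q}}(\widetilde{\mathcal{N}})\geq\sum_{i=1}^N\max[0,\log_2(1/(e\nu_i))]$.

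Finally I would assemble the bound. Using $\prod_i\nu_i=\sqrt{\det\bm V}=\bar{\sigma}_{\rm GM}^{2N}$ together with the elementary inequality $\sum_i\max[0,a_i]\geq\max[0,\sum_i a_i]$ gives $\sum_i\max[0,\log_2(1/(e\nu_i))]\geq\max[0,N\log_2(1/(e\bar{\sigma}_{\rm GM}^2))]$, which is the claim. The main obstacle is the extremality step: one must check that the coherent-information functional meets the hypotheses of Ref.~\cite{wolf2006extremality} and, because the energy-unconstrained capacity is wanted, control the infinite-squeezing limit by working with a sequence of finite-energy Gaussian inputs and passing to the limit. Getting the direction of this inequality right---that Gaussian noise is the worst case among additive noises of fixed covariance, so replacing it can only decrease the coherent information---is the conceptual heart of the argument and the only nonroutine ingredient beyond the already-established symplectic reduction.
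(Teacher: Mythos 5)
Your proposal follows essentially the same route as the paper's proof: invariance of the capacity under the Gaussian unitaries $\mathcal{U}_{\bm S}^{\pm1}$ together with the symplectic diagonalization of Eqs.~\eqref{eq:NtildeS_rho}--\eqref{eq:gauss_approx_redux}, the single-shot coherent-information (LSD) lower bound, Gaussian extremality~\cite{wolf2006extremality} to replace the non-Gaussian additive channel by the product AGN channel $\bigotimes_{i=1}^N\mathcal{N}_{\nu_i}$ with the same second moments, the known energy-unconstrained single-mode result, and finally $\prod_i\nu_i=\sqrt{\det\bm V}=\bar{\sigma}_{\rm GM}^{2N}$. Your explicit remark that $\widetilde{\mathcal{N}}^{(\bm S)}$ need not factorize (only its covariance is block-diagonal) is a nice clarification of a point the paper passes over quickly.

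However, one intermediate step is not justified as written: the claim that ``optimizing each factor and importing $C_{\mathcal{Q}}(\mathcal{N}_{\nu_i})\geq\max[0,\log_2(1/(e\nu_i))]$ yields $C_{\mathcal{Q}}(\widetilde{\mathcal{N}})\geq\sum_{i=1}^N\max[0,\log_2(1/(e\nu_i))]$.'' Your chain of inequalities controls coherent informations, not capacities: extremality gives $I_c(\rho_G,\widetilde{\mathcal{N}}^{(\bm S)})\geq\sum_i I_c(\rho_{G,i},\mathcal{N}_{\nu_i})$, and the per-mode terms on the right are coherent informations of specific Gaussian inputs, which carry no $\max[0,\cdot]$. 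On a mode that is too noisy to have positive capacity (e.g., where the upper bound of Lemma~\ref{lemma:capacity_ub} vanishes), $C_{\mathcal{Q}}(\mathcal{N}_{\nu_i})=0$ forces \emph{every} input to have $I_c\leq0$, and the natural infinite-energy limit gives the strictly negative value $\log_2(1/(e\nu_i))$; no choice of $\rho_{G,i}$ lets that mode contribute $0$ to the sum. Capacity super-additivity, $C_{\mathcal{Q}}(\mathcal{N}_1\otimes\mathcal{N}_2)\geq C_{\mathcal{Q}}(\mathcal{N}_1)+C_{\mathcal{Q}}(\mathcal{N}_2)$, would rescue the per-mode maxima, but it applies only to the genuinely product channel $\bigotimes_i\mathcal{N}_{\nu_i}$, not to $\widetilde{\mathcal{N}}^{(\bm S)}$, which you correctly note is not a product channel; there is no inequality relating $C_{\mathcal{Q}}(\widetilde{\mathcal{N}}^{(\bm S)})$ to $C_{\mathcal{Q}}(\bigotimes_i\mathcal{N}_{\nu_i})$ in the direction you would need. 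The gap is harmless for the lemma, since you immediately weaken the claim via $\sum_i\max[0,a_i]\geq\max[0,\sum_i a_i]$: the fix is simply to drop the per-mode maxima, let every mode's Gaussian input approach the infinite-energy limit so that $\sum_i I_c(\rho_{G,i},\mathcal{N}_{\nu_i})\rightarrow\sum_i\log_2(1/(e\nu_i))$ with terms of either sign, and take the maximum with zero only at the very end using nonnegativity of the capacity---which is exactly what the paper does.
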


\begin{proof}
\QZ{
To prove this result, we first consider the channel $\widetilde{\mathcal{N}}^{(\bm S)}$ of Eq.~\eqref{eq:NtildeS_rho}. Since $\widetilde{\mathcal{N}}^{(\bm S)}$ is related to $\widetilde{\mathcal{N}}$ by unitary pre- and post-processing, the quantum capacities of the two channels are equivalent, $C_{\mathcal{Q}}(\widetilde{\mathcal{N}})=C_{\mathcal{Q}}(\widetilde{\mathcal{N}}^{(\bm S)})$. We focus on the channel $\widetilde{\mathcal{N}}^{(\bm S)}$ from hereon. 
}

\QZ{
Observe that the single-shot coherent information places a lower bound on the quantum capacity~\cite{schumacher1996qec,lloyd1997capacity,devetak2005capacity}
    \begin{equation}\label{eq:capacity_coherentinfo}
   C_{\mathcal{Q}}(\widetilde{\mathcal{N}}^{(\bm S)})\geq\max_{\rho}I_c(\rho,\widetilde{\mathcal{N}}^{(\bm S)})
    \end{equation}
    where $I_c(\rho,\widetilde{\mathcal{N}}^{(\bm S)})=S(\widetilde{\mathcal{N}}^{(\bm S)}(\rho))-S(\widetilde{\mathcal{N}}^{(\bm S)\,c}(\rho))$ is the coherent information, $S$ is the von Neumann entropy and $\widetilde{\mathcal{N}}^{(\bm S)\,c}$ is the complementary channel of $\widetilde{\mathcal{N}}^{(\bm S)}$. By Gaussian extremality~\cite{wolf2006extremality}, one can show that $I_c(\rho,\widetilde{\mathcal{N}}^{(\bm S)})\geq S(\widetilde{\mathcal{N}}^{(\bm S)}(\rho)_G)-S(\widetilde{\mathcal{N}}^{(\bm S)\,c}(\rho)_G)$ where, e.g., $\widetilde{\mathcal{N}}^{(\bm S)}(\rho)_G$ is a Gaussian state with the same first and second moments as $\widetilde{\mathcal{N}}^{(\bm S)}(\rho)$. In fact, $\widetilde{\mathcal{N}}^{(\bm S)}(\rho)_G=\bigotimes_{i=1}^N\mathcal{N}_{\nu_i}(\rho_G)$ from Eq.~\eqref{eq:gauss_approx_redux}, which is just a direct product of AGN channels acting on an input Gaussian state $\rho_G$; similarly, $\widetilde{\mathcal{N}}^{(\bm S)\,c}(\rho)_G=\bigotimes_{i=1}^N\mathcal{N}_{\nu_i}^c(\rho_G)$. Therefore, $I_c(\rho,\widetilde{\mathcal{N}}^{(\bm S)})\geq I_c(\rho_G,\bigotimes_{i=1}^N\mathcal{N}_{\nu_i})$. In other words, we have reduced the problem to analyzing a set of independent AGN channels with Gaussian input states; this has been done in previous works for a single mode with an input thermal state $\Theta_{\bar{n}}$~\cite{holevo2001bosonic_channels,albert2018GKPcapacity}. In the energy unconstrained setting ($\bar{n}\rightarrow\infty$), the single mode result is $\max_{\rho_G}I_c(\rho_G,\mathcal{N}_{\nu_i})=\log(1/e\nu_i)$. Hence, for the multimode channel $\bigotimes\mathcal{N}_{\nu_i}$,
    \begin{align}
    \max_{\rho_G}I_c\left(\rho_G,\bigotimes_{i=1}^N\mathcal{N}_{\nu_i}\right)&\geq\sum_{i=1}^N\log(\frac{1}{e\nu_i})\\
    &=\log(\frac{1}{e^N\prod_{i=1}^N\nu_i}).
    \end{align}
    Since $\nu_i$ are the symplectic eigenvalues of the channel covariance matrix $\bm V$, $\det\bm V=\prod_{i=1}^N\nu_i^2$, and the result is proved.
    }
\end{proof}

Now we consider the upper bound of the quantum capacity. We begin by quoting a lemma in Refs.~\cite{albert2018GKPcapacity,wu2021continuous}.
\begin{lemma}\label{lemma:capacity_ub}
    Given an $N+M$ mode AGN channel, $\mathcal{N}_{\bm Y}$, with noise covariance matrix $\bm Y\geq0$, the quantum capacity $C_{\mathcal{Q}}$ then has the following upper bound~\cite{albert2018GKPcapacity,wu2021continuous},
    \begin{equation}\label{eq:multimode_AGN_ub}
C_{\mathcal{Q}}\left(\mathcal{N}_{\bm Y}\right)\leq \sum_{i=1}^{N+M}\log_2\left(\frac{1-\sigma_i^2}{\sigma_i^2}\right),
    \end{equation}
    where $\sigma^2_i$ are the symplectic eigenvalues of $\bm Y$ (i.e., the single-mode variances) . 
\end{lemma}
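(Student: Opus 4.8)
The plan is to reduce the multimode bound to the single-mode case by exploiting that quantum capacity is invariant under Gaussian-unitary pre- and post-processing, and then to import an \emph{additive} single-letter upper bound so that the per-mode contributions simply add up.

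First I would diagonalize the noise. Since $\bm Y\geq0$, the normal-mode (Williamson) decomposition supplies a symplectic $\bm S\in{\rm Sp}(2(N+M),\mathbb{R})$ with $\bm S\bm Y\bm S^\top=\bigoplus_{i=1}^{N+M}\sigma_i^2\bm I_2$, where $\sigma_i^2$ are the symplectic eigenvalues. Conjugating the channel by the Gaussian unitary $U_{\bm S}$ and using the displacement covariance $U_{\bm S}D_{\bm e}U_{\bm S}^\dagger=D_{\bm S\bm e}$ [Eq.~\eqref{eq:weyl_S}] together with the change of variables $\bm e'=\bm S\bm e$ (unit Jacobian, since $\det\bm S=1$) gives
\begin{equation}
\mathcal{U}_{\bm S}\circ\mathcal{N}_{\bm Y}\circ\mathcal{U}_{\bm S}^{-1}=\mathcal{N}_{\bm S\bm Y\bm S^\top}=\bigotimes_{i=1}^{N+M}\mathcal{N}_{\sigma_i}.
\end{equation}
Because appending fixed unitary channels before and after a channel cannot alter its quantum capacity, $C_{\mathcal{Q}}(\mathcal{N}_{\bm Y})=C_{\mathcal{Q}}(\bigotimes_i\mathcal{N}_{\sigma_i})$, so the problem reduces to the single-mode AGN channels $\mathcal{N}_{\sigma_i}$.

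Next I would import the single-mode result. For $\sigma_i^2\geq1/2$ the single-mode AGN channel is anti-degradable, so $C_{\mathcal{Q}}(\mathcal{N}_{\sigma_i})=0$; for $\sigma_i^2<1/2$ one has the upper bound $C_{\mathcal{Q}}(\mathcal{N}_{\sigma_i})\leq\log_2\frac{1-\sigma_i^2}{\sigma_i^2}$ established for single-mode bosonic Gaussian channels in Refs.~\cite{holevo2001bosonic_channels,albert2018GKPcapacity,wu2021continuous}. The two regimes combine into $\max[0,\log_2\frac{1-\sigma_i^2}{\sigma_i^2}]$, which is in fact the form used when the inequality is applied in the main text.

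The subtle point, and the step I expect to be the real obstacle, is that quantum capacity is in general \emph{super}additive, so one cannot conclude $C_{\mathcal{Q}}(\bigotimes_i\mathcal{N}_{\sigma_i})\leq\sum_iC_{\mathcal{Q}}(\mathcal{N}_{\sigma_i})$ from the individual capacities alone. To obtain the sum one must instead invoke an \emph{additive} single-letter upper bound $U$, i.e. one satisfying both $C_{\mathcal{Q}}(\mathcal{N})\leq U(\mathcal{N})$ and $U(\mathcal{N}_1\otimes\mathcal{N}_2)=U(\mathcal{N}_1)+U(\mathcal{N}_2)$ (such as the relative-entropy-of-entanglement / data-processing bound for Gaussian channels used in Refs.~\cite{holevo2001bosonic_channels,albert2018GKPcapacity}). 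Applying this bound to the product channel and evaluating it mode by mode yields $C_{\mathcal{Q}}(\bigotimes_i\mathcal{N}_{\sigma_i})\leq\sum_i\max[0,\log_2\frac{1-\sigma_i^2}{\sigma_i^2}]$, which specializes to the stated bound~\eqref{eq:multimode_AGN_ub} precisely when every symplectic eigenvalue obeys $\sigma_i^2\leq1/2$; if some $\sigma_i^2>1/2$ the per-mode $\max$ cannot be dropped, so the $\max$ form is the one to carry through. Verifying additivity of the chosen bound for these specific Gaussian channels is the technical heart of the argument, while the normal-mode reduction and the single-mode evaluation are routine.
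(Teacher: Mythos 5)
You should note at the outset that the paper itself gives \emph{no} proof of this lemma: it is imported verbatim from Refs.~\cite{albert2018GKPcapacity,wu2021continuous} (``We begin by quoting a lemma\ldots''), so your proposal is really being measured against the cited literature rather than an in-paper argument. Your reconstruction follows essentially the derivation of those references and its skeleton is correct: Williamson diagonalization $\bm S\bm Y\bm S^\top=\bigoplus_{i}\sigma_i^2\bm I_2$ together with the displacement covariance~\eqref{eq:weyl_S} reduces $\mathcal{N}_{\bm Y}$, at equal quantum capacity, to $\bigotimes_{i}\mathcal{N}_{\sigma_i}$, and you correctly isolate the one nontrivial step, namely that superadditivity forbids bounding the product channel by the sum of the single-mode capacities. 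What you leave as a gesture (``verifying additivity \ldots is the technical heart'') is closed concretely in Ref.~\cite{albert2018GKPcapacity} as follows: each AGN factor decomposes into a quantum-limited amplifier followed by a pure-loss channel, $\mathcal{N}_{\sigma_i}=\mathcal{E}_{\eta_i}\circ\mathcal{A}_{1/\eta_i}$ with transmissivity $\eta_i=1-\sigma_i^2$, so the bottleneck (data-processing) inequality gives $C_{\mathcal{Q}}(\bigotimes_i\mathcal{N}_{\sigma_i})\leq C_{\mathcal{Q}}(\bigotimes_i\mathcal{E}_{\eta_i})$, and the quantum capacity of a tensor product of pure-loss channels is additive and known exactly (each factor is degradable for $\eta_i>1/2$ and a zero-capacity antidegradable channel otherwise), yielding $\sum_i\max\left[0,\log_2\left(\frac{1-\sigma_i^2}{\sigma_i^2}\right)\right]$. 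Finally, your caveat about the per-mode $\max$ is a genuine observation rather than pedantry: as displayed, Eq.~\eqref{eq:multimode_AGN_ub} fails if some $\sigma_i^2>1/2$ (take one mode with $\sigma_1^2\ll1$ and another with $\sigma_2^2\to1$; the right-hand side tends to $-\infty$ while the capacity remains positive by coding on the first mode alone), so the bound should be carried through in the $\max$ form---which is precisely the single-mode form the paper quotes in its main text, and which is the form consistent with the regime $\sigma_i^2\leq1/2$ in which the lemma is actually invoked.
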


From the above results of lower and upper bounds, we can have the following theorem.
\begin{theorem}[Lower bound for output noise of multimode codes]\label{thm:lb_logicalnoise}
    Consider a multimode GKP stabilizer code which consumes $M\geq N$ ancillary GKP modes to protect $N$ data modes against independent AGN. The joint noise channel across all modes is given by $\bigotimes_{i=1}^{N+M}\mathcal{N}_{\sigma_i^2}$ with variances $\sigma_i^2$. Let the $2N\times 2N$ output noise matrix of the data be $\bm V_{\rm out}$, with RMS and GM errors given as $\bar{\sigma}^2_{\rm RMS}=\Tr{\bm V_{\rm out}}/2N$ and $\bar{\sigma}_{\rm GM}^2=\sqrt[2N]{\det\bm V_{\rm out}}$, respectively. Then,
    \begin{equation}\label{eq:lb_logicalnoise}
        \bar{\sigma}_{\rm RMS}\geq\bar{\sigma}_{\rm GM}\geq\frac{1}{\sqrt{e}}\sqrt[2N]{\left(\prod_{i=1}^{N+M}\frac{\sigma_i^2}{1-\sigma_i^2}\right)}.
    \end{equation}
\end{theorem}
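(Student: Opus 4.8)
The plan is to split Theorem~\ref{thm:lb_logicalnoise} into its two inequalities and handle them separately. The first, $\bar{\sigma}_{\rm RMS}\geq\bar{\sigma}_{\rm GM}$, is purely algebraic. Writing the (symplectic) eigenvalues of the covariance matrix $\bm V_{\rm out}$ as $\{\mu_k\}_{k=1}^{2N}$, I have $\bar{\sigma}_{\rm RMS}^2=\Tr\bm V_{\rm out}/2N=\tfrac{1}{2N}\sum_k\mu_k$ and $\bar{\sigma}_{\rm GM}^2=\sqrt[2N]{\det\bm V_{\rm out}}=(\prod_k\mu_k)^{1/2N}$. Since all $\mu_k\geq0$, the arithmetic–geometric mean inequality immediately yields $\bar{\sigma}_{\rm RMS}^2\geq\bar{\sigma}_{\rm GM}^2$, hence $\bar{\sigma}_{\rm RMS}\geq\bar{\sigma}_{\rm GM}$.

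The substance is the second inequality, which I would obtain from an information-theoretic ``capacity bottleneck'' argument. The key observation is that the effective data channel $\widetilde{\mathcal{N}}$ of Eq.~\eqref{eq:ANGN_channel} is nothing but the physical noise channel $\bigotimes_{i=1}^{N+M}\mathcal{N}_{\sigma_i}$ sandwiched between the fixed encoding map (the GKP-stabilizer encoding together with the ancilla preparation, a CPTP map from $N$ to $N+M$ modes) and the fixed decoding map (a CPTP map from $N+M$ back to $N$ modes). Because quantum capacity cannot increase under composition with channels held fixed before and after---any code for $\widetilde{\mathcal{N}}$ yields a code for $\bigotimes_i\mathcal{N}_{\sigma_i}$ by absorbing the encoding and decoding into the encoder and decoder---I obtain the monotonicity relation $C_{\mathcal{Q}}(\widetilde{\mathcal{N}})\leq C_{\mathcal{Q}}\!\left(\bigotimes_{i=1}^{N+M}\mathcal{N}_{\sigma_i}\right)$.

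I would then chain the two capacity bounds already established. Lemma~\ref{lemma:multimode_lb} lower-bounds the left side, $C_{\mathcal{Q}}(\widetilde{\mathcal{N}})\geq N\log_2\!\big(1/e\bar{\sigma}_{\rm GM}^2\big)$ (the $\max[0,\cdot]$ is harmless since $\max[0,x]\geq x$), while Lemma~\ref{lemma:capacity_ub} upper-bounds the right side, $C_{\mathcal{Q}}\!\left(\bigotimes_i\mathcal{N}_{\sigma_i}\right)\leq\sum_{i=1}^{N+M}\log_2\!\frac{1-\sigma_i^2}{\sigma_i^2}$. Concatenating these through the monotonicity relation gives $N\log_2\!\big(1/e\bar{\sigma}_{\rm GM}^2\big)\leq\sum_{i=1}^{N+M}\log_2\!\frac{1-\sigma_i^2}{\sigma_i^2}$. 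Exponentiating, taking the $N$th root, inverting, and finally taking a square root rearranges this into $\bar{\sigma}_{\rm GM}\geq\frac{1}{\sqrt{e}}\sqrt[2N]{\prod_{i=1}^{N+M}\sigma_i^2/(1-\sigma_i^2)}$, which together with the first inequality is exactly Eq.~\eqref{eq:lb_logicalnoise}.

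The main obstacle I anticipate is making the capacity-bottleneck step airtight rather than heuristic. I would state precisely that the encoding and decoding are fixed CPTP maps (with the ancilla GKP state absorbed into a state-preparation channel) and invoke the standard fact that $C_{\mathcal{Q}}(\mathcal{D}\circ\mathcal{E}'\circ\mathcal{E})\leq C_{\mathcal{Q}}(\mathcal{E}')$ for fixed $\mathcal{E},\mathcal{D}$. A secondary concern is that the chain of inequalities only delivers a nontrivial bound when the right-hand side is positive; one should note that the algebraic rearrangement is valid for any strictly positive $\bar{\sigma}_{\rm GM}^2$, so the derived lower bound holds unconditionally, being merely vacuous (weaker than the trivial $\bar{\sigma}_{\rm GM}\geq 0$) when the factors $\sigma_i^2/(1-\sigma_i^2)$ drive the right side toward or above unity.
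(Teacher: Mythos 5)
Your proposal is correct and takes essentially the same route as the paper's proof: the RMS bound follows from AM--GM, and the GM bound is obtained by sandwiching the achievable rate of the code's effective channel $\widetilde{\mathcal{N}}$ between the lower bound of Lemma~\ref{lemma:multimode_lb} and the upper bound of Lemma~\ref{lemma:capacity_ub} applied to the underlying channel $\bigotimes_{i=1}^{N+M}\mathcal{N}_{\sigma_i}$, which is exactly your capacity-bottleneck (monotonicity under fixed encoding/decoding) argument. One cosmetic correction: in the AM--GM step the $\{\mu_k\}_{k=1}^{2N}$ must be the \emph{ordinary} eigenvalues of $\bm V_{\rm out}$, for which $\Tr\bm V_{\rm out}=\sum_k\mu_k$ and $\det\bm V_{\rm out}=\prod_k\mu_k$; your formulas already treat them as such, so the parenthetical ``(symplectic)'' should be dropped.
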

\begin{proof}
    The proof for the geometric mean error follows by concurrently considering the lower bound and upper bound of Lemmas~\ref{lemma:multimode_lb} and~\ref{lemma:capacity_ub}, respectively. In detail, we can use a multimode GKP code---and thus the resultant additive non-Gaussian channel $\widetilde{\mathcal{N}}$---to transmit information across the AGN channels $\bigotimes_{i=1}^{N+M}\mathcal{N}_{\sigma_i^2}$. The transmission rate is bounded from below by the quantum capacity $C_{\mathcal{Q}}(\widetilde{\mathcal{N}})$ since $C_{\mathcal{Q}}(\widetilde{\mathcal{N}})$ is achievable, which in turn is bounded from below by Lemma~\ref{lemma:multimode_lb}. Likewise, the transmission rate is bounded from above by the quantum capacity of the original channel, $C_{\mathcal{Q}}(\bigotimes_{i=1}^{N+M}\mathcal{N}_{\sigma_i^2})$, which in turn is bounded from above by Lemma~\ref{lemma:capacity_ub}. This establishes the second inequality in Eq.~\eqref{eq:lb_logicalnoise}. The bound on the RMS error [the first inequality in Eq.~\eqref{eq:lb_logicalnoise}] follows from the fact that the arithmetic mean is an upper bound on the geometric mean.
\end{proof}
\begin{corollary}\label{cor:iid_bound}
    For iid AGN, such that $\sigma_i=\sigma\,\forall\,i\in\{1,\dots,M+N\}$, $\bar{\sigma}_{\rm RMS}\geq\bar{\sigma}_{\rm GM}\geq\frac{1}{\sqrt{e}}\left(\frac{\sigma^{2}}{1-\sigma^{2}}\right)^{\frac{N+M}{2N}}$.
\end{corollary}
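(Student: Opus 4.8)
The plan is to obtain Corollary~\ref{cor:iid_bound} as an immediate specialization of Theorem~\ref{thm:lb_logicalnoise}, since the corollary is precisely the iid instance of that theorem's general bound. First I would invoke Theorem~\ref{thm:lb_logicalnoise}, whose chain of inequalities
\begin{equation*}
    \bar{\sigma}_{\rm RMS}\geq\bar{\sigma}_{\rm GM}\geq\frac{1}{\sqrt{e}}\sqrt[2N]{\left(\prod_{i=1}^{N+M}\frac{\sigma_i^2}{1-\sigma_i^2}\right)}
\end{equation*}
holds for arbitrary single-mode variances $\sigma_i^2$ on the $N+M$ modes. The first inequality ($\bar{\sigma}_{\rm RMS}\geq\bar{\sigma}_{\rm GM}$) is the AM--GM inequality applied to the symplectic eigenvalues of $\bm V_{\rm out}$ and carries over verbatim; only the rightmost expression needs simplification in the iid setting.

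Next I would impose the iid hypothesis $\sigma_i=\sigma$ for every $i\in\{1,\dots,M+N\}$. Under this substitution each factor in the product coincides, so that
\begin{equation*}
    \prod_{i=1}^{N+M}\frac{\sigma_i^2}{1-\sigma_i^2}=\left(\frac{\sigma^2}{1-\sigma^2}\right)^{N+M}.
\end{equation*}
Taking the $2N$-th root then collapses the exponent to $(N+M)/2N$, yielding
\begin{equation*}
    \frac{1}{\sqrt{e}}\sqrt[2N]{\left(\frac{\sigma^2}{1-\sigma^2}\right)^{N+M}}=\frac{1}{\sqrt{e}}\left(\frac{\sigma^2}{1-\sigma^2}\right)^{\frac{N+M}{2N}},
\end{equation*}
which is exactly the lower bound asserted in Corollary~\ref{cor:iid_bound}. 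Combining this with the inherited AM--GM inequality completes the argument.

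There is no real obstacle here: the entire content of the corollary is the algebraic observation that a product of identical terms is a power, and that the geometric-mean root of a power simplifies the exponent. All the substantive work---the coherent-information lower bound on quantum capacity (Lemma~\ref{lemma:multimode_lb}), the capacity upper bound (Lemma~\ref{lemma:capacity_ub}), and their combination via an achievability/converse sandwiching---is already carried out in the proof of Theorem~\ref{thm:lb_logicalnoise}. The only point worth stating explicitly is that the hypothesis $\sigma_i=\sigma$ is applied uniformly across \emph{all} $N+M$ modes (both the $N$ data and $M$ ancilla modes), which is consistent with the homogeneous noise model $\bigotimes_{i=1}^{N+M}\mathcal{N}_\sigma$ assumed throughout.
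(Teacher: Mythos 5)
Your proposal is correct and is precisely the argument the paper intends: the corollary is stated as an immediate specialization of Theorem~\ref{thm:lb_logicalnoise}, obtained by setting $\sigma_i=\sigma$ for all $N+M$ modes, collapsing the product to a power, and simplifying the $2N$-th root, with the inequality $\bar{\sigma}_{\rm RMS}\geq\bar{\sigma}_{\rm GM}$ inherited directly from the theorem. The only trivial quibble is that the trace--determinant inequality rests on AM--GM applied to the ordinary (not symplectic) eigenvalues of $\bm V_{\rm out}$, but since you invoke it as already established in the theorem rather than re-deriving it, this does not affect your argument.
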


Thus, for general multimode codes, if the number of ancilla modes is equal to the number of data modes ($M=N$), then error suppression is at most quadratic in the initial noise $\sigma$. The performance can be further enhanced with concatenated codes ($M> N$), i.e. $\Bar{\sigma}_{\rm GM}\sim \sigma^{1+\frac{M}{N}}$. We find an upper bound on the \QZ{break-even point}, $\sigma^\star\leq1/\sqrt{2}$; see Fig.~\ref{fig:lowerbound} of the main text. For $\sigma>1/\sqrt{2}$, there is no gain to be had from QEC. This agrees with the fact that the upper bound for the quantum capacity of the AGN channel [Eq.~\eqref{eq:multimode_AGN_ub}] vanishes as $\sigma\rightarrow1/\sqrt{2}$.

\QZ{
\section{Classical entropies for the iid Gaussian channel}\label{app:classical_it}
The differential entropy of multivariate normal distribution is~\cite{cover2006elements}
\begin{equation}
    S(\bx) = \frac{1}{2}\ln\left[(2\pi e)^K\det\bm V\right],
\end{equation}
where $K$ is the dimension of the random variable $\bx$. In the case of $N$ data modes and $M$ ancilla modes [$K=2(M+N)$], the correlated data and ancilla noises (just prior to corrective displacements on the data) has a covariance matrix
\begin{widetext}
\begin{align}
    \begin{split}
    \bm V_{\bx} &= \bm{S}_{\rm{enc}}^{-1}\bm{V}_\xi\bm{S}_{\rm{enc}}^{-\top}\\
    &= (\bm \Lambda_d^{-1} \oplus \bm \Lambda_a^{-1})
\left[\sigma^2\left(\bigoplus_{i=1}^{N}\bm S_{G_i}\bm S_{G_i}^{\top}\right)\oplus\bm I_{2(M-N)}
    \right]
    (\bm \Lambda_d^{-\top} \oplus \bm \Lambda_a^{-\top}),
    \end{split}
\end{align}
\end{widetext}
where $\bm\Lambda_d$ and $\bm\Lambda_a$ are local symplectic transformations on the data and ancilla, respectively. The general form in terms of two-mode squeezing blocks, for an arbitrary Gaussian encoding $\bm S_{\rm enc}$, is a consquence of Lemma~\ref{lemma:lemma_redux}.
}

\QZ{
Since the local symplectic transforms do not change the determinants of the subsystems, we can easily calculate all entropies of interest, 
\begin{align}
&S(\bx_d)=\sum_{i=1}^N \ln{[2\pi e (2G_i-1)\sigma^2]},\label{eq:S_xd}\\
&S(\bx_a)=\sum_{i=1}^N \ln{[2\pi e (2G_i-1)\sigma^2]}+(M-N) \ln(2\pi e \sigma^2),\label{eq:S_xa}\\
&S(\bx_d,\bx_a)=(N+M)\ln(2\pi e \sigma^2).\label{eq:S_xdxa}
\end{align}
The mutual information of the joint distribution is then
\begin{align}
I(\bx_d;\bx_a)&=S(\bx_d)+S(\bx_a)-S(\bx_d,\bx_a)\nonumber\\
&= 2 \sum_{i=1}^{N}\ln{(2G_i-1)},
\end{align}
and the conditional differential entroy is 
\begin{align}
S(\bx_d|\bx_a)&=S(\bx_d,\bx_a)-S(\bx_a)\nonumber\\
&=\sum_{i=1}^{N}\ln{\left(\frac{2\pi e \sigma^2}{2G_i-1}\right)}.
\end{align}
}

\QZ{
In obtaining these results, we have utilized the fact that $\bm\Lambda_d\oplus\bm\Lambda_a$ does not change any of the above quantities. So in the calculations, we can actually consider the random variables $\bm\Lambda_d\bm x_d$ and $\bm\Lambda_a\bm x_a$ for simplicity. In terms of these variables, the data modes (and ancilla modes) are uncorrelated amongst themselves, and there are only two-mode correlations for each data-ancilla mode pair. Moreover, due to the structure of two-mode squeezing, there are only $qq$ and $pp$ correlations and no $qp$ correlations. Therefore, the total entropy, mutual information, and conditional entropy originate from: (1) two-mode pairs (one data and one ancilla) due to two-mode squeezing and (2) individual quadratures ($qq$ or $pp$ correlations but no $qp$ correlations). As an example, $S(\bm x_d|\bm x_a)=S(\bm\Lambda_d\bm x_d|\bm\Lambda_a\bm x_a)=\sum_{i=1}^{2N} S(\bm\Lambda_d\bm x_d|\bm\Lambda_a\bm x_a)_i$, where $S(\bm\Lambda_d\bm x_d|\bm\Lambda_a\bm x_a)_i=\ln{\left(\frac{2\pi e \sigma^2}{2G_i-1}\right)}/2$ is the conditional entropy for the $qq$ ($pp$) quadrature correlations of the $i$th data-ancilla mode pair. Everything considered, we see that we can treat each data quadrature independently and sum their individual contributions to calculate global properties.
}

\begin{widetext}
\QZ{
\section{Error from approximated states}
\label{app_analysis_approximatedGKP}
Given the noise model for finite GKP squeezing discussed in Section~\ref{sec:approx_gkp}, it is not hard to show that the error syndrome changes to
\begin{align}
    \bs &  =  R_{\sqrt{2\pi}}\left(\bm M^\top \bm \Omega (\bm\xi^{(1)}_{\rm {GKP}}+\bx_a) + \bm\xi^{(2)}_{\rm {GKP}}\right)\\
    &= R_{\sqrt{2\pi}}\left(\bm M^\top \bm \Omega \bx_a + \bm \delta\right),
\end{align}
where $\bm\xi^{(1)}_{\rm{GKP}}\sim\mathcal{N}(0,\sigma_{\rm GKP}^2)$ is the GKP noise from the ancilla modes, $\bm\xi^{(2)}_{\rm{GKP}}\sim\mathcal{N}(0,\sigma_{\rm GKP}^2)$ is the GKP noise from the noisy stabilizer measurements, and $\bm\delta=\bm M^\top\bm\Omega\bm\xi^{(1)}_{\rm{GKP}}+\bm\xi^{(2)}_{\rm{GKP}}$. The PDF of $\bm \delta$ is
\begin{equation}
    P(\bm \delta) = g\left(\sigma_{\rm {GKP}}^2(\bm M^\top \bm M+\bm I_{2M}),\bm \delta\right).
\end{equation}
}

\QZ{
We can show that the joint distribution of $\bx_d$ and the error syndrome $\bs$ is 
\begin{align}
    P(\bx_d,\bm s) & = \int_{\mathbb{R}^{4M}} \dd\left(\bm M^\top \bm \Omega \bx_a\right)  \dd \bm \delta \;\; P(\bx_d,\bm M^\top \bm \Omega \bx_a) P(\bm \delta) \sum_{\bm k}\delta\left(\bm s - \bm M^\top \bm \Omega \bx_a-\bm \delta -\bm k \sqrt{2\pi} \right) 
    \label{eq:xd-s-joint-distribution–step1_appx}
    \\
     & = \sum_{\bm k} \int_{\mathbb{R}^{2M}} \dd \bm \delta \;\; g\left((\bI_{2N} \oplus {\bm M^\top \bm \Omega}) \bm V_{\bx} (\bI_{2N} \oplus {(\bm M^\top \bm \Omega)} ^{\top}),(\bx_d,\bm s -\bm \delta- \bm k \sqrt{2\pi})\right) g\left(\sigma_{\rm {GKP}}^2(\bm M^\top \bm M+\bm I_{2M}),\bm \delta\right)
     \label{eq:xd-s-joint-distribution-step2_appx}
     \\
    & = \sum_{\bm k} g\left((\bI_{2N} \oplus {\bm M^\top \bm \Omega}) \bm V_{\bx} (\bI_{2N} \oplus {(\bm M^\top \bm \Omega)} ^{\top}+(\bm{0}_{2N}\oplus \sigma_{\rm {GKP}}^2(\bm M^\top \bm M+\bm I_{2M}) ) ),(\bx_d,\bm s - \bm k \sqrt{2\pi})\right)
     \label{eq:xd-s-joint-distribution-step3_appx}
     \\
    & \equiv  \sum_{\bm k} g(\bm 
              V_d^{-1},\bm x_d+ \bm V^{-1}_d \bm V_{da}(\bm s-\bm k\sqrt{2\pi})) g(\bm V_{d|a}^{-1}, \bm s -\bm k\sqrt{2\pi}),
    \label{eq:xd-s-joint-distribution_appx}
\end{align}
where we have redefined the notation $\bm V_d, \bm V_{da}, \bm V_{d|a}$ to incorporate finite-squeezing GKP noise.
}

\QZ{
We use a moment-generating function to prove the identities~\eqref{eq:xd-s-joint-distribution-step2_appx} and \eqref{eq:xd-s-joint-distribution-step3_appx}. Let $\bm X$ and $\bm Y$ be the vectors of independent random Gaussian variables, then 
\begin{align}
    &P\left(\bm X = (\bx_d, \bm s - \bm k \sqrt{2\pi})\right)=g\left((\bI_{2N} \oplus {\bm M^\top \bm \Omega}) \bm V_{\bx} (\bI_{2N} \oplus {(\bm M^\top \bm \Omega)} ^{\top}),(\bx_d,\bm s - \bm k \sqrt{2\pi})\right),\\
    &P\left(\bm Y = (\bm{0}_{2N}, \bm \delta )\right)=g\left((\bm{0}_{2N}\oplus \sigma_{\rm {GKP}}^2(\bm M^\top \bm M+\bm I_{2M}) ),(\bm{0}_{2N},\bm \delta)\right).
\end{align}
The corresponding moment generating functions are
\begin{align}
    &M_{\bm X}(\bm t)=\exp{-\frac{1}{2}\bm t^\top (\bI_{2N} \oplus {\bm M^\top \bm \Omega}) \bm V_{\bx} (\bI_{2N} \oplus {(\bm M^\top \bm \Omega)} ^{\top})  \bm t},\\
    &M_{\bm Y}(\bm t)=\exp{-\frac{1}{2}\bm t^\top (\bm{0}_{2N}\oplus \sigma_{\rm {GKP}}^2(\bm M^\top \bm M+\bm I_{2M}) )  \bm t}.
\end{align}
Let $\bm Z=\bm X+\bm Y$. Since $\bm X$ and $\bm Y$ are independent, then
\begin{align}
    M_{\bm Z}(\bm t)&=M_{\bm X}(\bm t)M_{\bm Y}(\bm t)\\
    & = \exp{-\frac{1}{2}\bm t^\top \left[(\bI_{2N} \oplus {\bm M^\top \bm \Omega}) \bm V_{\bx} (\bI_{2N} \oplus {(\bm M^\top \bm \Omega)} ^{\top})+(\bm{0}_{2N}\oplus \sigma_{\rm {GKP}}^2(\bm M^\top \bm M+\bm I_{2M}) )\right]  \bm t},
\end{align}
and we have 
\begin{align}
    \eqref{eq:xd-s-joint-distribution-step2}=P(\bm Z =\bm X+\bm Y) =\eqref{eq:xd-s-joint-distribution-step3_appx}.
\end{align}
}
\end{widetext}
\QZ{
This complete the proof.}

\end{document}